\theoremstyle{plain}
\newtheorem{theorem}{Theorem}
\newtheorem{lemma}[theorem]{Lemma}
\newtheorem{proposition}[theorem]{Proposition}
\theoremstyle{definition}
\newtheorem{definition}[theorem]{Definition}
\newtheorem{remark}[theorem]{Remark}
\newtheorem{example}[theorem]{Example}
\newtheorem{question}[theorem]{Question}
\newtheorem{convention}[theorem]{Convention}
\definecolor{orange}{cmyk}{0,0.47,1,0.13}
\definecolor{jgreen}{rgb}{0.0,0.5,0.0}
\definecolor{chocolate}{rgb}{0.82,0.41,0.12}
\newcommand{\ignore}[1]{}
\renewcommand{\mit}{\mathit}
\newcommand{\mcl}{\mathcal}
\newcommand{\msf}{\mathsf}
\def\moverlay{\mathpalette\mov@rlay}
\def\mov@rlay#1#2{\leavevmode\vtop{%
   \baselineskip\z@skip \lineskiplimit-\maxdimen
   \ialign{\hfil$#1##$\hfil\cr#2\crcr}}}
\newcommand{\id}[1]{#1}
\newcommand{\sub}[2]{{#1}_{#2}}
\newcommand{\binap}[3]{#2\mathbin{#1}#3}
\newcommand{\funap}[2]{#1(#2)}
\newcommand{\indap}[2]{\funap{\sub{#1}{#2}}}
\newcommand{\bfunap}[3]{\funap{#1}{#2,#3}}
\newcommand{\where}{\mathrel{|}}
\newcommand{\sdefdby}{{:}{=}}
\newcommand{\defdby}{\mathrel{\sdefdby}}
\newcommand{\BNFis}{\mathrel{{:}{:}{=}}}
\newcommand{\BNFor}{\mathrel{|}}
\newcommand{\pairlft}{{\langle}}
\newcommand{\pairrgt}{{\rangle}}
\newcommand{\pairsep}{{,\,}}
\newcommand{\pairstr}[1]{\pairlft#1\pairrgt}
\newcommand{\pair}[2]{\pairstr{#1\pairsep#2}}
\newcommand{\triple}[2]{\pair{#1\pairsep#2}}
\newcommand{\tuple}{\pairstr}
\newcommand{\nat}{\mathbb N}
\newcommand{\pnat}{\mathbb{N}_{{>}0}}
\newcommand{\setemp}{{\varnothing}}
\renewcommand{\emptyset}{\setemp}
\newcommand{\sfunin}{{:}}
\newcommand{\funin}{\mathrel{\sfunin}}
\newcommand{\pto}{\rightharpoonup}
\newcommand{\setexp}[1]{\left\{{#1}\right\}}
\newcommand{\modulo}[2]{\left[{#2}\right]_{#1}}
\newcommand{\floor}[1]{\lfloor#1\rfloor}
\newcommand{\ceil}[1]{\lceil#1\rceil}
\newcommand{\slcm}{\mit{lcm}}
\newcommand{\lcm}{\funap{\slcm}}
\newcommand{\lstlength}[1]{|#1|}
\newcommand{\setsize}[1]{|#1|}
\newcommand{\str}[1]{#1^\omega}
\newcommand{\astr}{\sigma}
\newcommand{\bstr}{\tau}
\newcommand{\cstr}{\upsilon}
\newcommand{\iastr}{\sub{\astr}}
\newcommand{\scons}{\strff{:}}
\newcommand{\cons}{\binap{\scons}}
\newcommand{\strcnsd}[1]{\binap{\scons}}
\newcommand{\zeros}{\con{zeros}}
\newcommand{\ones}{\con{ones}}
\newcommand{\alts}{\strcf{alt}}
\newcommand{\strff}{\msf} 
\newcommand{\strcf}{\msf} 
\newcommand{\datc}{\id}
\newcommand{\mybind}[3]{#1#2.\:#3}
\newcommand{\myex}{\mybind{\exists}}
\newcommand{\myall}{\mybind{\forall}}
\newcommand{\mylam}[2]{\lambda#1.#2}
\definecolor{midgray}{rgb}{.128,.128,.128}
\newcommand{\punc}[1]{\:\text{#1}}
\newcommand{\interpret}[1]{\llbracket#1\rrbracket} 
\newcommand{\sinterpret}{\interpret{{\cdot}}}
\newcommand{\shead}{\msf{hd}}
\newcommand{\head}{\funap{\shead}}
\newcommand{\stail}{\msf{tl}}
\newcommand{\tail}{\funap{\stail}}
\newcommand{\szip}{\strff{zip}}
\newcommand{\szipn}{\sub{\szip}}
\newcommand{\zip}{\bfunap{\szip}}
\newcommand{\zipn}{\indap{\szip}}
\newcommand{\seven}{\strff{even}}
\newcommand{\even}{\funap{\seven}}
\newcommand{\sodd}{\strff{odd}}
\newcommand{\odd}{\funap{\sodd}}
\newcommand{\sspj}{\pi}
\newcommand{\spj}[1]{\sspj_{#1}}
\newcommand{\spjx}[2]{\spj{#1,#2}}
\newcommand{\pjx}[2]{\funap{\spjx{#1}{#2}}}
\newcommand{\sfzip}{\mit{zip}}
\newcommand{\sfzipn}{\sub{\sfzip}}
\newcommand{\fzipn}{\indap{\sfzip}}
\newcommand{\aprg}{F}
\newcommand{\fstep}[1]{f_{\hspace{-0.08em}#1}}
\newcommand{\prglist}{\id}
\newcommand{\cpi}[2]{\mathrm{\Pi}^{#1}_{#2}}
\newcommand{\csig}[2]{\mathrm{\Sigma}^{#1}_{#2}}
\newcommand{\sinfred}{{\to^{\omega}}}
\newcommand{\infred}{\mathrel{\sinfred}}
\newcommand{\mred}{\to^{\ast}}
\newcommand{\anum}{p}
\newcommand{\aden}{q}
\newcommand{\ianum}{\sub{\anum}}
\newcommand{\iaden}{\sub{\aden}}
\newcommand{\iafrac}[1]{\frac{\ianum{#1}}{\iaden{#1}}}
\newcommand{\lt}{<}
\newcommand{\gt}{>}
\newcommand{\aoff}{b}
\newcommand{\aspec}{\mcl{S}}
\newcommand{\bspec}{\mcl{S'}}
\newcommand{\con}{\msf}
\newcommand{\zspec}{zip-speci\-fi\-ca\-tion} 
\newcommand{\munterm}[1]{\mbox{$\mu_n$-term}}
\newcommand{\Znterm}[1]{$\szip$-term}
\newcommand{\Znterms}[1]{$\szip$-terms}
\newcommand{\nf}[1]{{#1}{\downarrow}}
\newcommand{\derivatives}[1]{\funap{\partial_{#1}}}
\newcommand{\nb}{\nobreakdash}
\newcommand{\vars}{\mcl{X}}
\newcommand{\undefd}[1]{#1{\uparrow}}
\newcommand{\defd}[1]{#1{\downarrow}}
\newcommand{\nth}{\funap}
\newcommand{\xcons}{\mathbin{\scons}}
\newcommand{\aalph}{\Delta}
\newcommand{\fin}[1]{\sub{\nat}{<#1}}
\newcommand{\aaut}{A}
\newcommand{\baut}{B}
\newcommand{\caut}{C}
\newcommand{\safunct}{F}
\newcommand{\sbfunct}{G}
\newcommand{\afunct}{\funap{\safunct}}
\newcommand{\bfunct}{\funap{\sbfunct}}
\newcommand{\der}{\derivatives}
\newcommand{\base}{\mcl{B}}
\newcommand{\emptyword}{\varepsilon}
\newcommand{\op}{\gamma}
\newcommand{\gen}[2]{\zeta(#1,#2)}
\newcommand{\genz}[1]{\zeta(#1)}
\newcommand{\obase}[1]{\mcl{O}_{#1}}
\newcommand{\nbase}[1]{\mcl{N}_{#1}}
\newcommand{\dfaoograph}{\ograph}
\newcommand{\ographdfao}[1]{A(#1)}
\newcommand{\aograph}{\mcl{G}}
\newcommand{\ograph}{\funap{\mcl{G}}}
\newcommand{\utail}{\funap{\underline{\stail}}}
\newcommand{\trsobs}{\mcl{R}_k(\aspec)}
\newcommand{\zterms}{\mcl{Z}(\aalph,\mcl{X})}
\newcommand{\zkterms}{\mcl{Z}_k(\aalph,\mcl{X})}
\newcommand{\abs}{P}
\tikzset{state/.style={draw=black,ellipse,inner sep=.6mm,outer sep=.5mm}}
\tikzset{default/.style={->,>=stealth',shorten >=1pt,shorten <= 1pt,auto,node distance=2cm,semithick}}
\tikzset{bl/.style={below left of=#1,yshift=3mm}}
\tikzset{br/.style={below right of=#1,yshift=3mm}}
\tikzset{lbr/.style={below right,inner sep=0.5mm}}
\tikzset{lar/.style={above right,inner sep=0.5mm}}
\tikzset{lbl/.style={below left,inner sep=0.5mm}}
\tikzset{lal/.style={above left,inner sep=0.5mm}}
\tikzset{tloop/.style={out=60,in=120,looseness=5}}
\tikzset{bloop/.style={out=-60,in=-120,looseness=5}}
\tikzset{lloop/.style={out=210,in=150,looseness=5}}
\tikzset{rloop/.style={out=-30,in=30,looseness=5}}
\tikzset{lhead/.style={at=(#1.west),anchor=east,xshift=0cm,inner sep=.5mm}}
\tikzset{rhead/.style={at=(#1.east),anchor=west,xshift=0cm,inner sep=.5mm}}
\tikzset{bhead/.style={at=(#1.south),anchor=north,xshift=0cm,inner sep=.5mm}}
\newcommand{\quadeq}{\quad = \quad}
\newcommand{\set}[1]{\{#1\}}
\newcommand{\nott}{\neg}
\newcommand{\andd}{\wedge}
\newcommand{\iif}{\rightarrow}
\newcommand{\iiff}{\leftrightarrow}
\newcommand{\GG}{\mathcal{G}}
\newcommand{\HH}{\mathcal{H}}
\newcommand{\semantics}[1]{[\![#1]\!]}
\renewcommand{\phi}{\varphi}
\newcommand{\dmd}[1]{\langle#1\rangle}
\newcommand{\bx}[1]{[#1]}
\newcommand{\boxdmd}[1]{\{#1\}}
\newcommand{\kset}{\boldsymbol{k}}
\newcommand{\rv}{\con}
\newcommand{\rootsc}{\Zroot}
\newcommand{\Zvar}{\sub{\rv{X}}}
\newcommand{\Zroot}{\Zvar{0}}
\title{Automatic Sequences and Zip-Specifications}
\author{%
  Clemens Grabmayer
  \and J\"{o}rg Endrullis
  \and Dimitri Hendriks
  \and Jan~Willem~Klop
  \and Lawrence S. Moss
}
\begin{document}

\maketitle

\begin{abstract}
  We consider infinite sequences of symbols, also known as streams, 
  and the decidability question for equality of streams defined in a restricted format. 
  This restricted format consists of prefixing a symbol at the head of a stream, 
  of the stream function `zip', and recursion variables. 
  Here `zip' interleaves the elements of two streams in alternating order, 
  starting with the first stream; e.g., for the streams defined by
  \(
    \{\, 
      \con{zeros} = \cons{0}{\con{zeros}} ,\,
      \con{ones}  = \cons{1}{\con{ones}} ,\,
      \con{alt}  = \cons{0}{\cons{1}{\con{alt}}}
    \,\}
  \)
  we have $\zip{\con{zeros}}{\con{ones}} = \con{alt}$. 
  The celebrated Thue--Morse sequence is obtained by the succinct `zip-specification'
  \begin{align*}
    \con{M} &= \cons{\datc{0}}{\con{X}} 
    & \con{X} &= \cons{\datc{1}}{\zip{\con{X}}{\con{Y}}} 
    & \con{Y} &= \cons{\datc{0}}{\zip{\con{Y}}{\con{X}}} 
  \end{align*}
  
  Our analysis of such systems employs both term rewriting and coalgebraic techniques. 
  We establish decidability for these zip\nb-specifications, 
  employing bisimilarity of observation graphs based on a suitably chosen cobasis. 
  The importance of zip\nb-specifications resides in their intimate connection with automatic sequences. 
  The analysis leading to the decidability proof of the `infinite word problem' for zip-specifications, 
  yields a new and simple characterization of automatic sequences. 
  Thus we obtain for the binary $\szip$ that a stream is 2\nb-automatic 
  iff its observation graph using the cobasis $\tuple{\shead,\seven, \sodd}$ is finite. 
  Here $\sodd$ and $\seven$ have the usual recursive definition: 
  $\even{a : s} = a : \odd{s}$, and $\odd{a : s} = \even{s}$.
  The generalization to zip-$k$ specifications and their relation to $k$\nb-automaticity is straightforward. 
  In fact, zip\nb-specifications can be perceived as a term rewriting syntax for automatic sequences. 
  Our study of zip\nb-specifications is placed in an even wider perspective 
  by employing the observation graphs in a dynamic logic setting, 
  leading to an alternative characterization of automatic sequences.
  
  We further obtain a natural extension of the class of automatic sequences, 
  obtained by `zip-mix' specifications that use zips of different arities in one specification (recursion system). 
  The corresponding notion of automaton employs a state-dependent input-alphabet, 
  with a number representation $(n)_A = d_m \dots d_0$ where the base of digit $d_i$ 
  is determined by the automaton $A$ on input $d_{i-1}\dots d_0$.
  
  We also show that equivalence is undecidable for a simple extension of the zip-mix format with projections 
  like $\seven$ and $\sodd$.
  However, it remains open whether zip-mix specifications have a decidable equivalence problem.
\end{abstract}

\section{Introduction}\label{sec:intro}

\noindent 
Infinite sequences of symbols, also called `streams', are a playground
of common interest for logic, computer science (functional programming, 
formal languages, combinatorics on infinite words), mathematics (numerations
and number theory, fractals) and physics (signal processing). 
For logic and theoretical computer science this
interest focuses in particular on unique solvability of systems of recursion 
equations defining streams, on expressivity (what scope does a definition 
or specification format have), and productivity (does a stream specification indeed
unfold to its intended infinite result without stagnation). In addition, there is 
the `infinitary word problem': when do two stream specifications over a first-order 
signature define the same stream? And, is that question decidable?
If not, what is the logical complexity?


Against this 
background, we can now situate 
our present paper. 
In the landscape of streams there are some well-known families,
with automatic sequences~\cite{allo:shal:2003} as a prominent family,
including members such as the Thue--Morse sequence~\cite{allo:shal:1999}. 
Such sequences are defined in first-order signature that includes 
some basic stream functions such as $\shead$ (head), $\stail$ (tail), 
`$:$' (prefixing a symbol to an infinite stream),
$\seven$, $\sodd$; all these are familiar from any functional programming language.

One stream function in particular is frequently used in stream specifications.
This is the $\szip$ function, that `zips' the elements of two streams in alternating order, 
starting with the first stream.
Now there is an elegant definition of the Thue--Morse sequence 
$\con{M}$ using only this function $\szip$, 
next to prefixing an element, and of course recursion variables:
\begin{align}
  \con{M} &= \cons{\datc{0}}{\con{X}} 
  & \con{X} &= \cons{\datc{1}}{\zip{\con{X}}{\con{Y}}} 
  & \con{Y} &= \cons{\datc{0}}{\zip{\con{Y}}{\con{X}}}
  \label{eq:spec:morse}
\end{align}

For general term rewrite systems, stream equality is easily seen to be undecidable~\cite{rosu:2006}, 
just as most interesting properties of streams. 
But by adopting some restrictions in the definitional format,
decidability may hold. 

Thus we consider the problem whether definitions like the one of $\con{M}$, 
using only zip next to prefixing and recursion, are still within the realm of decidability.
Answering this question positively turned out to be rewarding.  
In addition to solving the technical problem, the analysis leading to the 
solution had a useful surprise in petto: 
it entailed a new and simple characterization of the important 
notion of $k$\nb-automaticity of streams. 
(The same `aha-insight' was independently
obtained by Kupke and Rutten, preliminary reported in~\cite{kupk:rutt:2011}.)

The remainder of the paper is devoted to an elaboration of several aspects 
concerning zip\nb-specifications and automaticity.
First, 
we treat a representation of automatic sequences
in a framework of propositional dynamic logic, 
employing cobases and the ensuing observation graphs
(used before for the decidability of equivalence) as the underlying semantics 
for a dynamic logic formula characterizing the automaticity of a stream.
Second, 
we are led to a natural generalization of automatic sequences,
corresponding to mixed zip\nb-specifications that contain zip operators of different arities. 
The corresponding type of automaton
employs a state-dependent alphabet. 
Third, 
we show that stream equality for a slight extension of zip-specifications is $\cpi{0}{1}$;
the latter via a reduction from the halting problem of Fractran programs~\cite{conw:1987}.

%

Let us now describe somewhat informally the key method 
that we employ to solve the equivalence problem for zip-specifications.
To that end, consider the specification~\eqref{eq:spec:morse} above with root variable $\con{M}$.
This specification is productive~%
\cite{{sijt:1989},{endr:grab:hend:2008},{endr:grab:hend:isih:klop:2010}} 
and defines the Thue--Morse sequence:
\begin{align*}
  \con{M} \infred
  \cons{0}{\cons{1}{\cons{1}{\cons{0}{
    \cons{1}{\cons{0}{\cons{0}{\cons{1}{
    \cons{1}{\cons{0}{\cons{0}{\cons{1}{
    \cons{0}{\cons{1}{\cons{1}{\cons{0}{\ldots}}}}}}}}}}}}}}}}{,}
\end{align*}
that is, by repeatedly applying rewrite rules that arise by orienting
the equations for $\con{M}$, $\con{X}$ and $\con{Y}$ from left to right,
$\con{M}$ rewrites in the limit to the Thue--Morse sequence~\cite{allo:shal:1999}.

We will construct so-called `observation graphs'
based on the stream cobasis $\triple{\shead}{\seven}{\sodd}$
where all nodes have a double label: 
inside, a term corresponding to a stream (such as $\con{M}$ and $0 : \con{X}$ in Figure~\ref{fig:ograph:morse})
and outside, the head of that stream.
The nodes have outgoing edges to their $\seven$- and $\sodd$-derivatives.
An example is shown in Figure~\ref{fig:ograph:morse}.

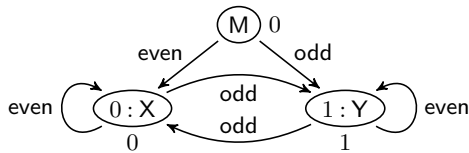
\begin{figure}
  \begin{center}
  \begin{small}
    \begin{tikzpicture}[default]
    \node [state] (q0) {$\con{M}$}; 
    \node (h0) [rhead=q0]{$0$};
    \node [state] (q1) [bl=q0] {$0 \xcons \con{X}$}; 
    \node (h1) [bhead=q1]{$0$};
    \node [state] (q2) [br=q0] {$1 \xcons \con{Y}$}; 
    \node (h2) [bhead=q2] {$1$};
    \path (q0) 
               edge node [lal] {$\seven$} (q1)
               edge node [lar] {$\sodd$} (q2)
          (q1) 
               edge [lloop] node {$\seven$} (q1)
               edge [bend left=25] node [below] {$\sodd$} (q2)
          (q2) 
               edge [rloop] node [right] {$\seven$} (q2)
               edge [bend left=25] node [above] {$\sodd$} (q1);
  \end{tikzpicture}
\vspace{-2ex}
  \end{small}
  \end{center}
  \caption{Observation graph for the specification~\eqref{eq:spec:morse} \\ of the Thue--Morse sequence~$\con{M}$.}
  \label{fig:ograph:morse}
  \vspace{-4ex}
\end{figure}

So, the problem of equivalence of zip-specifications
reduces to the problem of bisimilarity of their observation graphs,
which we prove to be finite.
This does not hold for observation graphs of zip-specifications 
with respect to the cobasis $\pair{\shead}{\stail}$: for this cobasis,
the above specification would yield an infinite observation graph.
(The same would hold for any stream which is not eventually periodic.)

The observation graph in Figure~\ref{fig:ograph:morse} evokes the `aha-insight' mentioned above:
it can be recognized as a DFAO%
  \footnote{%
    The bisimulation collapse of the graph in Fig.~\ref{fig:ograph:morse} 
    identifies the states labeled $\con{M}$ and $0 : \con{X}$, 
    giving rise to the familiar (minimal) DFAO for $\con{M}$.%
  }
(deterministic finite automaton with output)
that witnesses the fact that $\con{M}$ is a $2$\nb-automatic sequence~\cite{allo:shal:2003}.

We will exhibit the close connection between zip\nb-specifications and 
automatic sequences, residing in the coincidence of DFAOs and observation graphs.

\section{Zip-Specifications}\label{sec:k-zip}
For term rewriting notions see further~\cite{tere:2003}.
For $k \in \nat$ we define $\fin{k} = \{0,1,\ldots,k-1\}$. 
Let $\aalph$ be a finite alphabet of at least two symbols, 
and $\mcl{X}$ a finite set of recursion variables.

\begin{definition}\label{def:stream}
  The set $\str{\aalph}$ of \emph{streams over $\aalph$} is defined by
  $\str{\aalph} = \{\astr \where \astr \funin \nat \to \aalph \}$.
\end{definition}

We write $a \xcons \astr$ for the stream $\bstr$ defined by 
$\bstr(0) = a$ and $\bstr(n+1) = \astr(n)$ for all $n\in\nat$.
We define
$\shead : \str{\aalph} \to \aalph$ and
$\stail : \str{\aalph} \to \str{\aalph}$
by
$\head{x \xcons \astr} = x$
and
$\tail{x \xcons \astr} = \astr$.

\begin{convention}
  We usually mix notations for syntax (term rewriting) and semantics (`real' functions).
  Whenever confusion is possible, 
  we use fonts $\mit{fun}$, and $\strff{fun}$ to
  distinguish between functions, and term rewrite symbols, respectively.
\end{convention}

\begin{definition}\label{def:zip}
  For $k \in \pnat$,
  the function $\sfzipn{k} : (\str{\aalph})^k \to \str{\aalph}$
  is defined by the following rewrite rule:
  \begin{align*}
    \zipn{k}{x \xcons \iastr{0},\iastr{1},\ldots,\iastr{k-1}}
    & \to x \xcons \zipn{k}{\sigma_1,\ldots,\sigma_{k-1},\sigma_0}
  \end{align*}
\end{definition}

\noindent
Thus $\sfzipn{k}$ interleaves its argument streams:
\begin{align*}
  \nth{\fzipn{k}{\iastr{0},\ldots,\iastr{k-1}}}{kn+i} & = \nth{\iastr{i}}{n} && (0 \le i \lt k)
\end{align*}

\begin{definition}\label{def:Zterms}
  The set $\zterms$ of \emph{zip\nb-terms} over $\pair{\aalph}{\mcl{X}}$
  is defined by the grammar:
  \begin{align*}
    Z & \BNFis \rv{X} \BNFor a \xcons Z \BNFor \zipn{k}{\underbrace{Z,\ldots,Z}_{\text{$k$ times}}} 
    && (\rv{X} \in \mcl{X},a \in \aalph, k \in \nat)
  \end{align*}
  A \emph{zip-specification~$\aspec$} over $\pair{\aalph}{\mcl{X}}$
  consists of a distinguished variable $\Zroot \in \mcl{X}$ 
  called the \emph{root of $\aspec$},
  and for every $\rv{X} \in \mcl{X}$ 
  a pair $\pair{\rv{X}}{t}$ with $t \in \zterms$ a zip-term.
  We treat these pairs are term rewrite rules,
  and write them as equations $\rv{X} = t$.
%
%
\end{definition}

\begin{definition}
  For $k \in \nat$, the set $\zkterms$ of \emph{zip\nb-$k$ terms}
  is the restriction of $\zterms$ to terms 
  where for every occurrence of a symbol $\szip_{\ell}$ ($\ell \in \nat$)
  it holds that $\ell = k$.
  
  A \emph{zip-$k$ specification} is a zip\nb-specification
  such that for all equations $\rv{X} = t$ it holds that $t \in \zkterms$.
\end{definition}

We always assume for zip-specifications $\aspec$
that every recursion variable is \emph{reachable from the root~$\Zroot$}.

\subsection{Unique Solvability, Productivity and Leftmost Cycles}

\begin{definition}\label{def:model}
  A \emph{valuation} is a mapping $\alpha : \mcl{X} \to \str{\aalph}$. 
  Such a valuation $\alpha$
  extends to $\sinterpret{}_\alpha : \zterms \to \str{\aalph}$ as follows:
  \begin{align*}
    \interpret{\rv{X}}_\alpha & = \alpha(\rv{X}) \\
    \interpret{a:t}_\alpha & = a:\interpret{t}_\alpha \\
    \interpret{\zipn{k}{t_1,\ldots,t_k}}_\alpha & = \fzipn{k}{\interpret{t_1}_\alpha,\ldots,\interpret{t_k}_\alpha} &
  \end{align*}
  A \emph{solution} for a zip-specification $\aspec$ is
  a valuation $\alpha : \mcl{X} \to \str{\aalph}$\!, denoted $\alpha \models \aspec$,
  such that $\interpret{\rv{X}}_\alpha = \interpret{t}_\alpha$
  for all $\rv{X} = t \in \aspec$.
  
  A zip-specification $\aspec$ is \emph{uniquely solvable}
  if there is a unique solution~$\alpha$ for $\aspec$;
  then we let $\sinterpret{}^\aspec = \alpha$ denote this solution.
\end{definition}

\begin{definition}
  Let $\aspec$ and $\bspec$ be zip-specifications with roots~$\rv{X}_0$ and $\rv{X}_0'$, respectively.
  Then $\aspec$ is called \emph{equivalent} to $\bspec$ if
  they have the same set of solutions for their roots:
  \[\{\,\interpret{\Zvar{0}}_\alpha \where \alpha \models \aspec \,\} \;=\; 
    \{\, \interpret{\rv{X}'_0}_{\alpha'} \where \alpha' \models \bspec \,\}\]
\end{definition}

\begin{definition}
  A zip-specification $\aspec$ with root $\Zroot$ is \emph{productive}
  if there exists a reduction of the form 
  $\Zroot \mred a_1 : \ldots : a_n : t$
  for all $n \in \nat$.
  If a zip-specification $\aspec$ is productive, then 
  $\aspec$ is said to \emph{define the stream $\interpret{\Zroot}^\aspec$}
  where $\Zroot$ is the root of $\aspec$.
\end{definition}

Note that if a specification is productive,
then by confluence of orthogonal term rewrite systems~\cite{tere:2003},
there exists a rewrite sequence of length $\omega$
that converges towards an infinite stream term $a_1 : a_2 : a_3 : \ldots$ in the limit.


While productivity is undecidable~\cite{endr:grab:hend:cade:2009,simo:2009} 
for term rewrite systems in general, zip-specifications fall into the class of 
`pure stream specifications'~\cite{endr:grab:hend:2008,endr:grab:hend:isih:klop:2010}
for which (automated) decision procedures exist.
However, the latter would be taking a sledgehammer to crack a nut.
For zip-specifications, productivity boils down to a simple syntactic criterion.

\begin{definition}
  Let $\aspec$ be a zip-specification.
  A \emph{step in $\aspec$} is pair of terms $\pair{s}{t}$, denoted by $s \leadsto t$,
  such that 
  (a) $s \to t \in \aspec$,
  (b) $s = a:t$, or
  (c) $s = \zipn{k}{\ldots,t,\ldots}$.
  A \emph{guard} is a step of form (b).
  A \emph{left-step $s \leadsto_\ell t$ in $\aspec$} 
  is a step $s \leadsto t$ of the form (a), (b) or (c') $s = \zipn{k}{t,\ldots}$.

  A \emph{cycle in $\aspec$} is a sequence $t_1,t_2,\ldots,t_n$
  such that $t_1 = t_n \in \mcl{X}$ and
  $t_i \leadsto t_{i+1}$ for $1 \le i < n$.
  A \emph{leftmost cycle in $\aspec$} is a cycle $t_1,t_2,\ldots,t_n$
  such that $t_i \leadsto_\ell t_{i+1}$ for $1 \le i < n$.
\end{definition}

\begin{example}\label{ex:unprod}
  Consider the following specification\\[-.5ex]
  \begin{minipage}{.73\columnwidth}
  \begin{align*}
    \rv{X} &= \zip{1:\rv{X}}{\rv{Y}} \\
    \rv{Y} &= \zip{\rv{Z}}{\rv{X}} \\
    \rv{Z} &= \zip{\rv{Y}}{0:\rv{Z}}
  \end{align*}
  visualized as the cyclic term graph on the right.
  The leftmost cycle
  $\rv{Y} \leadsto_\ell \zip{\rv{Z}}{\rv{X}} \leadsto_\ell \rv{Z} \leadsto_\ell \zip{\rv{Y}}{0\,{:}\,\rv{Z}} \leadsto_\ell \rv{Y}$
  is not guarded.
  \end{minipage}
  \begin{minipage}{.25\columnwidth}
  \newcommand{\zipfan}[2]{
    \coordinate (#1 out) at (#2);
    \coordinate (#1 south west) at ($(#2) + (-6mm,-6mm)$);
    \coordinate (#1 south east) at ($(#2) + (6mm,-6mm)$);
    \draw (#1 out) -- (#1 south west) -- (#1 south east) -- cycle;
    \node at (#1 out) [yshift=-2.75mm] {$\szip$};
    \coordinate (#1 inL) at ($(#1 south west)!.25!(#1 south east)$);
    \coordinate (#1 inR) at ($(#1 south west)!.75!(#1 south east)$);
  }
  \vspace{.25cm}
  \hspace*{-11mm}
  \begin{tikzpicture}[thick,>=stealth,inner sep=1pt,yscale=.7]
    \zipfan{X}{0mm,0mm} \node at ($(X out) + (0mm,3.5mm)$) {$\rv{X}$}; 
    \zipfan{Y}{$(X inR) + (7mm,-7mm)$} \node at ($(Y out) + (2mm,2mm)$) {$\rv{Y}$};
    \zipfan{Z}{$(Y inL) + (-2mm,-7mm)$} \node at ($(Z out) + (-2mm,2mm)$) {$\rv{Z}$};
    \draw [thin] (Y out) to[out=90,in=-91] (X inR);
    \begin{scope}[->,thin,looseness=2,rounded corners=2mm,shorten >= 1mm]
    \draw (X inL) -- +(0mm,-8mm) node [thick,midway,circle,draw=black,fill=white] {1} to[out=180,in=160] (X out);
    \draw [looseness=1] (Y inR) -- +(0mm,-3mm) to[out=0,in=20] (X out);
    \draw (Z inR) -- +(0mm,-8mm) node [thick,midway,circle,draw=black,fill=white] {0} to[out=0,in=20] (Z out);    
    \end{scope}
    \draw [ultra thick] (Z out) to[out=90,in=-91] (Y inL);
    \draw [ultra thick,->,looseness=1,rounded corners=2mm,shorten >= 1mm] (Z inL) -- +(0mm,-3mm) to[out=-180,in=160] (Y out);
  \end{tikzpicture}
  \end{minipage}
\end{example}

For term rewriting systems in general, productivity implies~the uniqueness of solutions,
but unique solvability is not sufficient for productivity.
For zip-specifications it turns out that both concepts coincide.
Here we need that $\aalph$ is not a singleton ---
otherwise every specification has a unique solution.

\begin{theorem}\label{thm:unique:productive:cycle}
  For zip-specifications $\aspec$ these are equivalent:
  \begin{enumerate}
    \item $\aspec$ is uniquely solvable.
    \item $\aspec$ is productive.
    \item $\aspec$ has a guard on every leftmost cycle.
  \end{enumerate}
\end{theorem}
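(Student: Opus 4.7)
The plan is to establish the cycle of implications $(iii) \Rightarrow (ii) \Rightarrow (i) \Rightarrow (iii)$.

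For $(iii) \Rightarrow (ii)$, I will use a pigeonhole argument on leftmost reductions. Starting from an arbitrary term $t$, follow steps $\leadsto_\ell$. If no guard step~(b) is ever taken, the reduction consists only of $(a)$-steps (applying an equation at the root, which rewrites a variable) and $(c')$-steps (descending into the first argument of a zip). Since $(c')$-steps strictly shrink the current term while each $(a)$-step fires on a variable and replaces it by a fixed right-hand side, any infinite such reduction must contain infinitely many $(a)$-steps and therefore visit variables infinitely often. Because $\mcl{X}$ is finite, some variable $X$ recurs, and the subreduction between two consecutive occurrences is a leftmost cycle from $X$ back to $X$ that uses only $(a)$- and $(c')$-steps, i.e., an unguarded leftmost cycle. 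This contradicts $(iii)$, so the leftmost reduction from $t$ must reach a guard in finitely many steps and produce a symbol. Iterating on the resulting tails yields reductions of $\Zroot$ to prefixes of arbitrary length, establishing productivity.

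For $(ii) \Rightarrow (i)$, I appeal to confluence. Orthogonality of the rewrite system obtained by orienting the equations of $\aspec$ implies confluence~\cite{tere:2003}, and combined with productivity this forces $\alpha(\Zroot) = a_1 a_2 a_3 \ldots$, the unique stream produced by the leftmost strategy, for every solution $\alpha$. For any other variable $X$, reachability from $\Zroot$ together with the zip rotation rule $\zipn{k}{x : \sigma_0, \sigma_1, \ldots, \sigma_{k-1}} \to x : \zipn{k}{\sigma_1, \ldots, \sigma_{k-1}, \sigma_0}$ ensures that after finitely many productive steps $X$ occupies the leftmost argument of some zip in a reduct of $\Zroot$; productivity of $\Zroot$ then forces $X$ itself to be productive, so $\alpha(X)$ is uniquely determined as well.

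For $(i) \Rightarrow (iii)$, I argue by contraposition. Given an unguarded leftmost cycle $X_1 \leadsto_\ell t_2 \leadsto_\ell \ldots \leadsto_\ell X_1$, I construct two distinct solutions. Because the cycle carries no guard, the induced chain of identities on the heads of the interpreted terms collapses to the tautology $\head{\interpret{X_1}_\alpha} = \head{\interpret{X_1}_\alpha}$ and imposes no constraint on this value. Using $\card{\aalph} \geq 2$ I define two valuations $\alpha_0, \alpha_1$ in which this head differs. The extension to a full solution proceeds coinductively: for each variable $X$ and each index $k \in \nat$, trace leftmost steps from the appropriate tail; whenever a guard step is reached, its symbol fixes the output at position $k$, while whenever the trace enters an unguarded cycle a symbol is chosen freely, consistently across all variables on that cycle. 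Two different such globally consistent choices yield two valuations that both satisfy the equations of $\aspec$ but differ at position $0$ of the stream assigned to $X_1$, negating $(i)$.

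The main obstacle is this last step: rigorously verifying that the coinductive freedom construction produces two genuine solutions. The delicate point is to show that free choices made at an unguarded cycle can be propagated coherently through every tail of every variable so that each equation of $\aspec$ remains satisfied. I expect to handle this by reading the observation-graph perspective of Section~\ref{sec:intro} in reverse: unguarded leftmost cycles correspond exactly to states whose output is not pinned down by the equations, so any consistent filling of those outputs determines a valid solution, and distinct fillings yield distinct solutions.
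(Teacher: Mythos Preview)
Your arguments for $(iii)\Rightarrow(ii)$ and $(ii)\Rightarrow(i)$ are correct and more direct than the paper's unified treatment. For $(i)\Rightarrow(iii)$, however, the construction is underspecified in a way that matters. The phrase ``trace leftmost steps from the appropriate tail'' is misleading: the value $\alpha(X)(k)$ for $k>0$ is determined by a descent that is \emph{not} leftmost (e.g.\ if the equation for $X$ is $X = \zipn{2}{Y,Z}$ then $\alpha(X)(1)=\alpha(Z)(0)$). What is true, and what would rescue your argument, is that the position index strictly decreases along any such descent until it reaches~$0$, after which the descent is leftmost and either meets a guard or falls into an unguarded leftmost cycle; one free head-value per cycle then suffices. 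But verifying that the resulting valuation satisfies every equation of $\aspec$ still requires a careful check you have not supplied, and the fix you anticipate (``reading the observation-graph perspective in reverse'') is not available, since observation graphs in Section~\ref{sec:k-zip:graphs} are only constructed for \emph{productive} specifications.

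The paper sidesteps this entirely via Lemma~\ref{lem:solve}. For each unguarded leftmost cycle one picks a variable $\rv{Y}_i$ and replaces its equation $\rv{Y}_i = t_i$ by $\rv{Y}_i = a_i : \tail{t_i}$ for a freely chosen $a_i\in\aalph$. The ``evolving'' transformation of Definition~\ref{def:evolve} then eliminates the inserted~$\stail$: it is eventually consumed by a descendant of the very element $a_i$ just inserted (this is the point of Example~\ref{ex:evolve}). The result is a genuine productive zip-specification $\aspec_{\vec a}$, whose unique solution is automatically a solution of $\aspec$; varying $\vec a$ over $\aalph^m$ gives exactly $|\aalph|^m$ distinct solutions. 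This single lemma yields $\neg(iii)\Rightarrow\neg(i)$ (when $m\ge 1$) and $(iii)\Rightarrow(ii)$ (when $m=0$) in one stroke, and moreover gives the exact count of solutions, which your coinductive approach would not. The trade-off is that your route, once completed, is self-contained and avoids introducing the auxiliary symbol $\stail$ and the evolving machinery.
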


\subsection{Evolving and Solving Zip-Specifications}

The key to the proof of Theorem~\ref{thm:unique:productive:cycle}
consists of a transformation of zip-specific{-}ations 
by (i) simple equational logic steps, and (ii) internal rewrite steps.
\begin{definition}\label{def:evolve}
  For zip\nb-specifications $\aspec,\bspec$
  we say \emph{$\aspec$ evolves to $\bspec$}, 
  denoted by $\aspec \circlearrowright \bspec$,
  if one of the conditions holds:
  \begin{enumerate}
    \item 
      $\aspec$ contains an equation $\rv{X} = a:t$ with $\rv{X} \ne \Zroot$ and
      $\bspec$ is obtained from $\aspec$ 
      by: let $\rv{X}'$ be fresh and
      \begin{enumerate}
        \item exchange the equation $\rv{X} = a:t$ for $\rv{X}' = t$, then
        \item replace all $\rv{X}$ in all right-hand sides by $a:\rv{X}'$, and
        \item finally rename $\rv{X}'$ to $\rv{X}$ ($\rv{X}$ is no longer used).
      \end{enumerate}
    \item
      $\aspec$ contains an equation $\rv{X} = t$
      such that $t$ rewrites to $t'$ via a zip-rule (Definition~\ref{def:zip}),
      and $\bspec$ is obtained from $\aspec$ by 
      replacing the equation $\rv{X} = t$ with $\rv{X} = t'$.
  \end{enumerate}
\end{definition}

The condition $\rv{X} \ne \Zroot$ in clause~(i) 
guarantees that the meaning (its solution) is preserved under evolving.
It prevents transforming a specification like $\Zroot = 0 : 1 : \Zroot$ into $\Zroot = 1 : 0 : \Zroot$
which clearly has a different solution.

\begin{lemma}
  Let $\aspec \circlearrowright \bspec$. 
  Then for every $\alpha : \mcl{X} \to \str{\aalph}$ it holds that 
  $\alpha$ is a solution of $\aspec$ if and only if $\alpha$ is a solution of $\bspec$.
  Moreover, if $\aspec$ is productive then so is $\bspec$.
\end{lemma}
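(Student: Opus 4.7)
The plan is to split according to the two clauses of Definition~\ref{def:evolve}, in each case exhibiting a bijection between valuation-solutions and transferring productivity by a direct simulation of rewrite steps.

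\emph{Clause (ii)} is the immediate case. The zip-rule of Definition~\ref{def:zip} is sound for the stream semantics: both sides of $\zipn{k}{x \xcons \iastr{0}, \iastr{1}, \ldots, \iastr{k-1}} \to x \xcons \zipn{k}{\iastr{1}, \ldots, \iastr{k-1}, \iastr{0}}$ evaluate to the same stream under every valuation, directly from the defining identity for $\sfzipn{k}$. A routine induction on the context of the rewrite step extends this to $\interpret{t}_\alpha = \interpret{t'}_\alpha$ whenever $t$ zip-reduces to $t'$. Hence the equations $\rv{X} = t$ and $\rv{X} = t'$ have identical valuation-solutions, the other equations are untouched, and $\alpha \models \aspec$ iff $\alpha \models \bspec$. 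Productivity transfers too: the new rule $\rv{X} \to t'$ in $\bspec$ is the composition of the old rule $\rv{X} \to t$ with the zip-step $t \to t'$, so reductions in $\aspec$ translate to reductions in $\bspec$ by collapsing (or, in the reverse direction, expanding) these pairs of steps.

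\emph{Clause (i)} is the substantive case. I would read the three steps (a)--(c) as follows: step (a) introduces a fresh variable $\rv{X}'$ intended to denote the tail $\interpret{t}_\alpha$ of (the stream denoted by) $\rv{X}$; step (b) rewrites every reference to $\rv{X}$ through the identity $\rv{X} = a \xcons \rv{X}'$; step (c) is a cosmetic renaming. The core observation is a substitution lemma,
\[
\interpret{s[\rv{X} := a \xcons \rv{X}']}_{\alpha'} \;=\; \interpret{s}_{\alpha},
\]
valid for every zip-term $s$ whenever $\alpha'$ agrees with $\alpha$ outside $\{\rv{X}, \rv{X}'\}$ and satisfies $\alpha(\rv{X}) = a \xcons \alpha'(\rv{X}')$; this is a short induction on $s$. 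Given $\alpha \models \aspec$, the equation $\rv{X} = a \xcons t$ forces $\alpha(\rv{X}) = a \xcons \interpret{t}_\alpha$, so the valuation $\alpha'$ obtained by setting $\alpha'(\rv{X}') := \interpret{t}_\alpha$ (and keeping $\alpha$ elsewhere) satisfies every equation of $\bspec$ by the substitution lemma; conversely, $\alpha$ is recovered from $\alpha'$ by $\alpha(\rv{X}) := a \xcons \alpha'(\rv{X}')$. The restriction $\rv{X} \ne \Zroot$ ensures that this reshuffling preserves the root value. Productivity is preserved via a translation $\phi$ acting as identity on every variable other than $\rv{X}$, with $\phi(\rv{X}) = a \xcons \rv{X}$, and as a homomorphism on the constructors $\scons$ and $\szip_k$: one checks by case analysis on the applied rule that $s \to_{\aspec} s'$ implies $\phi(s) \to_{\bspec} \phi(s')$, and since $\phi(\Zroot) = \Zroot$ and $\phi$ commutes with prefixing, any finite prefix reachable from $\Zroot$ in $\aspec$ is reachable from $\Zroot$ in $\bspec$.

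\emph{Main obstacle.} The book-keeping in clause~(i) is where care is required: after the renaming in step (c), the two specifications share a variable set, yet a valuation's $\rv{X}$-coordinate denotes streams that differ by a single prefixed symbol in the two systems (under the bijection above). I would make the involution $\alpha \leftrightarrow \alpha'$ explicit and then verify each equation of $\bspec$ in turn using the substitution lemma; once this correspondence is set up, both the equivalence of solutions and the transfer of productivity reduce to local equational checks.
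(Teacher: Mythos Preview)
The paper states this lemma without proof, so there is nothing to compare against; your argument has to stand on its own, and it largely does. The treatment of clause~(ii) is unobjectionable: soundness of the zip-rule for the stream semantics gives $\interpret{t}_\alpha = \interpret{t'}_\alpha$, and the productivity transfer by composing/decomposing the pair ``unfold~$\rv{X}$; zip-step'' is the natural argument.

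For clause~(i) you have put your finger on a genuine imprecision in the lemma as stated. After steps (a)--(c) the symbol~$\rv{X}$ in $\bspec$ names what was the \emph{tail} of~$\rv{X}$ in $\aspec$; concretely, from $\aspec = \{\,\rv{X} = a{:}\rv{Y},\;\rv{Y} = b{:}\rv{X}\,\}$ with root~$\rv{Y}$ one evolves to $\bspec = \{\,\rv{X} = \rv{Y},\;\rv{Y} = b{:}a{:}\rv{X}\,\}$, and the unique solution of~$\aspec$ assigns $\rv{X}\mapsto abab\ldots$, which is \emph{not} a solution of~$\bspec$. So the literal ``same~$\alpha$'' claim fails on the $\rv{X}$-coordinate. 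What you actually prove --- a bijection $\alpha \leftrightarrow \alpha'$ between solutions, with $\alpha'$ agreeing with~$\alpha$ off~$\rv{X}$ and $\alpha(\rv{X}) = a \xcons \alpha'(\rv{X})$ --- is the correct statement, and it suffices for everything the paper uses the lemma for (preservation of unique solvability, preservation of the root value, equivalence in the sense of the paper's Definition of ``equivalent''). Your substitution lemma and the explicit correspondence are the right tools; I would simply state the corrected conclusion rather than leave it implicit in the ``Main obstacle'' paragraph.

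Your productivity argument via the homomorphism $\phi$ with $\phi(\rv{X}) = a{:}\rv{X}$ is correct: for the rule $\rv{X}\to a{:}t$ in~$\aspec$ one has $\phi(a{:}t) = a{:}t[\rv{X}{:=}a{:}\rv{X}]$, matching a $\bspec$-step from $\phi(\rv{X}) = a{:}\rv{X}$; for the other rules $\rv{Y}\to s$ the $\bspec$-rule is exactly $\rv{Y}\to\phi(s)$; and $\phi$ commutes with zip-steps. Since $\phi(\Zroot)=\Zroot$ (here the hypothesis $\rv{X}\neq\Zroot$ is used) and $\phi$ preserves constructor prefixes, every finite prefix produced in~$\aspec$ is produced in~$\bspec$.
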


\begin{definition}
  A zip-specification $\aspec$ is said to have a \emph{free root}
  if the root $\Zroot$ of $\aspec$ does not occur in any right-hand side of~$\aspec$.
\end{definition}

\begin{lemma}\label{lem:free:root}
  Every zip-specification can be transformed into an equivalent one with free root.
\end{lemma}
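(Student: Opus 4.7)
The plan is a simple renaming/indirection trick: introduce a fresh recursion variable $\rv{Y}$, extend $\aspec$ by the trivial defining equation $\rv{Y} = \Zroot$, and declare $\rv{Y}$ to be the root of the resulting specification $\bspec$. The right-hand side $\Zroot$ is a legal zip-term by the base case $Z \BNFis \rv{X}$ of the grammar in Definition~\ref{def:Zterms}, so $\bspec$ is well-formed. Reachability of every variable from $\rv{Y}$ in $\bspec$ follows from the assumed reachability from $\Zroot$ in $\aspec$, using the one new edge from $\rv{Y}$ to $\Zroot$. Since $\rv{Y}$ is fresh, it occurs in no right-hand side of any equation of $\aspec$, and it does not occur in its own right-hand side $\Zroot$; hence the root of $\bspec$ is free in the sense of the definition preceding the lemma.

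It remains to verify equivalence, which I plan to do by checking the two inclusions of root-value sets directly from Definition~\ref{def:model}. Given any solution $\alpha \models \aspec$, the extension defined by $\alpha(\rv{Y}) := \interpret{\Zroot}_\alpha$ satisfies the new equation $\rv{Y} = \Zroot$, so it is a solution of $\bspec$, and by construction $\interpret{\rv{Y}}_\alpha = \interpret{\Zroot}_\alpha$. Conversely, every solution $\beta \models \bspec$ restricted to the variables of $\aspec$ is a solution of $\aspec$, and the added equation forces $\interpret{\rv{Y}}_\beta = \interpret{\Zroot}_\beta$. The two sets of root values therefore coincide, so $\aspec$ and $\bspec$ are equivalent.

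There is essentially no obstacle in this argument; the only point to keep in mind is the syntactic subtlety that the grammar of $\zterms$ admits a bare variable as right-hand side of an equation, so the indirection equation $\rv{Y} = \Zroot$ is a valid zip-equation and not a degenerate case. As a side observation, since $\rv{Y}$ appears in no right-hand side at all, no cycle of $\bspec$ passes through $\rv{Y}$, so the cycles of $\bspec$ are exactly those of $\aspec$; by Theorem~\ref{thm:unique:productive:cycle} the construction therefore also preserves productivity and unique solvability, should those be needed downstream.
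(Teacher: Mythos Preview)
Your proposal is correct and takes exactly the same approach as the paper: introduce a fresh variable with the equation $\rv{Y} = \Zroot$ and declare it the new root. The paper's proof is the one-liner ``Introduce a fresh root $\rootsc'$ and add the equation $\rootsc' = \rootsc$,'' so your version simply spells out the easy verifications (well-formedness, reachability, equivalence) that the paper leaves implicit.
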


The following lemma relates rewriting to evolving:
\begin{lemma}
  Let $\aspec$ be a zip-specification with free root~$\Zroot$.
  There exists a reduction
  $\Zroot \mred a_1 : \ldots : a_n : t$ in $\aspec$
  if and only if
  there exists a zip-specification $\aspec'$ such that $\aspec \circlearrowright^* \bspec$
  and $\bspec$ contains an equation of the form $\Zroot = a_1 : \ldots : a_n : t'$.
\end{lemma}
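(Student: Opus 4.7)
The plan is to prove the equivalence in two directions. The backward direction $(\Leftarrow)$ is a conceptually routine lift via a variable-renaming translation; the forward direction $(\Rightarrow)$ is the technical heart and proceeds by a leftmost-outermost simulation.

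For $(\Leftarrow)$, I would induct on the length of the evolving sequence $\aspec \circlearrowright^* \bspec$. The base case is immediate, since then $\Zroot = a_1{:}\cdots{:}a_n{:}t'$ is an equation of $\aspec$ itself. For the inductive step $\aspec \circlearrowright \aspec_1 \circlearrowright^* \bspec$, the inductive hypothesis yields an $\aspec_1$-reduction $\Zroot \mred a_1{:}\cdots{:}a_n{:}t''$, and I would transfer this to $\aspec$ via a syntactic translation $\tau$ from $\aspec_1$-terms to $\aspec$-terms that depends on the first evolving step. For a type-(ii) step, $\tau$ is the identity, and each $\aspec_1$-rewrite step lifts to at most two $\aspec$-steps. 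For a type-(i) step replacing $\rv{X} = a{:}t$ in $\aspec$ by a substitution-modified equation in $\aspec_1$, I set $\tau(\rv{X}) := t$, $\tau$ the identity on all other variables, and $\tau$ to commute with constructors. One then verifies that every $\aspec_1$-rewrite step lifts to a multi-step $\aspec$-reduction: zip-rule steps carry over, an equation step for $\rv{Y} \neq \rv{X}$ lifts to an $\aspec$-equation step for $\rv{Y}$ followed by unfoldings of $\rv{X}$, and the equation step for $\rv{X}$ in $\aspec_1$ lifts to unfoldings of each $\rv{X}$-occurrence in $t$. Since $\tau(\Zroot) = \Zroot$ (because $\rv{X} \neq \Zroot$) and $\tau$ preserves the `$:$'-constructor, the $\tau$-image of the $\aspec_1$-reduction is the desired $\aspec$-reduction.

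For $(\Rightarrow)$, I would induct on $n$. The base $n = 0$ is trivial by taking $\bspec = \aspec$. For $n \geq 1$, I first evolve $\aspec$ to some $\aspec_1$ with root equation $\Zroot = a_1{:}v$, then recurse on the tail. To expose $a_1$, I use standardization in orthogonal rewriting to assume the given reduction is leftmost-outermost, and simulate it step by step: a zip-rule step becomes evolving (ii) applied at the matching position of the root equation, while a variable-unfolding step $\rv{X} \to s_{\rv{X}}$ at the leftmost-outermost redex requires a preparatory sub-procedure. Since the remainder of the leftmost-outermost reduction from $s_{\rv{X}}$ exposes a `$:$' at this position, a nested induction on the length of that remaining reduction and on $\rv{X} \neq \Zroot$ lets me first evolve $\aspec$ so that $\rv{X}$'s equation becomes $\rv{X} = b{:}\cdot$; then one application of evolving (i) pulls $b{:}$ into the root RHS at $\rv{X}$'s position. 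Iterating these simulations along the leftmost path lifts the `$:$' to the top of $\Zroot$'s RHS, producing $\aspec_1$ with $\Zroot = a_1{:}v$. The recursion to $a_2,\ldots,a_n$ then applies the same exposure procedure one level deeper inside $v$, using (ii) on zip-redexes in $v$ and (i) on variables occurring in $v$; the required witness reduction $v \mred a_2{:}\cdots{:}a_n{:}t'''$ in $\aspec_1$ is obtained by applying the $(\Leftarrow)$-translation to the original $\aspec$-reduction and discarding the initial $a_1{:}$.

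The main obstacle is the nested induction in $(\Rightarrow)$. Because evolving (i) is the only rule that propagates a `$:$'-constructor into the context where a variable occurs, and it can fire only on already-guarded equations, exposing one new head character requires a recursive pre-computation of guards along the entire leftmost path. The sub-induction is well founded because the assumed reduction $\Zroot \mred a_1{:}\cdots{:}a_n{:}t$ provides a finite leftmost-outermost standardization, and each invocation of the sub-procedure strictly reduces a lexicographic measure on the pair (current prefix length yet to expose, remaining sub-reduction length).
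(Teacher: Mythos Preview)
The paper states this lemma without proof, so there is no reference argument to compare against. Your overall architecture is sound: a backward translation for $(\Leftarrow)$ and a standardization-plus-simulation for $(\Rightarrow)$ are the right tools, and your treatment of $(\Leftarrow)$ via the substitution $\tau$ is correct once one checks (as you sketch) that each $\aspec_1$-rule step is matched by a finite $\aspec$-rewrite.

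There is, however, a genuine gap in your $(\Rightarrow)$ direction. After evolving $\aspec \circlearrowright^* \aspec_1$ to obtain the root equation $\Zroot = a_1{:}v$, you need a \emph{reduction in $\aspec_1$} of the form $v \mred a_2{:}\cdots{:}a_n{:}t'''$ in order to recurse. You claim this ``is obtained by applying the $(\Leftarrow)$-translation to the original $\aspec$-reduction'', but your translation $\tau$ runs the wrong way: it carries $\aspec_1$-reductions to $\aspec$-reductions, not the reverse. What you actually need here is a \emph{forward} translation $\sigma$ from $\aspec$-terms to $\aspec_1$-terms (for a type-(i) step on $\rv{X}=a{:}t$, set $\sigma(\rv{X}) = a{:}\rv{X}$ and the identity elsewhere; for type-(ii), the identity suffices together with confluence), and then show that $\sigma$ carries $\aspec$-rewrites to $\aspec_1$-rewrites. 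This is symmetric to your $\tau$ and not hard, but it is a separate lemma you have not stated; without it the inductive step does not go through.

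A second, smaller issue: your well-foundedness argument for the nested sub-procedure appeals to a lexicographic measure on ``remaining sub-reduction length'', but after an evolving-(i) step the specification has changed globally (every right-hand side is substituted), so the original $\aspec$-reduction is no longer literally a reduction in the current specification. You therefore need the forward translation above not only at the outer level but also to maintain a witness reduction across \emph{every} evolving step of the sub-procedure, so that ``remaining length'' is measured in the current specification. Once the forward translation is in hand this is routine, but as written the measure is taken with respect to an object that no longer lives in the right system.
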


{%
\renewcommand{\xcons}{\,{:}\,}%
\newcommand{\xeq}{\,{=}\,}%
\renewcommand{\space}{\;\;\,}%
\begin{example}\label{ex:evolve}
  We evolve the following specification:
  \begin{gather*}%
    \rv{X} \xeq \zip{1 \xcons \rv{X}}{\rv{Y}} \space
    \rv{Y} \xeq \underline{0 \xcons} \tail{\zip{\rv{Z}}{\rv{X}}} \space
    \rv{Z} \xeq \zip{\rv{Y}}{0 \xcons \rv{Z}} \\
    \rv{X} \xeq \zip{1 \xcons \rv{X}}{\overline{0 \xcons} \rv{Y}} \space
    \rv{Y} \xeq \tail{\zip{\rv{Z}}{\rv{X}}} \space
    \rv{Z} \xeq \zip{\overline{\underline{0 \xcons}} \rv{Y}}{0 \xcons \rv{Z}} \\
    \ldots \space 
    \rv{Y} \xeq \tail{\zip{\rv{Z}}{\rv{X}}} \space
    \rv{Z} \xeq \overline{\underline{0 \xcons}} \zip{0 \xcons \rv{Z}}{\rv{Y}} \\
    \ldots \space 
    \rv{Y} \xeq \tail{\zip{\overline{\underline{0 \xcons}} \rv{Z}}{\rv{X}}} \space
    \rv{Z} \xeq \zip{0 \xcons \overline{0 \xcons} \rv{Z}}{\rv{Y}} \\
    \ldots \space 
    \rv{Y} \xeq \underline{\stail(\overline{0 \xcons}} \zip{\rv{X}}{\rv{Z}}) \space
    \rv{Z} \xeq \zip{0 \xcons 0 \xcons \rv{Z}}{\rv{Y}} \\
    \rv{X} \xeq \zip{1 \xcons \rv{X}}{0 \xcons \rv{Y}} \space
    \rv{Y} \xeq \zip{\rv{X}}{\rv{Z}} \space
    \rv{Z} \xeq \zip{0 \xcons 0 \xcons \rv{Z}}{\rv{Y}}
  \end{gather*}%
  Note that the contracted redexes are underlined and the created symbols are overlined.
  Also note that invoking a free root is not needed for the evolution above. 
\end{example}

Strictly speaking, the last step in the above example 
is not covered by Definition~\ref{def:evolve}
since the rule for `$\stail$' is not included.
We have chosen this example to demonstrate another principle.
The specification we started from is obtained from Example~\ref{ex:unprod}
by inserting $0 : \tail{\ldots}$ on an unguarded leftmost cycle.
Evolving has resulted in a productive zip\nb-specification (now every leftmost cycle is guarded)
that represents a solution of the original specification.
Similarly, by inserting $1 : \tail{\ldots}$, 
we obtain the solution:
\begin{gather*}
  \rv{X} \xeq \zip{1 \xcons \rv{X}}{1 \xcons \rv{Y}} \space
  \rv{Y} \xeq \zip{\rv{X}}{\rv{\rv{Z}}} \space
  \rv{\rv{Z}} \xeq \zip{0 \xcons 1 \xcons \rv{\rv{Z}}}{\rv{Y}}
\end{gather*}
}%
The insertion of $0 : \tail{\ldots}$ and $1 : \tail{\ldots}$
corresponds to choosing whether we are interested in a solution for $\rv{Y}$ starting with head $0$ or $1$.
To see that the result of the insertions are valid solutions
it is crucial to observe that the symbol `$\stail$' in the inserted $a : \tail{\ldots}$
disappears by consuming a `descendant' of the element $a\in \aalph$.
In general we have:
\begin{lemma}\label{lem:solve}
  Let $\aspec$ be a zip-specification.
  Define the set $\{\rv{Y}_1,\ldots,\rv{Y}_m\}$ to contain precisely one recursion variable
  from every unguarded leftmost cycle from $\aspec$.
  
  Let $\vec{a} = \tuple{a_1,\ldots,a_m} \in \aalph^m$ and define $\aspec_{\vec{a}}$ to be obtained from $\aspec$ by 
  replacing each equation $\rv{Y}_i = t_i$ by $\rv{Y}_i = a_i : \tail{t_i}$.
  Subsequently, we can by the evolving procedure eliminate the occurrences 
  of the symbol~$\stail$ as in Example~\ref{ex:evolve}. 
  Then $\aspec_{\vec{a}}$ is productive, and 
  the unique solution $\sinterpret{}^{\aspec_{\vec{a}}} : \mcl{X} \to \str{\aalph}$ 
  is a solution of $\aspec$.
  Hence, $\{\aspec_{\vec{a}} \where \vec{a} \in \aalph^m\}$
  is the set of all solutions of $\aspec$, in particular, 
  $\aspec$ has $\setsize{\aalph}^m$ different solutions.
\end{lemma}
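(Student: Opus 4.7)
The plan is in three stages: (a) transform $\aspec_{\vec{a}}$ by the evolving procedure to eliminate all occurrences of $\stail$, obtaining a pure zip-specification $\aspec_{\vec{a}}'$; (b) verify that every leftmost cycle in $\aspec_{\vec{a}}'$ is guarded, so that Theorem~\ref{thm:unique:productive:cycle} yields productivity and unique solvability; (c) establish the bijection between $\vec{a} \in \aalph^m$ and the solutions of $\aspec$.

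For stage (a), the idea is to push each $\stail$ inward through the leftmost $\szip$-argument. When the leftmost subterm is a variable $\rv{X}$ whose equation is $\rv{X} = a : u$, evolving clause~(i) brings the $a$ to the top of $\rv{X}$'s occurrences, enabling the zip-rewrite $\tail{\zipn{k}{a : s_0, s_1, \ldots, s_{k-1}}} \to \zipn{k}{s_1, \ldots, s_{k-1}, s_0}$ via clause~(ii); when the leftmost subterm is again a $\szip$, the $\stail$ descends one level. Following the leftmost path from $\tail{t_i}$, the chain must eventually reach $\rv{Y}_i$ (since $\rv{Y}_i$ lies on every unguarded leftmost cycle through it), at which point the inserted equation $\rv{Y}_i = a_i : \tail{t_i}$ combined with clause~(i) supplies the guard $a_i$ that is then consumed by the $\stail$. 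This is exactly the pattern illustrated in Example~\ref{ex:evolve}.

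After $\stail$-elimination, every leftmost cycle of $\aspec$ carries a guard in $\aspec_{\vec{a}}'$: originally guarded cycles retain their guard (evolving never removes guards), while every unguarded leftmost cycle of $\aspec$ meets some $\rv{Y}_i$, which is now guarded by $a_i$. Theorem~\ref{thm:unique:productive:cycle} then gives that $\aspec_{\vec{a}}'$ is productive with a unique solution $\alpha := \sinterpret{}^{\aspec_{\vec{a}}'}$. Because evolving preserves the set of solutions, $\alpha$ is also the unique solution of $\aspec_{\vec{a}}$, and in particular satisfies $\alpha(\rv{Y}_i) = a_i : \tail{\interpret{t_i}_\alpha}$. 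To see that $\alpha$ solves $\aspec$, it suffices to check $\head{\interpret{t_i}_\alpha} = a_i$: the head is preserved along $\leadsto_\ell$-steps inside $\szip$-contexts, so following the unguarded leftmost path from $t_i$ back to $\rv{Y}_i$ yields $\head{\interpret{t_i}_\alpha} = \head{\alpha(\rv{Y}_i)} = a_i$, and therefore $\interpret{t_i}_\alpha = a_i : \tail{\interpret{t_i}_\alpha} = \alpha(\rv{Y}_i)$.

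For completeness and counting, given any solution $\beta$ of $\aspec$, set $b_i := \beta(\rv{Y}_i)(0)$; then $\beta$ satisfies $\beta(\rv{Y}_i) = b_i : \tail{\beta(\rv{Y}_i)} = b_i : \tail{\interpret{t_i}_\beta}$, so $\beta$ solves $\aspec_{\vec{b}}$ and hence $\beta = \sinterpret{}^{\aspec_{\vec{b}}}$ by uniqueness. Distinct $\vec{a}$ produce solutions that already differ at $\alpha(\rv{Y}_i)(0) = a_i$, making $\vec{a} \mapsto \sinterpret{}^{\aspec_{\vec{a}}}$ a bijection from $\aalph^m$ onto the solution set of $\aspec$, hence $\card{\aalph}^m$ solutions. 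The main obstacle is stage~(a): rigorously establishing termination and ensuring that every $\stail$-occurrence, including those created when clause~(i) of evolving substitutes a prefix into the right-hand sides of other equations, is ultimately removed. A lexicographic termination measure consisting of the number of remaining $\stail$-symbols together with the length of the leftmost path from each $\stail$ to the nearest downstream guard should handle this.
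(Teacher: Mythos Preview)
The paper does not supply a formal proof of this lemma; it relies on Example~\ref{ex:evolve} and the remark just before the lemma that the inserted $\stail$ ``disappears by consuming a descendant'' of the very symbol $a_i$ that was prefixed. Your three-stage plan matches the paper's intent, and stage~(c) is entirely correct.

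There is, however, a genuine gap in stage~(b). You claim that after $\stail$-elimination ``every unguarded leftmost cycle of $\aspec$ meets some $\rv{Y}_i$, which is now guarded by $a_i$.'' But look at what actually happens in Example~\ref{ex:evolve}: after evolving, the equation for $\rv{Y}$ is $\rv{Y} = \zip{\rv{X}}{\rv{Z}}$, which is \emph{not} guarded---the prefix $a_i$ has been consumed by the $\stail$, exactly as the paper's remark says. So $\rv{Y}_i$ is no longer guarded in $\aspec_{\vec{a}}'$, and you cannot invoke Theorem~\ref{thm:unique:productive:cycle} via the route you describe. What saves productivity in the example is that the leftmost-cycle structure itself has changed: eliminating $\stail$ rotated the arguments of the outermost $\szip$ in $\rv{Y}$'s right-hand side, so $\rv{Y}$'s leftmost successor is now $\rv{X}$ (whose own leftmost cycle was always guarded) rather than $\rv{Z}$; meanwhile the intermediate clause-(i) steps have planted guards in $\rv{Z}$'s equation. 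A correct argument must either track this rotation together with the prefixes planted along the way, or else argue productivity of $\aspec_{\vec{a}}$ directly (treating $\tail{s} \leadsto_\ell s$ as a left-step and noting that every leftmost cycle then passes through the guard~$a_j$ at some~$\rv{Y}_j$) and use that evolving preserves productivity.

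A smaller imprecision: in stage~(a) you describe the $\stail$ as ``descending one level'' through a nested $\szip$, but no rule in Definition~\ref{def:evolve} (even extended by the $\stail$-rule) pushes $\stail$ inside a $\szip$. The mechanism in Example~\ref{ex:evolve} is the reverse: the $\stail$ stays fixed at the top of $\rv{Y}_i$'s right-hand side while prefixes are bubbled \emph{up} to meet it via repeated clause-(i) and clause-(ii) steps along the leftmost path. Your worry about clause~(i) creating fresh $\stail$-occurrences is also unfounded, since clause~(i) only substitutes $a:\rv{X}$ for $\rv{X}$.
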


\subsection{Formats of Zip-Specifications}

\begin{definition}\label{def:flat:specs}\normalfont
  A zip-specification~$\aspec$ is called \emph{flat}
  if each of~its equations is of the form: 
  \begin{align*}
    \Zvar{i} = \cons{c_{i,1}}{\cons{\ldots}{\cons{c_{i,m_i}}{\zipn{k_i}{\Zvar{i,1},\ldots,\Zvar{i,k_i}}}}}
    && (0 \le i < n)
  \end{align*}
  for $m_i,k_i \in \nat$, $k_i \ge 2$, recursion variables $\Zvar{i}, \Zvar{i,1}, \ldots, \Zvar{i,k_i}$
  and data constants $c_{i,1}, \ldots, c_{i,m_i}$.
\end{definition}

Zip\nb-free cycles correspond to periodic sequences,
and these can be specified by flat zip-$k$ specifications. 
Together with unfolding and introduction of fresh variables we then obtain:

\begin{lemma}\label{lem:flat}
  Every productive zip-$k$ specification can be transformed 
  into an equivalent productive, flat zip-$k$ specification.
\end{lemma}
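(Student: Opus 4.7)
The plan is to transform a productive zip-$k$ specification $\aspec$ into an equivalent flat one in three stages: pushing data constants out of zips, introducing fresh variables for non-variable zip arguments, and separately flattening zip-free periodic cycles. The key observation is that flatness can only fail for a variable whose leftmost unfolding never reaches a zip symbol, and any such variable defines a pure periodic stream which must be handled on its own.

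First, using the evolving relation of Definition~\ref{def:evolve}(ii) together with the zip rule, I repeatedly push leading data constants out of zips so that each equation's RHS becomes $c_1 \xcons \ldots \xcons c_m \xcons u$, where $u$ is either a variable or a term $\zipn{k}{t_1, \ldots, t_k}$ in which no $t_j$ begins with a cons. Then I eliminate nested zips by introducing, for each non-variable argument $t_j$, a fresh variable $\rv{Z}_j$ defined by $\rv{Z}_j = t_j$, and re-apply the previous step to $\rv{Z}_j$'s equation. Since subterms are bounded by the original term depth, this recursion terminates, and afterwards every equation whose RHS contains a zip is in flat form.

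It remains to deal with variables whose RHS is $\rv{X} = c_1 \xcons \ldots \xcons c_m \xcons \rv{Y}$ with no zip at all. Following the leftmost chain $\rv{X} \leadsto_\ell \rv{Y} \leadsto_\ell \ldots$ through such equations, by finiteness of the variable set we either reach an equation containing a zip (in which case inlining along the chain yields a flat equation for $\rv{X}$), or we detect a cycle which by Theorem~\ref{thm:unique:productive:cycle} must carry at least one guard. The variables on this cycle then define pure periodic streams of some common period~$p$; call them $\rv{X}_0, \rv{X}_1, \ldots$ with $\rv{X}_j = (c_{j+1} c_{j+2} \ldots c_{j+p})^\omega$ (indices mod~$p$).

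To flatten such a periodic part, I introduce, for each pair $(a, b)$ of residues mod~$p$, a fresh variable $\rv{W}_{a,b}$ intended to denote the arithmetic-progression reindexing $n \mapsto c_{((a + bn) \bmod p) + 1}$, equipped with the equation
\begin{align*}
\rv{W}_{a,b} \;=\;\; & c_{a+1} \xcons c_{a+b+1} \xcons \ldots \xcons c_{a+(k-1)b+1} \xcons {} \\
& \zipn{k}{\rv{W}_{a+bk,\,bk},\, \ldots,\, \rv{W}_{a+(2k-1)b,\,bk}}
\end{align*}
(all subscripts reduced mod~$p$). A direct unfolding verifies that this system is uniquely solved by the intended interpretation, and each original $\rv{X}_j$ is obtained as $\rv{W}_{j,1}$. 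Every resulting equation starts with $k \ge 1$ guards, so the system has no unguarded leftmost cycles and is productive by Theorem~\ref{thm:unique:productive:cycle}; equivalence with $\aspec$ is preserved stepwise because evolving and fresh-variable abstraction do not alter the solution set. The main technical obstacle is this last stage: the naive attempt of using only shifts of $\rv{X}$ fails to close under zip decomposition when $\gcd(p, k) \neq k$, so one must enlarge the family to include both an offset~$a$ and a step~$b$, giving a system of size at most $p^2$ that captures all reindexings arising in the recursion.
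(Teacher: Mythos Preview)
Your proposal is correct and structurally close to the paper's argument: both introduce fresh variables for non-variable $\szipn{k}$ arguments and then unfold chains of the form $\rv{X} = c_1 \xcons \ldots \xcons c_m \xcons \rv{Y}$. The substantive difference is how zip-free cycles (pure periodic streams) are dealt with. The paper handles these \emph{first}, via a separate lemma (\ref{lem:zipguarded}) that every zip-$k$ specification can be made zip-guarded; that lemma in turn appeals to the finiteness of the $\nbase{k}$-observation graph of a periodic stream and to Lemma~\ref{lem:nbase:zip} (a forward reference) to produce a zip-guarded replacement. Once zip-guarded, the final unfolding step cannot loop. You instead detect such a cycle during the chain-following and write down an explicit flat zip-$k$ system $\{\rv{W}_{a,b}\}_{a,b \in \mathbb{Z}/p}$ for the periodic part. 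This is more elementary and self-contained, avoiding the observation-graph machinery at the cost of slightly heavier bookkeeping; the paper's route is shorter but leans on later results.

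One minor imprecision that does not affect correctness: your stage-1 claim that after pushing cons's ``no $t_j$ begins with a cons'' is too strong, since the zip rule only fires when the \emph{first} argument starts with a cons (e.g.\ $\zipn{2}{\rv{Y},0{:}\rv{Z}}$ is stuck). This is harmless because your stage~2 abstracts any remaining non-variable argument anyway; indeed the paper skips your stage~1 entirely and abstracts every non-variable $\szipn{k}$ argument directly.
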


\section{Zip-Specifications and Observation Graphs}\label{sec:k-zip:graphs}

For the decidability result and the connection with automaticity
we need to observe streams and compare them. 
This is done with observations in terms of a cobasis 
and bisimulations to compare the resulting graphs.

\subsection{Cobases, Observation Graphs, and Bisimulation}

For general introductions to coalgebra 
we refer to~\cite{barw:moss:1996,sang:rutt:2012}.
We first introduce the notion of `cobasis'~\cite{rosu:2000,kupk:rutt:2010}.
For the sake of simplicity, 
we restrict to the single observation $\shead$.

\begin{definition}\label{def:cobasis}
  A \emph{stream cobasis~$\base = \tuple{\shead,\tuple{\op_1,\ldots,\op_k}}$} is a tuple
  consisting of \emph{operations} $\op_i : \str{\aalph} \to \str{\aalph}$
  ($1 \le i \le k$)
  such that
  for all $\astr,\bstr\in\str{\aalph}$ it holds that $\astr = \bstr$
  whenever
  \begin{align*}
    \head{\op_{i_1}(\ldots(\op_{i_n}(\astr)) \ldots)} = \head{\op_{i_1}(\ldots(\op_{i_n}(\bstr)) \ldots)}
  \end{align*}
  for all $n\in\nat$ and $1 \le i_1,\ldots,i_n \le k$.
\end{definition}

As $\shead$ is integral part of every stream cobasis, we suppress $\shead$ 
and write $\tuple{\op_1,\ldots,\op_k}$ as shorthand for $\tuple{\shead,\tuple{\op_1,\ldots,\op_k}}$.

\begin{definition}
  For $i \in \nat$, $k \in \pnat$
  define $\spjx{i}{k} : \str{\aalph} \to \str{\aalph}$:
  \begin{align*}
    \pjx{0}{k}{x \xcons \astr} 
    & \to x \xcons \pjx{k-1}{k}{\astr} \\
    \pjx{i+1}{k}{x \xcons \astr} 
    & \to \pjx{i}{k}{\astr}
  \end{align*}
  For every $k \ge 2$ we define two stream cobases:
  \begin{align*}
    \nbase{k} & = \tuple{\spjx{0}{k},\dots,\spjx{k-1}{k}} &
    \obase{k}  & = \tuple{\spjx{1}{k},\dots,\spjx{k}{k}} 
  \end{align*}
\end{definition}
Note that $\pjx{i}{k}{\astr}$ selects an arithmetic subsequence of $\astr$; 
it picks every $k$-th element beginning from index $i$:
  $\nth{\pjx{i}{k}{\astr}}{n} = \nth{\astr}{kn + i}$.
The $\spjx{i}{k}$ are generalized $\seven$ and $\sodd$ functions,
in particular we have:
$\stail = \spjx{1}{1}$, $\seven = \spjx{0}{2}$ and $\sodd = \spjx{1}{2}$. 

Observe that $\nbase{k}$ and $\obase{k}$ are cobases,
that is, every element of a stream can be observed.
The main difference between $\nbase{k}$ and $\obase{k}$ is that $\nbase{k}$
has an ambiguity in naming stream entries: $\head{\sigma} = \head{\even{\sigma}}$.
On the other hand, $\obase{k}$ is an orthogonal basis,
names of stream entries are unambiguous.

We employ the following simple coinduction principle:

\begin{definition}\label{def:bisim}
  Let $\base = \tuple{\op_1,\ldots,\op_k}$ be a cobasis.
  A \emph{$\base$\nb-bisimulation} is a 
  relation $R \subseteq \str{\aalph} \times \str{\aalph}$
  s.t.\ $\pair{\astr}{\bstr} \in R$ 
  implies $\head{\astr} = \head{\bstr}$ and 
  $\pair{\op_i(\astr)}{\op_i(\bstr)} \in R$ 
  for $1 \le i \le k$.
\end{definition}

\begin{lemma}\label{lem:coinduction}
  For all $\astr,\bstr\in\str{\aalph}$ 
  it holds that
  $\astr = \bstr$ 
  if and only if 
  there exists a $\base$\nb-bisimulation~$R$
  such that $\pair{\astr}{\bstr} \in R$.
  \qed
\end{lemma}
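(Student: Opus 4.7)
The plan is to prove the two directions separately, with the forward direction being essentially trivial and the reverse direction following directly by iterating the closure property of a bisimulation and then invoking the defining condition of a cobasis.

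For the ``only if'' direction, assume $\astr = \bstr$. Then the diagonal relation $R = \descsetexp{\pair{\cstr}{\cstr}}{\cstr \in \str{\aalph}}$ contains $\pair{\astr}{\bstr}$. Since $\shead$ and each $\op_i$ are (well-defined) functions on $\str{\aalph}$, we have $\head{\cstr} = \head{\cstr}$ and $\pair{\op_i(\cstr)}{\op_i(\cstr)} \in R$ for every $\cstr$ and every $1 \le i \le k$, so $R$ is a $\base$-bisimulation.

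For the ``if'' direction, let $R$ be a $\base$-bisimulation with $\pair{\astr}{\bstr} \in R$. I will show by induction on $n \in \nat$ that, for all index sequences $1 \le i_1,\ldots,i_n \le k$, the pair $\pair{\op_{i_1}(\ldots(\op_{i_n}(\astr))\ldots)}{\op_{i_1}(\ldots(\op_{i_n}(\bstr))\ldots)}$ lies in $R$. The base case $n = 0$ is the assumption. For the induction step, the induction hypothesis applied to $i_2,\ldots,i_n$ places the pair $\pair{\op_{i_2}(\ldots(\op_{i_n}(\astr))\ldots)}{\op_{i_2}(\ldots(\op_{i_n}(\bstr))\ldots)}$ in $R$, and then the closure clause of Definition~\ref{def:bisim} applied with the operation $\op_{i_1}$ yields the required pair. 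Having established this, the head clause of Definition~\ref{def:bisim} gives $\head{\op_{i_1}(\ldots(\op_{i_n}(\astr))\ldots)} = \head{\op_{i_1}(\ldots(\op_{i_n}(\bstr))\ldots)}$ for every $n$ and every choice of indices, which is precisely the hypothesis needed to invoke the defining property of the cobasis $\base$ (Definition~\ref{def:cobasis}) to conclude $\astr = \bstr$.

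There is no genuine obstacle here: the argument is the routine unfolding of the definitions, and the real work has already been done in choosing the definition of cobasis so that observability of stream equality through repeated applications of the $\op_i$ followed by $\shead$ is built in. The only point requiring a moment of care is keeping track of the order of composition in the inductive step, which is why I peel off the outermost $\op_{i_1}$ rather than the innermost $\op_{i_n}$ in the induction.
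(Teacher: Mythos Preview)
Your proof is correct and complete; the paper itself gives no proof at all (the statement ends with a bare \qed), treating the result as an immediate consequence of Definitions~\ref{def:cobasis} and~\ref{def:bisim}. Your argument spells out exactly the routine unfolding the authors left implicit, and your remark about peeling off the outermost $\op_{i_1}$ is the right bookkeeping observation.
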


We now further elaborate the coalgebraic perspective.
The following definition formalizes `$\base$-observation graphs'
where $\base = \tuple{\op_1,\ldots,\op_k}$ is a cobasis.
Every node~$n$ will represent the stream $\interpret{n} \in \str{\aalph}$,
and if the $i$-th outgoing edge of $n$ points to node $m$ then $\op_i(\interpret{n}) = \interpret{m}$. 

\newcommand{\groot}{r}
\begin{definition}\label{def:ograph}
  Let $\base = \pair{\shead}{\tuple{\op_1,\ldots,\op_k}}$ be a stream cobasis,
  and let $\safunct$ be the functor $\afunct{X} = \aalph \times X^k$.
  
  A \emph{$\base$-observation graph} is 
  an $\safunct$-coalgebra $\aograph = \pair{S}{\pair{o}{n}}$ 
  with a distinguished \emph{root} element $\groot \in S$,
  such that there exists an $\safunct$\nb-homomorphism $\sinterpret \funin S \to \str{\aalph}$
  from $\aograph$ to the $\safunct$-coalgebra $\pair{\str{\aalph}}{\base}$ of all streams
  with respect to $\base$:
  
  \begin{center}
    \begin{tikzpicture}[node distance=35mm,>=stealth,thick]
      \node (S) {$S$};
      \node (TS) [below of=S,node distance=12mm] {$\aalph \times S^k$};
      \node (Aw) [right of=S] {$\str{\aalph}$};
      \node (TAw) [right of=TS] {$\aalph \times (\str{\aalph})^k$};
      \draw [->] (S) -- (Aw) node [midway,above] {$\sinterpret$}; 
      \draw [->,shorten >= -1mm] (S) -- (TS) node [midway,left] {$\pair{o}{n}$};
      \draw [->,shorten >= -1mm] (Aw) -- (TAw) node [midway,right] {$\base$}; 
      \draw [->] (TS) -- (TAw) node [midway,above] {$\text{id}\times\sinterpret^k$}; 
    \end{tikzpicture}
  \end{center}
  The observation graph $\aograph$ is said to \emph{define} the stream $\interpret{\groot} \in \str{\aalph}$.
  (We note that $\sinterpret{}$ is unique by Lemma~\ref{lem:interpret:unique}, below.)

  Let $\astr \in \str{\aalph}$.
  The \emph{canonical $\base$\nb-observation graph of $\astr$} 
  is defined as 
  the sub-coalgebra of the $\safunct$\nb-coalgebra 
  $\pair{\str{\aalph}}{\base}$ generated by $\astr$, that is, 
  the observation graph $\pair{T}{\base}$ with root $\astr$
  where $T \subseteq \str{\aalph}$ 
  is the least set containing $\astr$ that is closed under $\op_1,\ldots,\op_k$.
  The set $\der{\base}{\astr}$ of \emph{$\base$-derivatives of $\astr$}
  is the set of elements of the canonical observation graph of $\sigma$.
\end{definition}

\begin{lemma}\label{lem:interpret:unique}
  For every $\base$-observation graph the mapping $\sinterpret$ is unique whenever it exists. 
\end{lemma}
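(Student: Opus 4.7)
The plan is to show that any $\safunct$-homomorphism $\sinterpret \funin S \to \str{\aalph}$ from $\aograph = \pair{S}{\pair{o}{n}}$ to $\pair{\str{\aalph}}{\base}$ is entirely determined, at each state $s \in S$, by the observation data $o$ and the successor maps $n_1,\ldots,n_k$ (where $n_i \funin S \to S$ is the $i$-th component of $n$). Since the cobasis condition lets us distinguish two streams whenever any iterated observation separates them, uniqueness of $\sinterpret$ reduces to showing that all such iterated observations on $\sinterpret(s)$ are fixed by $\aograph$ alone.

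Concretely, suppose $\sinterpret_1, \sinterpret_2 \funin S \to \str{\aalph}$ are both $\safunct$-homomorphisms. Unfolding the commuting square of Definition~\ref{def:ograph}, both satisfy, for every $s \in S$ and every $1 \le i \le k$, the two equations
\begin{align*}
  \head{\sinterpret_j(s)} = o(s)
  \quad\text{and}\quad
  \op_i(\sinterpret_j(s)) = \sinterpret_j(n_i(s))
  \qquad (j \in \{1,2\}).
\end{align*}
By induction on $m \in \nat$, using the second equation $m$ times and then the first, one obtains for any indices $1 \le i_1,\ldots,i_m \le k$ the identity
\begin{align*}
  \head{\op_{i_1}(\ldots(\op_{i_m}(\sinterpret_j(s)))\ldots)}
  \;=\; o\bigl(n_{i_m}(\ldots(n_{i_1}(s))\ldots)\bigr),
\end{align*}
whose right-hand side depends only on $\aograph$ and not on $j$. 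Hence the iterated observations of $\sinterpret_1(s)$ and $\sinterpret_2(s)$ coincide for every finite sequence of cobasis operations.

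By the defining separation property of the cobasis $\base$ (Definition~\ref{def:cobasis}), this forces $\sinterpret_1(s) = \sinterpret_2(s)$ for every $s \in S$, so $\sinterpret_1 = \sinterpret_2$. The argument is essentially bookkeeping once the homomorphism property is unfolded; the only substantive ingredient is the cobasis condition, which is exactly what guarantees that $\pair{\str{\aalph}}{\base}$ admits at most one $\safunct$-homomorphism from any $\safunct$-coalgebra. No genuine obstacle is expected — the delicate point, if any, is simply keeping the order of the composed operations $\op_{i_1}\circ\cdots\circ\op_{i_m}$ aligned with the order of the composed successors $n_{i_m}\circ\cdots\circ n_{i_1}$ when applying the homomorphism equation iteratively.
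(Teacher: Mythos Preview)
Your proof is correct. The paper's own argument takes a slightly different, more coalgebraic packaging: it defines the relation $R = \{\,(\interpret{s}_1, \interpret{s}_2) \mid s \in S\,\}$ on $\str{\aalph}$, observes that the homomorphism equations make $R$ a $\base$\nb-bisimulation in the sense of Definition~\ref{def:bisim}, and then invokes the coinduction principle (Lemma~\ref{lem:coinduction}) to conclude $\interpret{s}_1 = \interpret{s}_2$. Your argument instead inlines the content of that coinduction lemma, tracking the iterated observations by hand and appealing directly to the separation clause of Definition~\ref{def:cobasis}. The paper's route is more modular --- it reuses a principle already stated --- while yours is self-contained and makes the dependence on the cobasis property explicit.

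One small slip worth fixing, since you yourself flagged it as the delicate point: the order of the $n$'s on the right-hand side of your displayed identity is reversed. Applying $\op_{i_m}$ first (innermost) and using $\op_i(\sinterpret_j(s)) = \sinterpret_j(n_i(s))$ repeatedly gives $\op_{i_1}(\ldots(\op_{i_m}(\sinterpret_j(s)))\ldots) = \sinterpret_j(n_{i_1}(\ldots(n_{i_m}(s))\ldots))$, so the correct identity is $o\bigl(n_{i_1}(\ldots(n_{i_m}(s))\ldots)\bigr)$. This is harmless for the proof, since all that matters is that the expression is independent of $j$.
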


For the cobasis $\obase{k}$, the existence of $\sinterpret$ is guaranteed:
\begin{proposition}\label{prop:ok=ok}
  The stream coalgebra $\pair{\str{\aalph}}{\obase{k}}$ is final for the functor~$\afunct{X} = \aalph \times X^k$.
  As a consequence, we have that every $\safunct$\nb-coalgebra is an $\obase{k}$\nb-observation graph. 
\end{proposition}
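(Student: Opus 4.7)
The plan is to establish finality by constructing an explicit $\safunct$-homomorphism from an arbitrary $\safunct$-coalgebra into $\pair{\str{\aalph}}{\obase{k}}$. Once existence is established, the source coalgebra is thereby equipped with an interpretation map, so Lemma~\ref{lem:interpret:unique} supplies uniqueness. The consequence about observation graphs is then immediate: an $\obase{k}$-observation graph is by Definition~\ref{def:ograph} nothing more than an $\safunct$-coalgebra equipped with an $\safunct$-homomorphism into $\pair{\str{\aalph}}{\obase{k}}$, and finality provides precisely such a homomorphism for every $\safunct$-coalgebra.

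The decisive feature of $\obase{k}$ is \emph{orthogonality}: the observation $\head$ reads index $0$, while $\pjx{i}{k}{\astr}$ for $i \in \{1,\ldots,k\}$ reads exactly those indices $m \geq 1$ with $m \equiv i \pmod{k}$. Equivalently, every positive integer $m$ admits a unique decomposition $m = kn + i$ with $i \in \{1,\ldots,k\}$ and $n \in \nat$; iterating yields a bijection between $\nat$ and the set of finite words over $\{1,\ldots,k\}$, i.e.\ a bijective base-$k$ numeration, with the empty word corresponding to $0$.

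Given an $\safunct$-coalgebra $\pair{S}{\pair{o}{\tuple{n_1,\ldots,n_k}}}$ with $o \funin S \to \aalph$ and $n_i \funin S \to S$, I define $\sinterpret \funin S \to \str{\aalph}$ by
\begin{align*}
  \interpret{s}(m) & = o(n_{i_\ell}(n_{i_{\ell-1}}(\cdots n_{i_1}(s) \cdots)))
\end{align*}
where $i_1 i_2 \cdots i_\ell$ is the bijective base-$k$ representation of $m$ (so $\interpret{s}(0) = o(s)$ when $\ell = 0$). The $\safunct$-homomorphism conditions reduce to two elementary observations about the numeration: the representation of $0$ is empty, giving $\head{\interpret{s}} = o(s)$; and the representation of $kn+i$ is $i$ followed by the representation of $n$, giving $\pjx{i}{k}{\interpret{s}}(n) = \interpret{s}(kn+i) = \interpret{n_i(s)}(n)$ and hence $\pjx{i}{k}{\interpret{s}} = \interpret{n_i(s)}$.

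I do not foresee any deep obstacle. The construction is standard once orthogonality has been made explicit, and the only technical care needed is to fix the bijective base-$k$ numeration so that the identity for $kn+i$ aligns with the composition order of the $n_{i_j}$ in the display above. With that bookkeeping in place, both the verification of the homomorphism property and the invocation of Lemma~\ref{lem:interpret:unique} for uniqueness are routine.
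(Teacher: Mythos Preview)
Your proposal is correct and rests on the same key insight as the paper's proof: the bijective base-$k$ numeration, which establishes a bijection between $\nat$ and finite words over $\{1,\ldots,k\}$ via the recursion $kn+i \leftrightarrow i \cdot \mathrm{rep}(n)$.

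The packaging differs, however. The paper first identifies the standard final $\safunct$-coalgebra as the coalgebra $\tuple{\aalph^{\kset^*},\pair{o}{n}}$ of $\aalph$-weighted languages (equivalently, $\aalph$-labelled $k$-ary trees), citing this as known, and then exhibits an isomorphism from $\pair{\str{\aalph}}{\obase{k}}$ to that final coalgebra; the isomorphism is induced by exactly the bijective base-$k$ map $f\colon \kset^* \to \nat$, $a_0\cdots a_n \mapsto \sum_j (1+a_j)k^j$, verified via the identity $f(i\cdot w) = k\,f(w) + i$. You instead build the unique homomorphism from an arbitrary $\safunct$-coalgebra into $\pair{\str{\aalph}}{\obase{k}}$ directly, and appeal to Lemma~\ref{lem:interpret:unique} for uniqueness. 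Your route is more self-contained and avoids the detour through the tree coalgebra; the paper's route is more modular, reusing a known finality result and reducing the work to checking a single isomorphism. Both arguments are short and the verification of the homomorphism property in yours unwinds to precisely the same index computation as the paper's check that $f$ respects the coalgebra structure.
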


In contrast, the existence of $\sinterpret$ is \emph{not} guaranteed for $\nbase{k}$.
The coalgebra $\pair{\str{\aalph}}{\nbase{k}}$ is final
for a subset of $\safunct$\nb-coalgebras, called \emph{zero-consistent}, see further~\cite{kupk:rutt:2011}.

%
%

\begin{definition}\label{def:bis}
  Let $\base = \pair{\shead}{\tuple{\op_1,\ldots,\op_k}}$ be a stream cobasis.
  A \emph{bisimulation} between $\base$-observation graphs 
  $\aograph = \pair{S}{\pair{o}{n}}$ 
  and $\aograph' = \pair{S'}{\pair{o'}{n'}}$ 
  is a relation $R \subseteq S \times S'$ such that 
  for all $\pair{s}{s'} \in R$ we have that 
  $o(s) = o'(s')$ and $\pair{n_i(s)}{n'_i(s')} \in R$
  for all $1 \le i \le k$, where $n_i$ denotes the $i$-th projection on $n$.
  Two observation graphs are \emph{bisimilar} if there is a bisimulation relating
  their roots. 
\end{definition}

For deterministic transition systems, such as observation graphs, 
bisimilarity coincides with trace equivalence.
As a consequence, the algorithm of Hopcroft--Karp \cite{hopc:karp:1971} is applicable.

\begin{proposition}\label{prop:bisim}
  Bisimilarity of finite $\base$-observation graphs is decidable 
  (in linear time with respect to the sum of the number of vertices).
\end{proposition}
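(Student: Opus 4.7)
The plan is to reduce bisimilarity of finite $\base$-observation graphs to equivalence of finite deterministic labelled transition systems, for which the classical Hopcroft--Karp union-find procedure gives a near-linear-time decision algorithm.

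First, I would observe that a finite $\base$-observation graph $\aograph = \pair{S}{\pair{o}{n}}$, with $\base = \pair{\shead}{\tuple{\op_1,\ldots,\op_k}}$, is precisely a deterministic transition structure: every state $s \in S$ carries an output $o(s) \in \aalph$ and, for each $i \in \{1,\ldots,k\}$, a uniquely determined $i$-successor $n_i(s) \in S$. Consequently, the notion of bisimulation from Definition~\ref{def:bis}, restricted to this deterministic setting, coincides with equivalence of the two output-trace functions $\{1,\ldots,k\}^* \to \aalph$ induced by the roots.

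Next, I would run the Hopcroft--Karp procedure on the disjoint union of the two graphs $\aograph = \pair{S}{\pair{o}{n}}$ and $\aograph' = \pair{S'}{\pair{o'}{n'}}$. Initialise a union-find structure over $S \sqcup S'$ and a worklist containing the pair of roots, placed in the same class. While the worklist is non-empty, remove a pair $\pair{s}{s'}$; if $o(s) \neq o'(s')$, reject; otherwise, for each $i \in \{1,\ldots,k\}$, look up the current classes of $n_i(s)$ and $n'_i(s')$, and if they differ, merge them and push the pair onto the worklist. Accept when the worklist is exhausted. Correctness is the standard argument that the equivalence relation computed in this way is the smallest equivalence containing the initial pair and closed under the transitions, i.e.\ the smallest candidate bisimulation; it exists and relates the roots exactly when they are bisimilar.

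The complexity analysis is also standard: each vertex and each of its $k$ outgoing transitions is touched only when its class is merged, giving $O((\setsize{S}+\setsize{S'})\,\alpha(\setsize{S}+\setsize{S'}))$ amortised operations, which is the sense of ``linear time'' asserted in the statement. The only subtlety worth making explicit, and the main point where care is needed, is handling the output component: standard presentations of Hopcroft--Karp treat acceptance as a binary output, but here $o$ ranges over $\aalph$, so the early-rejection test must compare output labels rather than accept/reject marks. Beyond this, the argument is immediate from the classical analysis.
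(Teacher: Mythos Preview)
Your proposal is correct and follows the same approach as the paper: the paper simply remarks that observation graphs are deterministic transition systems, so bisimilarity coincides with trace equivalence, and then cites the Hopcroft--Karp algorithm~\cite{hopc:karp:1971} without further detail. Your write-up is a faithful elaboration of exactly this argument, including the adaptation of the output check from binary to $\aalph$-valued labels.
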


\begin{proposition}\label{prop:graph:bis:equal}
  Let $\base$ be a stream cobasis. Two $\base$-observation graphs define the same stream
  if and only if they are bisimilar.
\end{proposition}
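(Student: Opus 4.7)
The plan is to prove both directions by unwinding the definitions of observation graph and bisimulation, and then invoking the coinduction principle of Lemma~\ref{lem:coinduction} to transport bisimilarity at the graph level to equality at the stream level (and trivially in the other direction).

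For the forward implication, suppose $\aograph_1 = \pair{S_1}{\pair{o_1}{n_1}}$ and $\aograph_2 = \pair{S_2}{\pair{o_2}{n_2}}$ with roots $r_1,r_2$ define the same stream, that is, $\interpret{r_1}_1 = \interpret{r_2}_2$. I would define
\[
  R \;=\; \descsetexp{\pair{s_1}{s_2} \in S_1 \times S_2}{\interpret{s_1}_1 = \interpret{s_2}_2} ,
\]
and then verify that $R$ is a bisimulation between $\aograph_1$ and $\aograph_2$ in the sense of Definition~\ref{def:bis}. Using that $\sinterpret_j$ is an $\safunct$\nb-homomorphism from $\aograph_j$ to $\pair{\str{\aalph}}{\base}$, one gets both $o_1(s_1)=\head{\interpret{s_1}_1}=\head{\interpret{s_2}_2}=o_2(s_2)$ and, for each $1 \le i \le k$, $\interpret{n_{1,i}(s_1)}_1 = \op_i(\interpret{s_1}_1) = \op_i(\interpret{s_2}_2) = \interpret{n_{2,i}(s_2)}_2$, so that $\pair{n_{1,i}(s_1)}{n_{2,i}(s_2)} \in R$. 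Since $\pair{r_1}{r_2}\in R$, the two graphs are bisimilar.

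For the converse, suppose $R \subseteq S_1 \times S_2$ is a bisimulation with $\pair{r_1}{r_2}\in R$. I would transport $R$ along $\sinterpret_1,\sinterpret_2$ to the stream level by setting
\[
  R' \;=\; \descsetexp{\pair{\interpret{s_1}_1}{\interpret{s_2}_2}}{\pair{s_1}{s_2} \in R} \;\subseteq\; \str{\aalph}\times\str{\aalph} ,
\]
and then show that $R'$ is a $\base$\nb-bisimulation in the sense of Definition~\ref{def:bisim}. For $\pair{\astr}{\bstr} = \pair{\interpret{s_1}_1}{\interpret{s_2}_2} \in R'$, the homomorphism property gives $\head{\astr} = o_1(s_1) = o_2(s_2) = \head{\bstr}$, and likewise $\op_i(\astr) = \interpret{n_{1,i}(s_1)}_1$ and $\op_i(\bstr) = \interpret{n_{2,i}(s_2)}_2$, which lies in $R'$ because $\pair{n_{1,i}(s_1)}{n_{2,i}(s_2)}\in R$ by the bisimulation property of $R$. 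Applying Lemma~\ref{lem:coinduction} to $R'$ and the pair $\pair{\interpret{r_1}_1}{\interpret{r_2}_2}$ yields $\interpret{r_1}_1 = \interpret{r_2}_2$, so both graphs define the same stream.

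I do not anticipate a serious obstacle: both directions are essentially diagram chases, and the only nontrivial ingredient is the coinduction principle Lemma~\ref{lem:coinduction}, which is already at our disposal. The one point worth being careful about is that the definition of observation graph \emph{postulates} the existence of $\sinterpret$, so we may freely use the homomorphism equations $o_j = \shead\circ\sinterpret_j$ and $\sinterpret_j\circ n_{j,i} = \op_i\circ\sinterpret_j$ without any additional finality argument (in particular, this proof works for both $\obase{k}$ and $\nbase{k}$, where the latter lacks a finality statement).
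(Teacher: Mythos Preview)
Your proof is correct and complete; both directions are straightforward diagram chases using the homomorphism equations together with Lemma~\ref{lem:coinduction}, exactly as you set them up. The paper itself does not spell out a proof of Proposition~\ref{prop:graph:bis:equal} (it is stated without proof as a standard coalgebraic fact), so there is nothing to compare against beyond noting that your argument is the natural one and would serve perfectly well as the missing justification.
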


\subsection{From Zip-$k$ Specifications To Observation Graphs}

We construct observation graphs for zip-$k$ specifications.
\begin{definition}\label{def:trsobs}
  Let $\mcl{X} = \{ \rv{X}_0,\ldots,\rv{X}_{n-1}\}$ be a set of recursion variables
  and $\aalph$ a finite set of data-constants.
  Let $k \in \nat$, and $\mcl{S}$ be a zip-$k$ specification over $\pair{\aalph}{\mcl{X}}$.
  We define the orthogonal term rewrite system $\trsobs$ 
  to consist of the following rules:
  \begin{align*}
    \head{a \xcons \astr} & \to a \\
    \pjx{0}{k}{a \xcons \astr} & \to a \xcons \pjx{k-1}{k}{\astr} \\ 
    \pjx{i+1}{k}{a \xcons \astr} & \to \pjx{i}{k}{\astr} && (0 \le i \lt k+1) \\ 
    \head{\zipn{k}{\iastr{0},\ldots,\iastr{k-1}}} & \to \head{\iastr{0}} \\
    \pjx{i}{k}{\zipn{k}{\iastr{0},\ldots,\iastr{k-1}}} & \to \iastr{i} && (0 \le i \lt k)
  \end{align*}
  and additionally for every equation $\Zvar{j} = t$ of $\mcl{S}$ the rules
  \begin{align*}
    \head{\Zvar{j}} & \to \head{t} \\
    \pjx{i}{k}{\Zvar{j}} & \to \pjx{i}{k}{t} && (0 \le i \le k+1)
  \end{align*}
  where the $\Zvar{j}$ are treated as constant symbols.
\end{definition}

Whenever $\aspec$ is clear from the context, 
then by $\nf{t}$ we denote the unique normal form of term $t$ with respect to $\trsobs$.

\newcommand{\synder}[2]{\delta_{#1}(#2)}
\begin{definition}
  Let $\aspec$ be a productive, flat zip-$k$ specification with root~$\Zroot$.
  The set \emph{$\synder{k}{\aspec}$} is the least set containing $\Zroot$~that 
  is closed under $\mylam{t}{(\nf{\pjx{i}{k}{t}})}$ for every $0 \le i < k$.
\end{definition}

\begin{definition}\label{def:zip:graph}
  Let $\aspec$ be a productive, flat zip-$k$ specification with root~$\Zroot$.
  The \emph{$\nbase{k}$\nb-observation graph $\aograph(\aspec)$} is defined as:
  \begin{align*}
    \aograph(\aspec) &= \pair{\synder{k}{\aspec}}{\pair{o}{n}} &
    o(t) &= \nf{\shead(t)} \\ &&
    n(t) &= \tuple{\nf{\pjx{0}{k}{t}},\ldots,\nf{\pjx{k-1}{k}{t}}}
  \end{align*}
  with root $\Zroot$. In words: every node $t$ has
  \begin{enumerate}
    \item the observation $\nf{\head{t}}$ (the label), and
    \item outgoing edges to $\nf{\pjx{0}{k}{t}}$, \ldots, $\nf{\pjx{k-1}{k}{t}}$ (in this order).
  \end{enumerate}
\end{definition}

\begin{lemma}\label{lem:finite:nk}
  Let $\aspec$ be a productive, flat zip-$k$ specification with root~$\Zroot$.
  There exists $m \in \nat$ such that every term in $\synder{k}{\aspec}$ 
  is of the form
  $\cons{d_0}{\cons{\ldots}{\cons{d_{\ell-1}}{\Zvar{j}}}}$ 
  with $\ell \le m$, $d_0,\ldots,d_{\ell-1} \in \aalph$
  and $X_j \in \vars$.
  As a consequence $\synder{k}{\aspec}$ and $\aograph(\aspec)$ are finite.
\end{lemma}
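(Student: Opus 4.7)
The plan is to establish two facts by induction on the closure that defines $\synder{k}{\aspec}$: (i) every $t \in \synder{k}{\aspec}$ has the form $d_0 \xcons \ldots \xcons d_{\ell-1} \xcons \Zvar{j}$ for some $\ell \in \nat$, constants $d_p \in \aalph$, and variable $\Zvar{j} \in \vars$; and (ii) there is a uniform bound $m$ on the prefix length $\ell$. Finiteness of $\synder{k}{\aspec}$ and hence of $\aograph(\aspec)$ will then follow from finiteness of $\aalph$ and $\vars$.

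For (i), the base case $t = \Zroot$ is immediate with $\ell = 0$. For the inductive step, given $t = d_0 \xcons \ldots \xcons d_{\ell-1} \xcons \Zvar{j}$, I trace the $\trsobs$\nb-reduction of $\pjx{i}{k}{t}$ for $0 \le i < k$. The rewrite rules for $\spjx{i}{k}$ on a $\scons$\nb-constructor progressively consume the prefix, outputting exactly those $d_p$ with $p \equiv i \pmod{k}$ and $p < \ell$, until the reduction reaches $\pjx{i'}{k}{\Zvar{j}}$ for some $i' \in \{0, \ldots, k-1\}$. Flatness of $\aspec$ then applies: the variable-unfolding rule produces $\pjx{i'}{k}{c_{j,1} \xcons \ldots \xcons c_{j,m_j} \xcons \zipn{k}{\Zvar{j,1}, \ldots, \Zvar{j,k}}}$, and the same prefix-consumption pattern outputs some of the $c_{j,p}$'s before reaching $\pjx{i''}{k}{\zipn{k}{\Zvar{j,1}, \ldots, \Zvar{j,k}}} \to \Zvar{j,r}$ for the appropriate component $r$. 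Flatness is essential here: one unfolding per application of $\pjx{i}{k}$ suffices to reach a $\zipn{k}$\nb-redex, so the reduction terminates in a normal form of the claimed shape.

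For (ii), let $M = \max_{j} m_j$. The analysis above shows that the new prefix length satisfies $\ell' \le \lceil \ell/k \rceil + \lceil M/k \rceil$, since at most $\lceil \ell/k \rceil$ of the $d_p$'s and at most $\lceil M/k \rceil$ of the $c_{j,p}$'s appear in the output. Write $F(\ell) = \lceil \ell/k \rceil + \lceil M/k \rceil$. Since $k \ge 2$ for a flat zip-$k$ specification, the inequality $F(\ell) \le \ell$ holds for all sufficiently large $\ell$; choose any such $m$, e.g.\ $m = \lceil (M + 2k)/(k-1) \rceil$. Monotonicity of $F$ together with $F(0) \le m$ then yields, by induction on the closure generating $\synder{k}{\aspec}$, that every $t \in \synder{k}{\aspec}$ has prefix length at most $m$.

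Finally, there are at most $|\vars| \cdot \sum_{\ell=0}^{m} |\aalph|^{\ell}$ terms of the form $d_0 \xcons \ldots \xcons d_{\ell-1} \xcons \Zvar{j}$ with $\ell \le m$, a finite number. Hence $\synder{k}{\aspec}$ and $\aograph(\aspec)$ are finite. The main obstacle is the reduction trace in (i): getting the indexing and counting right, and verifying that the flat form really does prevent any runaway recursion when $\pjx{i}{k}$ unfolds a variable---so that the two contributions $\lceil \ell/k \rceil$ and $\lceil M/k \rceil$ exhaust the output prefix.
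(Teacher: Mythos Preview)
Your proof is correct and follows essentially the same approach as the paper: trace the $\trsobs$\nbd reduction of $\pjx{i}{k}{d_0\xcons\ldots\xcons d_{\ell-1}\xcons\Zvar{j}}$ through the prefix, one unfolding of $\Zvar{j}$ (flatness), and the $\szipn{k}$\nbd rule, then count the surviving prefix elements to show closure of the claimed shape. The only difference is quantitative: the paper treats the walk over the $d$'s and the $c$'s as a single pass over $\ell + m_j$ elements, obtaining the tighter bound $\lfloor(\ell+m_j+1)/k\rfloor \le \lfloor(2M+1)/k\rfloor \le M$, so that $m = M = \max_j m_j$ already works; your separation into $\lceil\ell/k\rceil + \lceil M/k\rceil$ is a valid but looser upper bound, which is why you need the extra fixed-point step to extract $m$.
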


\begin{proof}[Proof Sketch]
  The equations of $\aspec$ are of the form:
  \begin{align*}
    \Zvar{j} = \cons{c_{j,0}}{\cons{\ldots}{\cons{c_{j,m_j-1}}{\zipn{k}{\Zvar{j,0},\ldots,\Zvar{j,k-1}}}}}
    \;\, (0 \le j < n)
  \end{align*}
  Let $m \defdby \max\, \{\,m_i \mid 0 \le i < n \,\}$.
  It suffices that the claimed shape is
  closed under $\mylam{s}{\nf{\pjx{i}{k}{s}}}$ for $0 \le i < k$.~This 
  follows by a straightforward 
  application of Definition~\ref{def:trsobs} 
  together with a precise counting of the `produced' elements.
\end{proof}



We need to ensure that the rewrite system from Definition~\ref{def:trsobs}
implements (is sound for) the intended semantics;
recall that $\aspec$ has a unique solution 
$\sinterpret{}^\aspec : \mcl{X} \to \str{\aalph}$ due to productivity:
\begin{lemma}\label{lem:flat:zipk:ngraph}
  Let $\aspec$ be a productive, flat zip-$k$ specification with root~$\Zroot$.
  For every $t\in \synder{k}{\aspec}$ and $0 \le i < k$ we have~that
  $\head{t} \mred \head{\interpret{t}}$
  and $\interpret{\nf{\pjx{i}{k}{t}}} = \pjx{i}{k}{\interpret{t}}$.
  Hence,
  the graph $\aograph(\aspec)$ is 
  an $\nbase{k}$-observation graph defining $\interpret{\Zroot}^{\aspec}$.
\end{lemma}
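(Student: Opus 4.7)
The plan is to lift the unique solution $\sinterpret{}^\aspec$ to open terms involving $\shead$ and $\spjx{i}{k}$ by setting $\interpret{\head{t}} \defdby \head{\interpret{t}}$ and $\interpret{\pjx{i}{k}{t}} \defdby \pjx{i}{k}{\interpret{t}}$, and then to establish the lemma in three movements: soundness of $\trsobs$ for this extended semantics, a concrete normal-form calculation based on the shape lemma, and finally assembly into the commuting square of Definition~\ref{def:ograph}.

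First I would verify rule-by-rule that every rewrite rule $\ell \to r$ of $\trsobs$ satisfies $\interpret{\ell} = \interpret{r}$. The five stream rules (those not involving recursion variables) are direct instances of the defining equations of $\shead$, $\spjx{i}{k}$ and $\zipn{k}$; in particular the rule $\pjx{i}{k}{\zipn{k}{\iastr{0},\ldots,\iastr{k-1}}} \to \iastr{i}$ corresponds to the identity $\pjx{i}{k}{\fzipn{k}{\iastr{0},\ldots,\iastr{k-1}}} = \iastr{i}$ which follows immediately from Definition~\ref{def:zip}. The rule schemes $\head{\Zvar{j}} \to \head{t}$ and $\pjx{i}{k}{\Zvar{j}} \to \pjx{i}{k}{t}$ coming from equations $\Zvar{j} = t$ in $\aspec$ hold because $\sinterpret{}^\aspec$ is a solution, so $\interpret{\Zvar{j}} = \interpret{t}$. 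Hence single-step rewriting preserves $\sinterpret$, and so does $\mred$.

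Next I would invoke Lemma~\ref{lem:finite:nk}: every $t \in \synder{k}{\aspec}$ has shape $\cons{d_0}{\cons{\ldots}{\cons{d_{\ell-1}}{\Zvar{j}}}}$. For the first claim, $\head{t} \mred \head{\interpret{t}}$: if $\ell \ge 1$ a single application of $\head{a \xcons \astr} \to a$ yields $d_0$, which by soundness equals $\head{\interpret{t}}$; if $\ell = 0$ so $t = \Zvar{j}$, unfold via $\Zvar{j} = \cons{c_{j,0}}{\cons{\ldots}{\cons{c_{j,m_j-1}}{\zipn{k}{\Zvar{j,0},\ldots,\Zvar{j,k-1}}}}}$, where when $m_j \ge 1$ the reduction terminates at $c_{j,0}$ and soundness closes the case, and when $m_j = 0$ the rule $\head{\zipn{k}{\ldots}} \to \head{\Zvar{j,0}}$ applies and the argument iterates on $\Zvar{j,0}$. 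For the second claim, closure of $\synder{k}{\aspec}$ under $t \mapsto \nf{\pjx{i}{k}{t}}$ is built into its definition, so the normal form $\nf{\pjx{i}{k}{t}}$ exists and lies again in $\synder{k}{\aspec}$; by soundness $\interpret{\nf{\pjx{i}{k}{t}}} = \interpret{\pjx{i}{k}{t}} = \pjx{i}{k}{\interpret{t}}$. Finally, with $o(t) = \nf{\head{t}}$ and $n_i(t) = \nf{\pjx{i}{k}{t}}$ as in Definition~\ref{def:zip:graph}, the two claims say precisely that $o(t) = \head{\interpret{t}}$ and $\interpret{n_i(t)} = \pjx{i}{k}{\interpret{t}}$, i.e.\ the commutation required in Definition~\ref{def:ograph}; thus $\aograph(\aspec)$ is an $\nbase{k}$-observation graph defining $\interpret{\Zroot}^\aspec$.

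The main obstacle is the termination of the head-reduction in the iterated $\ell = 0$, $m_j = 0$ case: the only way the procedure above can diverge is through an infinite chain $\head{\Zvar{j}} \mred \head{\Zvar{j'}} \mred \cdots$ that never produces a data constant. Such a chain corresponds exactly to an \emph{unguarded} leftmost cycle in $\aspec$, and Theorem~\ref{thm:unique:productive:cycle}(iii) rules these out under the productivity hypothesis. Translating carefully between a rewrite trace in $\trsobs$ and the leftmost-cycle notion of $\aspec$ is the one delicate piece of bookkeeping; everything else reduces to a routine induction on term structure.
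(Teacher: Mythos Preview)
Your proposal is correct and follows essentially the same approach as the paper: the paper's one-line proof simply observes that extending $\sinterpret{}_\alpha$ to interpret the symbols $\spjx{i}{k}$ (and $\shead$) by their semantic counterparts yields a model of $\trsobs$, which is exactly your soundness argument. Your version is more explicit---in particular you spell out the termination of the $\shead$-reduction via productivity and Theorem~\ref{thm:unique:productive:cycle}(iii), a point the paper leaves implicit.
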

\begin{proof}
  The extension of $\sinterpret{}_\alpha$ from Definition~\ref{def:model},
  interpreting the symbols $\spjx{i}{k}$
  by the stream function $\spjx{i}{k} : \str{\aalph} \to \str{\aalph}$ for every $0 \le i < k$,
  is a model of $\trsobs$.
\end{proof}

As an application of Lemmas~\ref{lem:flat} and~\ref{lem:flat:zipk:ngraph}
we get
\begin{lemma}\label{lem:zipk:ngraph}
  For every productive zip-$k$ specification with root~$\Zroot$
  we can construct an $\nbase{k}$\nb-observation graph 
  defining the stream $\interpret{\Zroot}^{\aspec}$.
\end{lemma}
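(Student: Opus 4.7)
The plan is to chain together the two preceding lemmas: \emph{flattening} (Lemma~\ref{lem:flat}) and the \emph{flat construction} (Lemma~\ref{lem:flat:zipk:ngraph}). Let $\aspec$ be a productive zip-$k$ specification with root $\Zroot$. First I would apply Lemma~\ref{lem:flat} to produce an equivalent productive, flat zip-$k$ specification $\aspec'$ with some root $\Zroot'$. Then I would apply Lemma~\ref{lem:flat:zipk:ngraph} to $\aspec'$ to obtain the $\nbase{k}$-observation graph $\aograph(\aspec')$, which by that lemma defines the stream $\interpret{\Zroot'}^{\aspec'}$.

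The remaining step is to verify that $\interpret{\Zroot'}^{\aspec'} = \interpret{\Zroot}^{\aspec}$, so that $\aograph(\aspec')$ is in fact the required graph for $\aspec$. This is where one has to be slightly careful: the statement of Lemma~\ref{lem:flat} guarantees equivalence in the sense of Definition of equivalence of zip\nb-specifications, i.e.\ equality of the \emph{sets of root solutions} of $\aspec$ and $\aspec'$. Since both $\aspec$ and $\aspec'$ are productive, Theorem~\ref{thm:unique:productive:cycle} gives unique solvability for each, hence both sets are singletons, and equivalence forces $\interpret{\Zroot}^{\aspec} = \interpret{\Zroot'}^{\aspec'}$.

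I expect the main (very mild) obstacle to be precisely this bookkeeping around equivalence: the flattening procedure may introduce fresh auxiliary variables, unfold heading data symbols, or rename the root, and one has to track that the distinguished root variable's interpretation is preserved throughout. Once this is observed, the conclusion is immediate: $\aograph(\aspec')$ is an $\nbase{k}$\nb-observation graph whose root denotes $\interpret{\Zroot}^{\aspec}$, which is exactly what the lemma asserts.
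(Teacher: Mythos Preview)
Your proposal is correct and matches the paper's approach exactly: the paper states the lemma as a direct application of Lemma~\ref{lem:flat} and Lemma~\ref{lem:flat:zipk:ngraph}, without further elaboration. Your additional bookkeeping via Theorem~\ref{thm:unique:productive:cycle} to pin down that equivalence of productive specifications yields equality of the defined streams is a reasonable explicitation of what the paper leaves implicit.
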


We arrive at our first main result:
\begin{theorem}\label{thm:zipk:decide}
  Equivalence of zip-$k$ specifications is decidable.
\end{theorem}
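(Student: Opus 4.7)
The plan is to reduce the equivalence problem for zip-$k$ specifications $\aspec$ and $\bspec$ to finitely many bisimilarity checks between finite observation graphs. The argument proceeds in three stages.

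First I would decide productivity of $\aspec$ and $\bspec$ using Theorem~\ref{thm:unique:productive:cycle}: productivity is equivalent to the purely syntactic condition that every leftmost cycle carries a guard, which is checkable in polynomial time on the cyclic term-graph representation. In case a specification is not productive, I would invoke Lemma~\ref{lem:solve} to enumerate all of its solutions. Concretely, pick one recursion variable from each unguarded leftmost cycle to form $\{\rv{Y}_1,\ldots,\rv{Y}_m\}$; then for each choice $\vec{a}\in\aalph^m$ of a first symbol, the specification $\aspec_{\vec{a}}$ (obtained by replacing $\rv{Y}_i = t_i$ with $\rv{Y}_i = a_i : \tail{t_i}$ and then evolving to remove the tails) is productive, and by Lemma~\ref{lem:solve} the family $\{\sinterpret{}^{\aspec_{\vec{a}}}\!\rpl_{\Zroot} : \vec{a}\in\aalph^m\}$ is exactly the set of solutions of $\aspec$ at its root. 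Doing the same for $\bspec$ yields a corresponding finite family parameterized by $\vec{b}\in\aalph^{m'}$.

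Second, for each productive completion I would apply Lemma~\ref{lem:flat} to obtain an equivalent flat productive zip-$k$ specification, and then build its $\nbase{k}$-observation graph via Definition~\ref{def:zip:graph}. This graph is finite by Lemma~\ref{lem:finite:nk} and, by Lemma~\ref{lem:zipk:ngraph}, it is an $\nbase{k}$-observation graph defining precisely the stream denoted by the root under the unique solution. Thus every element of each of the two finite solution sets is effectively represented by a finite observation graph.

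Third, equivalence of $\aspec$ and $\bspec$ means equality of these two finite sets of streams. By Proposition~\ref{prop:graph:bis:equal}, equality of streams defined by $\nbase{k}$-observation graphs coincides with bisimilarity of those graphs, and by Proposition~\ref{prop:bisim} bisimilarity of finite observation graphs is decidable (in fact in linear time via Hopcroft--Karp). The decision procedure therefore performs at most $|\aalph|^m \cdot |\aalph|^{m'}$ bisimilarity checks and succeeds iff every stream in one family is matched by some bisimilar graph in the other.

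The main obstacle, conceptually, is not any single step but ensuring the soundness of the reduction in the non-productive case: one must verify that the enumeration via $\vec{a}\in\aalph^m$ in Lemma~\ref{lem:solve} is both complete (every solution arises as $\sinterpret{}^{\aspec_{\vec{a}}}$) and effective (the tail-eliminating evolution terminates and produces a bona fide productive flat zip-$k$ specification), so that set-equality of solution sets is faithfully captured by pairwise bisimilarity checks. Once this is in place, finiteness of the observation graphs (Lemma~\ref{lem:finite:nk}) and decidability of bisimilarity (Proposition~\ref{prop:bisim}) do the remaining work.
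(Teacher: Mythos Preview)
Your proposal is correct and follows essentially the same route as the paper: reduce the non-productive case to finitely many productive instances via Lemma~\ref{lem:solve}, then decide each productive instance by constructing finite $\nbase{k}$-observation graphs (Lemma~\ref{lem:zipk:ngraph}, which already wraps the flattening of Lemma~\ref{lem:flat} and the finiteness of Lemma~\ref{lem:finite:nk}) and checking bisimilarity (Propositions~\ref{prop:graph:bis:equal} and~\ref{prop:bisim}). Your write-up is simply more explicit than the paper's two-line proof, and your identified ``main obstacle'' is exactly the content the paper packs into its appeal to Lemma~\ref{lem:solve}.
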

\begin{proof}
  Lemma~\ref{lem:solve} allows to reduce the equivalence problem for
  unproductive zip\nb-$k$ specifications to 
  a finite number of equivalence problems for productive zip\nb-$k$ specifications.
  Propositions~\ref{prop:graph:bis:equal}, \ref{prop:bisim} and Lemma~\ref{lem:zipk:ngraph} 
  imply decidability of equivalence for productive zip-$k$ specifications.
\end{proof}

\begin{proposition}\label{prop:flat:ptime}
  Equivalence of productive, flat \zspec{s} is decidable in quadratic time.
\end{proposition}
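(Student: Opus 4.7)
The plan is to invoke the observation-graph machinery of Definition~\ref{def:zip:graph} together with Lemmas~\ref{lem:finite:nk}--\ref{lem:flat:zipk:ngraph}. Given two productive, flat zip-$k$ specifications $\aspec$ and $\bspec$, I will build the $\nbase{k}$-observation graphs $\aograph{\aspec}$ and $\aograph{\bspec}$ by breadth-first exploration from the respective root; at each state $t$ the $k$ successors $\nf{\pjx{i}{k}{t}}$ ($0 \le i < k$) are computed by normalization in the orthogonal rewrite system $\trsobs$ of Definition~\ref{def:trsobs}. Lemma~\ref{lem:finite:nk} guarantees that every normal form has the canonical shape $\cons{d_0}{\cons{\ldots}{\cons{d_{\ell-1}}{X_j}}}$ with $\ell \le m$, so the exploration terminates.

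Correctness is immediate from the earlier results: Lemma~\ref{lem:flat:zipk:ngraph} tells us that $\aograph{\aspec}$ and $\aograph{\bspec}$ are $\nbase{k}$-observation graphs defining the streams denoted by the respective roots, so by Proposition~\ref{prop:graph:bis:equal} equivalence of the two specifications is reduced to bisimilarity of the two graphs, which is decided by the algorithm of Proposition~\ref{prop:bisim}. For the complexity bound, let $N$ denote the combined number of vertices of $\aograph{\aspec}$ and $\aograph{\bspec}$. The exploration carries out $O(N \cdot k)$ successor computations, each of which normalizes a term whose unique variable occurrence sits at the rightmost position by first traversing a prefix of length at most $m$ and then the unfolded right-hand side of a single equation (of length $O(m + k)$), at each step either dropping a data constant or moving it into the output prefix. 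Hence every successor computation uses $O(m)$ rewrite steps. Since $m$ is bounded by the input specification size and therefore by $N$, graph construction runs in $O(N \cdot k \cdot m) = O(N^2)$, and Proposition~\ref{prop:bisim} adds only $O(N)$ for the bisimilarity check, keeping the total within $O(N^2)$.

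The main technical obstacle I expect is the fine-grained accounting of an individual successor step: one has to verify that normalization in $\trsobs$ neither duplicates nor blows up subterms and that the result is again of the canonical shape with $\ell \le m$, so that the $O(m)$-per-step estimate really holds. This is precisely the content of the ``precise counting of produced elements'' alluded to in the sketch of Lemma~\ref{lem:finite:nk}; once that bookkeeping is carried out, the complexity argument above closes. A secondary subtlety is that flat zip-specifications may mix arities across equations, which I handle by first passing to a common arity $k$ before constructing the $\nbase{k}$-observation graphs.
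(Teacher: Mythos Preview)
Your overall strategy---build the observation graphs, reduce equivalence to bisimilarity, and invoke Proposition~\ref{prop:bisim}---matches the paper's. The gap is in the complexity accounting: you never bound the number $N$ of vertices in terms of the \emph{input} size. Lemma~\ref{lem:finite:nk} only gives finiteness; the shape $\cons{d_0}{\cons{\ldots}{\cons{d_{\ell-1}}{X_j}}}$ with $\ell \le m$ admits up to $\setsize{\aalph}^m \cdot n$ distinct terms, which is exponential in~$m$. Your argument establishes that the procedure runs in time $O(N^2)$ measured in the graph size, but not that $N$ itself is polynomial in the specification. (The step $O(N\cdot k\cdot m)=O(N^2)$ also tacitly assumes $k\cdot m = O(N)$, which you have not justified; you argued the reverse direction, that $m$ is bounded by the input.)

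The paper fills this hole with a separate counting lemma (Lemma~\ref{lem:psize}) showing $\setsize{\synder{k}{\aspec}} \le 2(\setsize{\aalph}+1)\,m\,n + 4m$. The point is that not all terms of the admissible shape are reachable: each application of $\spjx{i}{k}$ roughly halves the data prefix carried over from the previous state, so after $\ceil{\log_2 m}$ derivation steps the prefix has length at most~$1$, and every further reachable state is of the form $t'$ or $c \xcons t'$ with $t'$ one of the $O(m)$ iterated derivatives of some $X_j$. This yields the linear bound $N = O(m\,n)$ that your argument is missing. Once you have it, the rest of your analysis (linear-time bisimilarity on a graph of size $O(m\,n)$, plus construction cost) gives the quadratic bound.
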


\begin{example}\label{ex:morse:bis}
  Consider the zip-$2$ specification with root $\con{N}$:
  \begin{align*}
      \con{N} &= \cons{0}{\zip{\cons{1}{\con{W}}}{\cons{1}{\con{U}}}} &
      \con{U} &= \cons{1}{\zip{\con{V}}{\con{U}}} \\
      \con{V} &= \cons{0}{\zip{\con{V}}{\cons{1}{\con{U}}}} &
      \con{W} &= \zip{\con{N}}{\con{V}}
  \end{align*}
    \begin{center}
    \begin{small}
        \begin{tikzpicture}[default]

    \node [state] (q0) {$\con{M}$}; 
    \node (h0) [lhead=q0]{$0$};
    \node [state] (q1) [bl=q0] {$0 \xcons \con{X}$}; 
    \node (h1) [rhead=q1]{$0$};
    \node [state] (q2) [br=q0] {$1 \xcons \con{Y}$}; 
    \node (h2) [lhead=q2] {$1$};
    \path (q0) 
               edge node [lal] {$\seven$} (q1)
               edge node [lar] {$\sodd$} (q2)
          (q1) 
               edge [bloop] node {$\seven$} (q1)
               edge [bend left=25] node [below] {$\sodd$} (q2)
          (q2) 
               edge [bloop] node {$\seven$} (q2)
               edge [bend left=25] node [above] {$\sodd$} (q1);

    \node [state,right of=q0,node distance=45mm] (u) {$\con{N}$};
    \node (u0) [rhead=u,yshift=1mm] {$0$};
    \node [state,bl=u] (l) {$0:1:\con{U}$}; 
    \node (l0) [rhead=l] {$0$};
    \node [state,br=u] (r) {$1:\con{W}$}; 
    \node (r0) [lhead=r] {$1$};
    \node [state,br=l] (b) {$1:\con{V}$}; 
    \node (b0) [rhead=b] {$1$};

    \path 
      (u) edge node [lal,pos=.2] {$\seven$} (l)
      (u) edge [bend left] node [lar,pos=.7]{$\sodd$} (r) 
      (l) edge [tloop] node [left,pos=.8] {$\seven$} (l) 
      (l) edge [bend left] node [lar] {$\sodd$} (b) 
      (r) edge node [lbr,pos=.3] {$\seven$} (b) 
      (r) edge [bend left] node [lar] {$\sodd$} (u) 
      (b) edge [out=-20,in=20,looseness=15] node [left] {$\seven$} (b) 
      (b) edge [bend left] node [lar] {$\sodd$} (l);
      
    \path [dashed,-]
      (q0) edge [bend left=12] (u)
      (q1) edge [out=30,in=175] (u)
      (q1) edge [bend right=50] (l)
      (q2) edge [out=55,in=165] (r)
      (q2) edge [bend right=30] (b);

  \end{tikzpicture}

    \end{small}
    \end{center}
    \vspace{-6ex}
  Its $\nbase{2}$-observation graph is depicted on the right above.
  The dashed lines indicate a bisimulation with the observation graph 
  from Fig.~\ref{fig:ograph:morse} here depicted on the left.
\end{example}

\subsection{From Observation Graphs To Zip-$k$ Specifications}

%

\begin{lemma}\label{lem:obase:zip}
  The canonical $\obase{k}$-observation graph of a stream $\astr \in \str{\aalph}$
  is finite
  if and only if
  $\astr$ can be defined by a zip-$k$ specification consisting of equations of the form:
  \begin{align*}
    \Zvar{i} = a_i \xcons \zipn{k}{\Zvar{i,1},\Zvar{i,2},\dots,\Zvar{i,k}}
  \end{align*}
\end{lemma}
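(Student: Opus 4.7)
The plan is to establish both directions via a single structural identity relating the $\obase{k}$-cobasis to the zip operator, namely that for every stream $\sigma \in \str{\aalph}$,
\[
  \sigma \;=\; \head{\sigma} \xcons \zipn{k}{\pjx{1}{k}{\sigma}, \pjx{2}{k}{\sigma}, \ldots, \pjx{k}{k}{\sigma}}.
\]
This expresses that the positions $\{1, 2, 3, \ldots\}$ decompose into the $k$ residue classes modulo $k$ (shifted by one), each picked out by exactly one of the $\pjx{j}{k}$. Verifying it is routine: position $0$ matches by definition of $\head{\sigma}$, and for $m \geq 1$ one writes $m - 1 = kn + r$ with $0 \leq r < k$ and computes that the right-hand side at position $m$ equals $\pjx{r+1}{k}{\sigma}(n) = \sigma(kn + r + 1) = \sigma(m)$.

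For direction $(\Leftarrow)$, I will assume $\astr$ is defined by a specification $\aspec$ of the given form. Every equation has the shape $\Zvar{i} = a_i \xcons \zipn{k}{\ldots}$ and thus carries a guard, so by Theorem~\ref{thm:unique:productive:cycle} the specification is productive and uniquely solvable. Setting $\astr_i := \interpret{\Zvar{i}}^{\aspec}$, the key identity applied to $\astr_i$, combined with uniqueness of the zip-decomposition of the streams, yields $\pjx{j}{k}{\astr_i} = \astr_{i,j}$ for every $1 \leq j \leq k$. Hence the finite set $\{\astr_i \where \Zvar{i} \in \mcl{X}\}$ is closed under $\pjx{1}{k}, \ldots, \pjx{k}{k}$ and contains $\astr$, so it contains the entire canonical $\obase{k}$-observation graph of $\astr$, which is therefore finite.

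For direction $(\Rightarrow)$, I will enumerate the finite carrier $\der{\obase{k}}{\astr} = \{\astr_0, \ldots, \astr_{n-1}\}$ with $\astr_0 = \astr$, introduce recursion variables $\Zvar{0}, \ldots, \Zvar{n-1}$, and (using closure of this set under the derivative operators) choose $\phi(i, j) \in \{0, \ldots, n-1\}$ so that $\pjx{j}{k}{\astr_i} = \astr_{\phi(i,j)}$. The specification $\aspec$ with root $\Zvar{0}$ and equations
\[
  \Zvar{i} \;=\; \head{\astr_i} \xcons \zipn{k}{\Zvar{\phi(i,1)}, \ldots, \Zvar{\phi(i,k)}} \qquad (0 \le i < n)
\]
has only guarded equations, and hence is uniquely solvable by Theorem~\ref{thm:unique:productive:cycle}; the valuation $\alpha(\Zvar{i}) := \astr_i$ is then a solution by the key identity, so $\interpret{\Zvar{0}}^{\aspec} = \astr_0 = \astr$. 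The only substantive step is the key identity itself; once that is in hand, both directions are bookkeeping about which derivatives correspond to which variables, with productivity and unique solvability handled uniformly by the guardedness of every equation.
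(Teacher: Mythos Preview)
Your proof is correct and follows essentially the same approach as the paper. The paper's proof is a terse one-liner that states the correspondence between an equation $\rv{Y} = a \xcons \zipn{k}{\rv{Y}_1,\ldots,\rv{Y}_k}$ and its semantics $\head{\interpret{\rv{Y}}} = a$, $\pjx{j}{k}{\interpret{\rv{Y}}} = \interpret{\rv{Y}_j}$; you spell out both directions explicitly via the equivalent identity $\sigma = \head{\sigma} \xcons \zipn{k}{\pjx{1}{k}{\sigma},\ldots,\pjx{k}{k}{\sigma}}$, together with the guardedness/productivity bookkeeping that the paper leaves implicit.
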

\begin{proof}
  For the translation forth and back, 
  it suffices to observe the correspondence 
  between an equation $\rv{Y} = a \xcons \zipn{k}{\rv{Y}_1,\ldots,\rv{Y}_k}$
  and its semantics $\head{\interpret{\rv{Y}}} = a$, 
  $\pjx{1}{k}{\interpret{\rv{Y}}} = \interpret{\rv{Y}_1}$, 
  \dots, 
  $\pjx{k}{k}{\interpret{\rv{Y}}} = \interpret{\rv{Y}_k}$.
\end{proof}

\begin{lemma}\label{lem:nbase:zip}
  The canonical $\nbase{k}$-observation graph of a stream $\astr \in \str{\aalph}$ is finite
  if and only if
  $\astr$ can be defined by a zip-$k$ specification consisting of pairs of equations of the form:
  \begin{align*}
    \Zvar{i} & = a_i \xcons \rv{X}'_i \\
    \rv{X}'_i & = \zipn{k}{\Zvar{f(i,1)},\Zvar{f(i,2)},\dots,\Zvar{f(i,k-1)},\rv{X}'_{f(i,0)}}
  \end{align*}
  over recursion variables $\mcl{X} \cup \mcl{X}'$
  where $\mcl{X} = \{ \Zvar{0},\ldots,\Zvar{n-1}\}$ 
  and $\mcl{X}' = \{\rv{X}'_i \where \Zvar{i} \in \mcl{X} \}$,
  and $f : \fin{n} \times \fin{k} \to \fin{n}$ 
  such that $a_{f(i,0)} = a_i$ for all $i\in\fin{n}$.
\end{lemma}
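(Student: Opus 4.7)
My plan is to give a direct back-and-forth between the finite canonical $\nbase{k}$-observation graph of $\astr$ and a specification of the stated form, using the $\Zvar{i}$ as names for nodes and the auxiliary $\rv{X}'_i$ as names for their tails. Throughout I shall use the identity $\sigma = \zipn{k}{\pjx{0}{k}{\sigma},\ldots,\pjx{k-1}{k}{\sigma}}$ valid for every stream, which follows from $\nth{\pjx{j}{k}{\sigma}}{n} = \nth{\sigma}{kn+j}$.

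For the ``only if'' direction, let $\sigma_0,\sigma_1,\ldots,\sigma_{n-1}$ enumerate the nodes of the canonical $\nbase{k}$\nb-observation graph with $\sigma_0 = \astr$. Put $a_i = \head{\sigma_i}$ and let $f \funin \fin{n} \setprd \fin{k} \to \fin{n}$ be the index function with $\pjx{j}{k}{\sigma_i} = \sigma_{f(i,j)}$ for $0 \le j < k$. Setting $\tau_i \defdby \tail{\sigma_i}$, the rewrite rules for $\spjx{j}{k}$ applied to $\sigma_i = a_i \xcons \tau_i$ yield
\begin{align*}
  \sigma_{f(i,0)} &= \pjx{0}{k}{\sigma_i} = a_i \xcons \pjx{k-1}{k}{\tau_i}, &
  \sigma_{f(i,j+1)} &= \pjx{j+1}{k}{\sigma_i} = \pjx{j}{k}{\tau_i} \quad (0 \le j \le k-2),
\end{align*}
from which I read off both $a_{f(i,0)} = a_i$ and $\tau_{f(i,0)} = \pjx{k-1}{k}{\tau_i}$. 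Substituting these values of $\pjx{j}{k}{\tau_i}$ into the decomposition identity for $\tau_i$ gives $\tau_i = \zipn{k}{\sigma_{f(i,1)},\sigma_{f(i,2)},\ldots,\sigma_{f(i,k-1)},\tau_{f(i,0)}}$, so the valuation $\alpha(\Zvar{i}) = \sigma_i$, $\alpha(\rv{X}'_i) = \tau_i$ solves the specification
\begin{align*}
  \Zvar{i} &= a_i \xcons \rv{X}'_i, &
  \rv{X}'_i &= \zipn{k}{\Zvar{f(i,1)},\ldots,\Zvar{f(i,k-1)},\rv{X}'_{f(i,0)}}.
\end{align*}
This specification is productive because every leftmost cycle must traverse some $\rv{X}'_i \leadsto_\ell \zipn{k}{\Zvar{f(i,1)},\ldots} \leadsto_\ell \Zvar{f(i,1)} \leadsto_\ell a_{f(i,1)} \xcons \rv{X}'_{f(i,1)}$ and so carries a guard; by Theorem~\ref{thm:unique:productive:cycle} the solution just exhibited is therefore the unique one, and in particular it defines $\astr$ at the root~$\Zvar{0}$.

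For the ``if'' direction, I start from a specification of the stated form and set $\sigma_i = \interpret{\Zvar{i}}^\aspec$ and $\tau_i = \interpret{\rv{X}'_i}^\aspec$ (the unique solution existing as before by the same leftmost-cycle argument). Applying the $\nbase{k}$ rewrite rules to the equation $\sigma_i = a_i \xcons \tau_i$ and using $\pjx{j}{k}{\zipn{k}{u_0,\ldots,u_{k-1}}} = u_j$ together with $a_{f(i,0)} = a_i$, I compute $\pjx{0}{k}{\sigma_i} = a_i \xcons \pjx{k-1}{k}{\tau_i} = a_{f(i,0)} \xcons \tau_{f(i,0)} = \sigma_{f(i,0)}$ and $\pjx{j}{k}{\sigma_i} = \pjx{j-1}{k}{\tau_i} = \sigma_{f(i,j)}$ for $1 \le j \le k-1$. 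Hence the set $\{\sigma_0,\ldots,\sigma_{n-1}\}$ is closed under all $\pjx{j}{k}$ and contains $\astr = \sigma_0$, so the canonical $\nbase{k}$-observation graph of $\astr$ is a subgraph of this finite set.

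The only delicate point is the bookkeeping in the back-and-forth — in particular the reindexing coming from the rewrite rule $\pjx{0}{k}{x \xcons \astr} \to x \xcons \pjx{k-1}{k}{\astr}$, which is exactly what forces the constraint $a_{f(i,0)} = a_i$ and, correspondingly, places the tail variable $\rv{X}'_{f(i,0)}$ at the last (rather than first) argument of the zip. Once this shift is respected, both directions are direct, and productivity of the constructed specification is automatic because every leftmost cycle is broken by a prefixing equation $\Zvar{j} = a_j \xcons \rv{X}'_j$.
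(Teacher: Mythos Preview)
Your proof is correct and follows essentially the same approach as the paper: a direct back-and-forth between nodes of the canonical $\nbase{k}$-observation graph and the recursion variables $\Zvar{i}$ (with the $\rv{X}'_i$ standing for their tails), driven by the rewrite rule $\pjx{0}{k}{x \xcons \astr} \to x \xcons \pjx{k-1}{k}{\astr}$ that forces both the constraint $a_{f(i,0)} = a_i$ and the placement of $\rv{X}'_{f(i,0)}$ in the last argument slot. The paper's proof is a one-sentence sketch of the ``if'' direction only; you have additionally spelled out the ``only if'' direction and the guarded-leftmost-cycle argument for productivity, which the paper leaves implicit.
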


\begin{proof}
  If $\rv{Y} = a : \rv{Y}'$ and $\rv{Y}' = \zipn{k}{\rv{Y}_1,\dots,\rv{Y}_{k-1},\rv{Y}'_0}$
  then $\head{\interpret{\rv{Y}}} = a$, 
  $\pjx{0}{k}{\interpret{\rv{Y}}} = a : \interpret{\rv{Y}'_0}$,
  and 
  $\pjx{i}{k}{\interpret{\rv{Y}}} = \interpret{\rv{Y}_i}$ ($1 \le i \lt k$).
  Since there also is an equation $\rv{Y}_0 = a : \rv{Y}'_0$,
  it holds that 
  $\interpret{\rv{Y}'_0} = \tail{\interpret{\rv{Y}_0}}$
  and hence $\pjx{0}{k}{\interpret{\rv{Y}}} = \interpret{\rv{Y}_0}$.
\end{proof}

\section{Automaticity and Observation Graphs}\label{sec:k-auto}
After our first main result (Theorem~\ref{thm:zipk:decide}) 
we proceed with connecting zip-$k$ specifications to $k$\nb-automatic sequences.

\subsection{Automatic Sequences}

\begin{definition}[\cite{allo:shal:2003}]\label{def:DFAO}
  A \emph{deterministic finite automaton with output (DFAO)} 
  is a tuple $\tuple{Q,\Sigma,\delta,q_0,\aalph,\lambda}$
  where 
  \begin{itemize}
    \item 
      $Q$ is a finite set of states,
    \item 
      $\Sigma$ a finite input alphabet, 
    \item 
      $\delta : Q \times \Sigma \to Q$ a transition function,
    \item 
      $q_0 \in Q$ the initial state,
    \item 
      $\aalph$ an output alphabet, and
    \item 
      $\lambda : Q \to \aalph$ an output function.
  \end{itemize}
  We extend $\delta$ to words over $\Sigma$ as follows:
  \begin{align*}
    \delta(q,\emptyword) &= q && \text{for $q \in Q$} \\
    \delta(q,wa) &= \delta(\delta(q,a),w) && \text{for $q \in Q, a \in \Sigma, w \in \Sigma^*$}
  \end{align*}
  and we write $\delta(w)$ as shorthand for $\delta(q_0,w)$.
\end{definition}

For $n,k\in \nat$, $k \ge 2$,
we use $(n)_k$ to denote the representation of $n$ with respect to the base $k$
(without leading zeros).
More precisely, for $n > 0$ we have $(n)_k = n_m n_{m-1} \ldots n_0$ where $0 \le n_m, \ldots, n_0 < k$,
$n_m > 0$ and $n = \sum_{i=0}^{m} n_i k^i$;
for $n=0$ we fix $(n)_k = \emptyword$.

\begin{definition}
  A $k$-DFAO $A$ is a DFAO~$\tuple{Q,\Sigma,\delta,q_0,\aalph,\lambda}$ 
  with input alphabet $\Sigma = \nat_{<k}$.
  For $q \in Q$, we define a stream $\gen{A}{q}$ by:
  $\gen{A}{q}(n) = \lambda(\delta(q,(n)_k))$ for every $n \in \nat$.

  We write $\genz{A}$ as shorthand for $\gen{A}{q_0}$.
  Moreover, we say that the automaton $A$ \emph{generates} the stream $\genz{A}$.
\end{definition}

\begin{definition}\label{def:auto}
  A stream $\astr : \str{\aalph}$
  is called \emph{$k$\nb-automatic}
  if there exists a $k$-DFAO that generates $\astr$.
  A stream is called \emph{automatic} if it is $k$-automatic for some $k\ge2$.
\end{definition}

The exclusion of leading zeros in 
the number representation $(n)_k$ 
is not crucial for the definition of automatic sequences.
Every $k$\nb-DFAO  can be transformed into an equivalent $k$\nb-DFAO that
ignores leading zeros:

\begin{definition}
  A $k$-DFAO $\tuple{Q,\Sigma,\delta,q_0,\aalph,\lambda}$ is called \emph{invariant under leading zeros}
  if for all $q \in Q$: $\lambda(q) = \lambda(\delta(q,0))$.
\end{definition}

\newcommand{\mycite}{\cite[Theorem 5.2.1 with Corollary 4.3.4]{allo:shal:2003}}
\begin{lemma}[\mycite]\label{lem:dfao:zeros}
  For every $k$-DFAO $\aaut$ there is a $k$-DFAO $\aaut'$
  that is invariant~under leading zeros and
  generates the same stream ($\genz{\aaut} = \genz{\aaut'}$).
\end{lemma}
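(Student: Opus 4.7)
The plan is to construct $A'$ as a product of $A$ with the output alphabet, where the second component acts as a ``label memory'' that records what $A$ would output if the read ended right now. Because the paper's convention reads digits LSD-first (so that the most-significant digits of $(n)_k$, including any leading zeros, are read \emph{last}), such a memory can be maintained: whenever a non-zero digit is read we overwrite the memory, and on a zero-transition we leave it alone. This makes invariance hold by design.

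Concretely, given $A = \tuple{Q, \fin{k}, \delta, q_0, \aalph, \lambda}$, I define $A' = \tuple{Q', \fin{k}, \delta', q_0', \aalph, \lambda'}$ by
\begin{align*}
  Q' &= Q \times \aalph, & q_0' &= (q_0, \lambda(q_0)), & \lambda'(q, v) &= v,
\end{align*}
and
\begin{align*}
  \delta'((q, v), a) &= \begin{cases}
    (\delta(q, 0),\, v) & \text{if } a = 0,\\
    (\delta(q, a),\, \lambda(\delta(q, a))) & \text{if } a \neq 0.
  \end{cases}
\end{align*}
Invariance under leading zeros is then immediate: for every $(q, v) \in Q'$, $\lambda'(\delta'((q,v), 0)) = \lambda'((\delta(q, 0), v)) = v = \lambda'((q, v))$.

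To show $\genz{A'} = \genz{A}$, I would verify by induction on $|w|$ that reading $w = w_\ell w_{\ell-1} \ldots w_0 \in \fin{k}^*$ from $q_0'$ ends in the state $(\delta(q_0, w), v_w)$, where $v_w = \lambda(\delta(q_0,\, w_{j^*}\ldots w_0))$ for $j^* \in \{0,\ldots,\ell\}$ the largest index with $w_{j^*} \neq 0$, and $v_w = \lambda(q_0)$ if no such $j^*$ exists. Specialising to $w = (n)_k$: for $n = 0$ the word is empty, and $\lambda'(q_0') = \lambda(q_0) = \genz{A}(0)$; for $n \geq 1$ the representation $(n)_k = n_m \ldots n_0$ has $n_m \neq 0$, so $j^* = m$ and $v_{(n)_k} = \lambda(\delta(q_0, (n)_k)) = \genz{A}(n)$.

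There is no deep obstacle. The only point to get right is the inductive bookkeeping of the memory $v$: the LSD-first convention ensures that the (non-zero) leading digit of any $(n)_k$ is read \emph{last}, so it overwrites $v$ with the correct final output just before the automaton commits, while any subsequent leading-zero transitions merely preserve this value. Hence $A'$ is invariant under leading zeros and satisfies $\genz{A'} = \genz{A}$.
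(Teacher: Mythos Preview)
Your construction is correct: the product $Q \times \aalph$ with the ``last non-zero memory'' in the second component does exactly what is needed, the induction goes through as you describe, and invariance under leading zeros holds by design. One small remark: the paper does not actually prove this lemma; it is stated with a citation to Allouche and Shallit (Theorem~5.2.1 together with Corollary~4.3.4), so there is no in-paper argument to compare against. Your direct product construction is a clean, self-contained alternative to invoking the cited machinery, and arguably more transparent in this LSD-first setting than the route via reversal and closure properties that the cited results take.
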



Automatic sequences can be characterized in terms of their `kernels' being finite.  
Kernels of a stream $\astr$ are sets of arithmetic subsequences of $\astr$, defined as follows.
\begin{definition}\label{def:kernel}
  The \emph{$k$-kernel} of a stream $\astr \in \str{\aalph}$
  is the set of subsequences $\{\pjx{i}{k^p}{\astr} \where p \in \nat, i < k^p\}$.
\end{definition}

{
\renewcommand{\mycite}{\cite[Theorem~6.6.2]{allo:shal:2003}}
\begin{lemma}[\mycite]\label{lem:kernel}
  A stream $\astr$ is $k$-automatic if and only if the
  $k$-kernel of $\astr$ is finite. 
\end{lemma}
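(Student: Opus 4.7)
The plan is to establish both directions by setting up explicit, inverse translations between $k$-DFAOs and $k$-kernels; recall that by definition $\pjx{i}{k^p}{\astr}(n) = \astr(k^p n + i)$. For the forward direction, I would assume $\astr$ is generated by a $k$-DFAO $A = \tuple{Q,\Sigma,\delta,q_0,\aalph,\lambda}$ which, by Lemma~\ref{lem:dfao:zeros}, may be taken invariant under leading zeros. For any $p \in \nat$ and $i < k^p$, let $w_{i,p}$ be the word obtained by writing $(i)_k$ padded on the left with zeros to length exactly $p$. The key combinatorial identity is that $(k^p n + i)_k$ coincides, up to leading zeros, with the concatenation of $(n)_k$ and $w_{i,p}$. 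Since the extension of $\delta$ in Definition~\ref{def:DFAO} processes digits from least significant first, the invariance under leading zeros yields
\[
\pjx{i}{k^p}{\astr}(n) \;=\; \astr(k^p n + i) \;=\; \lambda(\delta(\delta(q_0, w_{i,p}),(n)_k)) \;=\; \gen{A}{q_{i,p}}(n),
\]
where $q_{i,p} \defdby \delta(q_0, w_{i,p})$. Hence every element of the $k$-kernel is of the form $\gen{A}{q}$ for some $q \in Q$, bounding the kernel size by $\card{Q}$.

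For the converse I would construct a DFAO directly from the kernel. First I would verify that the $k$-kernel $K$ of $\astr$ is closed under each $\spjx{d}{k}$ with $d < k$: for $\tau = \pjx{j}{k^p}{\astr} \in K$ a short computation gives $\pjx{d}{k}{\tau} = \pjx{k^p d + j}{k^{p+1}}{\astr}$, which lies in $K$ because $k^p d + j < k^{p+1}$. Then I would define a $k$-DFAO $A$ with state set $K$, initial state $\astr$, transitions $\delta(\tau, d) \defdby \pjx{d}{k}{\tau}$, and output $\lambda(\tau) \defdby \tau(0)$. By induction on the length of $(n)_k = n_m \ldots n_0$, using once more that the $\delta$-extension reads digits LSD-first, I would show $\delta(\astr, (n)_k) = \pjx{n}{k^{m+1}}{\astr}$; applying $\lambda$ then gives $\astr(n)$, so $\genz{A} = \astr$, as required.

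The main thing to keep straight---and really the only obstacle---is the bookkeeping: the MSD-versus-LSD convention of Definition~\ref{def:DFAO}, the padding of $(i)_k$ with leading zeros, and the shift formula for composing $\spjx{d}{k}$ with $\spjx{j}{k^p}$. Apart from these indexing subtleties the argument is entirely routine once Lemma~\ref{lem:dfao:zeros} has been invoked to ensure $A$ may be taken invariant under leading zeros.
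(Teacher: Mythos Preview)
The paper does not prove this lemma at all: it is stated with a citation to \cite[Theorem~6.6.2]{allo:shal:2003} and no proof is given. Your argument is correct and is essentially the classical proof found there. The bookkeeping you flag is handled correctly: for the forward direction, invariance under leading zeros (Lemma~\ref{lem:dfao:zeros}) is exactly what absorbs the discrepancy between $(k^p n + i)_k$ and $(n)_k \cdot w_{i,p}$ at $n=0$, and for the converse the closure computation $\spjx{d}{k} \circ \spjx{j}{k^p} = \spjx{k^p d + j}{k^{p+1}}$ and the induction giving $\delta(\astr,(n)_k) = \pjx{n}{k^{m+1}}{\astr}$ are both right. One small remark: the DFAO you build in the converse is automatically invariant under leading zeros, since $\lambda(\delta(\tau,0)) = \pjx{0}{k}{\tau}(0) = \tau(0) = \lambda(\tau)$, though you do not actually need this for Definition~\ref{def:auto}.
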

}

\subsection{Observation Graphs and Automatic Sequences}

There is a close correspondence between observation graphs 
with respect to the cobasis $\nbase{k}$ and $k$-DFAOs.
For $k$-DFAOs $A$ that are invariant under leading zeros
an edge $q \to p$ labeled $i$
implies that the stream generated by $p$ is the $\spjx{i}{k}$-projection
of the stream generated by $q$, that is, $\gen{A}{p} = \pjx{i}{k}{\gen{A}{q}}$.
The following lemma treats the case of general $k$\nb-DFAOs.
 
\begin{lemma}\label{lem:dfao:pjx}
  Let $\aaut = \tuple{Q,\Sigma,\delta,q_0,\aalph,\lambda}$ be a $k$-DFAO.
  Then for every $q \in Q$ we have:
  $\tail{\gen{A}{\delta(q,0)}} = \tail{\pjx{0}{k}{\gen{A}{q}}}$
  and for all $1 \le i < k$:
  \begin{align}
    \gen{A}{\delta(q,i)} = \pjx{i}{k}{\gen{A}{q}} \label{dfao:pjx}
  \end{align}
  Hence, if $\aaut$ is invariant under leading zeros,
  then property~\eqref{dfao:pjx} holds for all $0 \le i < k$.
\end{lemma}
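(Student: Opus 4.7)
I will unfold both sides pointwise and reduce everything to a simple identity about the base\nb-$k$ representation $(\cdot)_k$ and the induced action of $\delta$. Recall the iterative clause $\delta(q,wa)=\delta(\delta(q,a),w)$, which says that the least\nb-significant digit is read first. So proving $\gen{A}{\delta(q,i)}(n)=\pjx{i}{k}{\gen{A}{q}}(n)$ amounts, by the definition of $\gen{A}{\cdot}$ and the identity $\pjx{i}{k}{\astr}(n)=\astr(kn+i)$, to the equation
\[
  \lambda\bigl(\delta(\delta(q,i),(n)_k)\bigr) \;=\; \lambda\bigl(\delta(q,(kn+i)_k)\bigr).
\]
Hence the whole lemma rests on comparing $(n)_k\cdot i$ with $(kn+i)_k$.

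\textbf{Main identity on representations.} I would first observe that whenever $kn+i>0$, one has $(kn+i)_k=(n)_k\cdot i$: if $n>0$, this is just the definition of base\nb-$k$ expansion (append $i$ as new least\nb-significant digit, no new leading zero is created); and if $n=0$ and $0<i<k$, then $(kn+i)_k=(i)_k=i=\emptyword\cdot i=(0)_k\cdot i$. Using $\delta(q,wa)=\delta(\delta(q,a),w)$ with $w=(n)_k$ and $a=i$ then yields $\delta(q,(kn+i)_k)=\delta(\delta(q,i),(n)_k)$, so the displayed equation holds for every $n$ when $1\le i<k$, and for $i=0$ whenever $n\ge 1$. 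This already proves $\gen{A}{\delta(q,i)}=\pjx{i}{k}{\gen{A}{q}}$ for $1\le i<k$.

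\textbf{The edge case $i=0,\ n=0$.} The only point where the argument breaks down is $i=n=0$: then $(kn+i)_k=(0)_k=\emptyword$, so the left\nb-hand side gives $\lambda(\delta(q,\emptyword))=\lambda(q)$, whereas the right\nb-hand side gives $\lambda(\delta(q,0))$. This is exactly the obstruction that the convention $(0)_k=\emptyword$ introduces, and it is the reason the $i=0$ case of the lemma is stated only for tails in general. Indeed, for the tail statement we compute $\tail{\pjx{0}{k}{\gen{A}{q}}}(n)=\gen{A}{q}(k(n{+}1))$; since $k(n{+}1)\ge k>0$, the identity $(k(n{+}1))_k=(n{+}1)_k\cdot 0$ applies and the previous argument gives $\tail{\pjx{0}{k}{\gen{A}{q}}}(n)=\tail{\gen{A}{\delta(q,0)}}(n)$ for all $n$. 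Finally, if $\aaut$ is invariant under leading zeros, then by definition $\lambda(q)=\lambda(\delta(q,0))$, which is precisely what is needed to close the remaining $n=0$ gap and upgrade the tail equality to a full stream equality for $i=0$.

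\textbf{Expected obstacle.} Nothing is deep here; the only subtlety is bookkeeping around the convention $(0)_k=\emptyword$ and the fact that $\delta$ processes digits least\nb-significant\nb-first, which must be kept in sync when peeling off the last digit of $(kn+i)_k$. Once that is handled cleanly, the three assertions of the lemma fall out uniformly from the single identity above.
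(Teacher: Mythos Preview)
Your proof is correct and follows exactly the same approach as the paper: the paper's proof consists of the single observation that $(kn+i)_k = (n)_k\, i$ for all $n\in\nat$ and $0\le i<k$ with $n\ne 0$ or $i\ne 0$, and you have carefully unpacked precisely this identity together with the one exceptional case $n=i=0$. Your treatment of the edge case and the role of invariance under leading zeros is spot on.
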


\begin{proof}
  Follows immediately from
  $(kn + i)_k = (n)_k i$
  for all $n \in \nat$ and $0 \le i < k$ such that $n \ne 0$ or $i \ne 0$.
\end{proof}

As a consequence of Lemma~\ref{lem:dfao:pjx} we have that $k$-DFAOs, that are invariant under leading zeros, 
are $\nbase{k}$-observation graphs
for the streams they define, and vice versa.
Formally, this is just a simple change of notation%
\footnote{Note that even this small change of notation can be avoided by introducing $k$-DFAOs as
          coalgebras over the functor $\afunct{X} = \aalph \times X^k$ as well.}:
\begin{definition}
  Let $\aaut = \tuple{Q,\Sigma,\delta,q_0,\aalph,\lambda}$
  be a $k$-DFAO that is invariant under leading zeros.
  We define
  the $\nbase{k}$-observation graph $\dfaoograph{\aaut} = \pair{Q}{\pair{o}{n}}$
  with root $q_0$
  where for every $q \in Q$:
  $o(q) = \lambda(q)$,
  $n_i(q) = \delta(q,i)$ for $i < k$, and
  $\interpret{q} = \gen{A}{q}$.

  Let $\aograph = \pair{S}{\pair{o}{n}}$ be an $\nbase{k}$-observation graph 
  over $\aalph$ with root $r\in S$.
  Then we define a $k$-DFAO $\ographdfao{\aograph}$
  as follows: $\ographdfao{\aograph} = \tuple{Q,\nat_{<k},\delta,q_0,\aalph,\lambda}$
  where $Q = S$, $q_0 = r$, and for every $s \in S$:
  $\lambda(s) = o(s)$, and $\delta(s,i) = n_i(s)$ for $i < k$.
\end{definition}

\begin{proposition}
  For every $k$-DFAO $\aaut$ that is invariant under leading zeros,
  the $\nbase{k}$-observation graph $\dfaoograph{\aaut}$
  defines the stream that is generated by $\aaut$.
  
  Conversely, we have for every $\nbase{k}$-observation graph~$\aograph$,
  that the $k$-DFAO $\ographdfao{\aograph}$ is invariant under zeros and
  generates the stream defined by $\aograph$.
\end{proposition}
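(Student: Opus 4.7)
The plan is to verify both directions by exhibiting the required coalgebra homomorphism. For the forward direction, given $\aaut = \tuple{Q,\Sigma,\delta,q_0,\aalph,\lambda}$ invariant under leading zeros, I would set $\interpret{q} := \gen{\aaut}{q}$ and check that $\sinterpret$ makes $\dfaoograph{\aaut}$ into an $\nbase{k}$\nb-observation graph as in Definition~\ref{def:ograph}. Commutation on the output amounts to $\lambda(q) = \head{\gen{\aaut}{q}}$, which is immediate from $(0)_k = \emptyword$ and $\delta(q,\emptyword) = q$. Commutation on the $i$-th successor amounts to $\gen{\aaut}{\delta(q,i)} = \pjx{i}{k}{\gen{\aaut}{q}}$ for every $0 \le i < k$, which is exactly the form of Lemma~\ref{lem:dfao:pjx} that becomes available once $\aaut$ is invariant under leading zeros. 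The stream defined by $\dfaoograph{\aaut}$ is then $\interpret{q_0} = \gen{\aaut}{q_0} = \genz{\aaut}$.

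For the converse, given an $\nbase{k}$\nb-observation graph $\aograph = \pair{S}{\pair{o}{n}}$ with root $r$ and interpretation $\sinterpret$, I first verify that $\ographdfao{\aograph}$ is invariant under leading zeros. For any $s \in S$ one has $\lambda(\delta(s,0)) = o(n_0(s))$ by construction. Using that $\sinterpret$ is a coalgebra homomorphism, $o(n_0(s)) = \head{\interpret{n_0(s)}} = \head{\pjx{0}{k}{\interpret{s}}}$, and since $\pjx{0}{k}$ preserves the head entry ($\pjx{0}{k}{a \xcons \astr} = a \xcons \pjx{k-1}{k}{\astr}$), this equals $\head{\interpret{s}} = o(s) = \lambda(s)$, as desired.

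With invariance under leading zeros secured, I would conclude by showing $\genz{\ographdfao{\aograph}} = \interpret{r}$. By Lemma~\ref{lem:dfao:pjx} applied to $\ographdfao{\aograph}$, the assignment $s \mapsto \gen{\ographdfao{\aograph}}{s}$ is itself an $\nbase{k}$\nb-coalgebra homomorphism from $\pair{S}{\pair{o}{n}}$ to $\pair{\str{\aalph}}{\nbase{k}}$. Since $\sinterpret$ is the unique such homomorphism on $\aograph$ by Lemma~\ref{lem:interpret:unique}, the two maps coincide, and in particular $\genz{\ographdfao{\aograph}} = \gen{\ographdfao{\aograph}}{r} = \interpret{r}$. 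The main subtlety is that $\pair{\str{\aalph}}{\nbase{k}}$ is not final over all $\safunct$\nb-coalgebras, so one cannot simply appeal to a general finality argument; the invariance-under-leading-zeros hypothesis is precisely what unlocks the $i = 0$ case of Lemma~\ref{lem:dfao:pjx} and makes the uniqueness argument go through symmetrically in both directions.
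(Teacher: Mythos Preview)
Your proposal is correct and aligns with the paper's intent. The paper does not give an explicit proof of this proposition; it presents it as an immediate consequence of Lemma~\ref{lem:dfao:pjx} and the definitions (the preceding text calls the passage between DFAOs and observation graphs ``just a simple change of notation''). Your argument makes this explicit: in the forward direction you verify that $q \mapsto \gen{\aaut}{q}$ is the required $\safunct$\nb-homomorphism via Lemma~\ref{lem:dfao:pjx}, and in the converse you first derive invariance under leading zeros from the fact that $\spjx{0}{k}$ preserves heads, then use Lemma~\ref{lem:dfao:pjx} again together with the uniqueness result of Lemma~\ref{lem:interpret:unique} to identify $s \mapsto \gen{\ographdfao{\aograph}}{s}$ with the given~$\sinterpret$. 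This is exactly the reasoning the paper leaves implicit, and your observation that uniqueness (rather than finality of $\pair{\str{\aalph}}{\nbase{k}}$, which fails in general) is the right tool here is on point.
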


Another way to see the correspondence between automatic sequences
and their finite, canonical $\nbase{k}$-observation graphs is as follows.
The elements of the canonical observation graph of a stream $\astr$, that is, 
the set of $\{\spjx{0}{k},\ldots,\spjx{k-1}{k}\}$-derivatives of $\astr$,
coincide with the elements of the $k$-kernel of $\astr$.
This is used in the proof of the following theorem.

\begin{proposition}\label{thm:auto:nk}
  For streams $\astr \in \str{\aalph}$ the following properties are equivalent:
  \begin{enumerate}
    \item The stream $\astr$ is $k$-automatic.
    \item The canonical $\nbase{k}$-observation graph of $\astr$ is finite. 
  \end{enumerate}
\end{proposition}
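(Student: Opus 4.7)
The plan is to reduce the statement directly to Lemma~\ref{lem:kernel}, which characterizes $k$\nb-automaticity by finiteness of the $k$\nb-kernel, by showing that as subsets of $\str{\aalph}$ the node set $\der{\nbase{k}}{\astr}$ of the canonical $\nbase{k}$\nb-observation graph of~$\astr$ coincides with the $k$\nb-kernel $\{\pjx{i}{k^p}{\astr} \mid p \in \nat,\, 0 \le i < k^p\}$ of~$\astr$. Recall that $\der{\nbase{k}}{\astr}$ is the least subset of $\str{\aalph}$ containing~$\astr$ that is closed under the operations $\spjx{0}{k},\ldots,\spjx{k-1}{k}$, so the inclusion that every iterated $\nbase{k}$\nb-derivative lies in the $k$\nb-kernel, and conversely, is the heart of the argument.

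The key technical step is the composition identity
\[
\pjx{i_1}{k}\bigl(\pjx{i_2}{k}(\cdots \pjx{i_p}{k}(\astr) \cdots)\bigr) \;=\; \pjx{j}{k^p}{\astr}, \qquad j \;=\; \sum_{\ell=1}^{p} k^{p-\ell}\, i_\ell,
\]
valid for all $p \in \nat$ and $i_1,\ldots,i_p \in \fin{k}$, with the base case $p=0$ reading $\pjx{0}{1}{\astr} = \astr$. I would prove this by induction on~$p$ using the basic equation $\pjx{i}{k}{\astr}(n) = \astr(kn+i)$; the inductive step amounts to the computation $k^{p-1}(kn + i_1) + j' = k^p n + (k^{p-1} i_1 + j')$. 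Since every $j \in \fin{k^p}$ has a unique base-$k$ representation of length $p$ with digits in $\fin{k}$ (allowing leading zeros), the assignment $(i_1,\ldots,i_p) \mapsto j$ is a bijection $\fin{k}^p \to \fin{k^p}$. Together with the composition identity, this shows that every element of the $k$\nb-kernel of~$\astr$ is an iterated $\nbase{k}$\nb-derivative of~$\astr$, and conversely.

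Combining these observations yields that $\der{\nbase{k}}{\astr}$ equals the $k$\nb-kernel of~$\astr$ as a set of streams. Hence the canonical $\nbase{k}$\nb-observation graph of~$\astr$ is finite iff the $k$\nb-kernel of~$\astr$ is finite, which by Lemma~\ref{lem:kernel} is equivalent to $k$\nb-automaticity of~$\astr$. I do not foresee any real obstacle: the only content is the composition identity, which is a routine induction amounting to bookkeeping with base-$k$ digits, as foreshadowed in the paragraph immediately preceding the proposition.
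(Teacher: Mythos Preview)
Your proposal is correct and follows essentially the same approach as the paper: both reduce to Lemma~\ref{lem:kernel} via the observation that the $\nbase{k}$\nb-derivatives of~$\astr$ coincide with the $k$\nb-kernel, which amounts to the fact that compositions of $\spjx{0}{k},\ldots,\spjx{k-1}{k}$ are exactly the functions $\spjx{i}{k^p}$ with $i<k^p$. The paper states this coincidence of function sets tersely, whereas you spell out the composition identity and the base-$k$ digit bijection explicitly, but the argument is the same.
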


\begin{proof}
  The equivalence of (i) and (ii) is a consequence of Lemma~\ref{lem:kernel}
  in combination with the observation that the set of functions
  $\{\spjx{i}{k^p} \where p \in \nat, i < k^p\}$
  coincides with the set of functions obtained from 
  arbitrary iterations of functions $\spjx{0}{k}$, \ldots, $\spjx{k-1}{k}$
  (that is, function compositions $\gamma_1 \cdot \ldots \cdot \gamma_n$ with $n\in\nat$ and $\gamma_i \in \{\spjx{0}{k},\ldots,\spjx{k-1}{k}\}$).
\end{proof}

Proposition~\ref{thm:auto:nk} gives a coalgebraic perspective on automatic sequences.
Moreover, it frequently allows for simpler proofs or disproofs
of automaticity than existing characterizations.
For example, in the following sections we will
derive observation graphs for streams that are specified by zip-specifications.
Then it is easier to stepwise iterate the finite
set of functions $\{\spjx{0}{k},\ldots,\spjx{k-1}{k}\}$
than to reason about infinitely many subsequences in the kernel
$\{\pjx{i}{k^p}{\astr} \where p \in \nat, i < k^p\}$.

Proposition~\ref{thm:auto:nk} was independently found by Kupke and Rutten, 
see Theorem~8 in their recent report~\cite{kupk:rutt:2011}.

We arrive at our second main result:
\begin{theorem}\label{thm:zip:auto}
  For streams $\astr \in \str{\aalph}$ the following properties are equivalent:
  \begin{enumerate}
    \item\label{i:auto} The stream $\astr$ is $k$-automatic.
    \item\label{i:zip} The stream $\astr$ can be defined by a zip-$k$ specification.
    \item\label{i:ngraph} The canonical $\nbase{k}$-observation graph of $\astr$ is finite. 
    \item\label{i:ograph} The canonical $\obase{k}$-observation graph of $\astr$ is finite. 
  \end{enumerate}
\end{theorem}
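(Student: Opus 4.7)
The plan is to prove the four conditions equivalent by leveraging Proposition~\ref{thm:auto:nk}, which already gives (i) $\iff$ (iii), and bridging to (ii) and (iv) through Lemmas~\ref{lem:nbase:zip} and~\ref{lem:obase:zip}. Two edges require additional work: (ii) $\implies$ (iii), since the zip-$k$ specifications produced by Lemma~\ref{lem:nbase:zip} have a more restricted form than arbitrary zip-$k$ specifications; and (i) $\implies$ (iv), since the $\obase{k}$-cobasis operations $\spjx{1}{k},\ldots,\spjx{k}{k}$ do not exactly match the $k$-kernel operations $\spjx{0}{k},\ldots,\spjx{k-1}{k}$.

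For (ii) $\implies$ (iii), I would invoke Lemma~\ref{lem:zipk:ngraph} to construct from a zip-$k$ specification $\aspec$ defining $\astr$ a finite $\nbase{k}$-observation graph $\aograph$ that defines $\astr$. The unique interpretation map from $\aograph$ into $\pair{\str{\aalph}}{\nbase{k}}$ has finite image, which is a sub-coalgebra of $\pair{\str{\aalph}}{\nbase{k}}$ containing $\astr$. By Definition~\ref{def:ograph}, the canonical $\nbase{k}$-observation graph of $\astr$ is the least such sub-coalgebra, so it embeds into this finite image and is therefore itself finite. The reverse direction (iii) $\implies$ (ii) is a direct application of Lemma~\ref{lem:nbase:zip}, and (iv) $\implies$ (ii) follows immediately from Lemma~\ref{lem:obase:zip}, since the specific format produced there is a fortiori a zip-$k$ specification.

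For (i) $\implies$ (iv), the plan is to use Lemma~\ref{lem:kernel} to obtain that the $k$-kernel $K$ of $\astr$ is finite, and then to show each $\obase{k}$-derivative of $\astr$ is a short tail-shift of a kernel element. A direct computation gives $\spjx{i_1}{k} \circ \cdots \circ \spjx{i_p}{k}(\astr) = \pjx{c}{k^p}{\astr}$ where $c = \sum_{j=1}^{p} i_j k^{p-j}$ with $i_j \in \{1,\ldots,k\}$. Since each $i_j \le k$, one checks $c < 2 k^p$, so writing $c = q k^p + r$ with $0 \le r < k^p$ forces $q \in \{0,1\}$, whence $\pjx{c}{k^p}{\astr} = \stail^{q}(\pjx{r}{k^p}{\astr})$. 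Consequently, the $\obase{k}$-derivatives of $\astr$ embed into $\{\stail^q \bstr \where \bstr \in K, q \in \{0,1\}\}$, a set of cardinality at most $2|K|$.

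The main obstacle is the combinatorial step in (i) $\implies$ (iv): the indices $c$ arising from $\obase{k}$-iterations use bijective base-$k$ digits $\{1,\ldots,k\}$ and can exceed $k^p$, so they are not literally kernel offsets. Bounding this excess uniformly by a single tail-shift is the key that reduces $\obase{k}$-derivatives to finitely many shifts of kernel elements; with this in hand, all four implications close into the equivalence.
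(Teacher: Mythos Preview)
Your proposal is correct. The implications \ref{i:auto}~$\Leftrightarrow$~\ref{i:ngraph}, \ref{i:ngraph}~$\Rightarrow$~\ref{i:zip}, \ref{i:zip}~$\Rightarrow$~\ref{i:ngraph}, and \ref{i:ograph}~$\Rightarrow$~\ref{i:zip} match the paper's proof essentially verbatim (the paper also invokes Lemma~\ref{lem:zipk:ngraph} for \ref{i:zip}~$\Rightarrow$~\ref{i:ngraph}; your extra remark that the canonical graph is the image of a finite graph under $\sinterpret$ just makes explicit what the paper leaves implicit).

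The one genuine difference is how you reach~\ref{i:ograph}. The paper proves \ref{i:ngraph}~$\Rightarrow$~\ref{i:ograph} coalgebraically: starting from a finite $\nbase{k}$\nb-observation graph $\pair{S}{\pair{o}{n}}$, it builds an $\obase{k}$\nb-observation graph on the state set $S \cup \{\utail{s} \mid s \in S\}$ with explicitly defined transitions. You instead prove \ref{i:auto}~$\Rightarrow$~\ref{i:ograph} by an index computation: compositions of $\spjx{1}{k},\ldots,\spjx{k}{k}$ yield $\spjx{c}{k^p}$ with $c$ in bijective base~$k$, and the bound $c < 2k^p$ shows every $\obase{k}$\nb-derivative is either a $k$\nb-kernel element or the tail of one. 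The two arguments are morally the same---both hinge on the fact that the $\obase{k}$\nb-offsets exceed the $\nbase{k}$\nb-offsets by at most one tail-shift, which is why the paper doubles its state set---but yours is more elementary and reuses Lemma~\ref{lem:kernel}, while the paper's is self-contained at the level of observation graphs and does not revisit the kernel.
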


\begin{proof}
  We have that $\ref{i:auto} \Leftrightarrow \ref{i:ngraph}$ by Theorem~\ref{thm:auto:nk},
  $\ref{i:ngraph} \Rightarrow \ref{i:zip}$ by Lemma~\ref{lem:nbase:zip},
  and 
  $\ref{i:zip} \Rightarrow \ref{i:ngraph}$ by Lemma~\ref{lem:zipk:ngraph}.
  Moreover,
  it holds that $\ref{i:ograph} \Rightarrow \ref{i:zip}$ by Lemma~\ref{lem:obase:zip}.

  Finally, we show $\ref{i:ngraph} \Rightarrow \ref{i:ograph}$.
  Assume $\aograph = \pair{S}{\pair{o}{n}}$ 
  is a finite $\nbase{k}$\nb-observa{-}tion graph with root $r$ defining $\astr$
  and let $\sinterpret_{\aograph} \funin S \to \str{\aalph}$ be the unique
  $F$\nb-homomorphism into $\pair{\str{\aalph}}{\nbase{k}}$.
  Let $n = \tuple{n_1,\dots,n_k}$.
  Then $o(s) = \head{\interpret{s}_{\aograph}}$ 
  and $\interpret{n_i(s)}_{\aograph} = \pjx{i-1}{k}{\interpret{s}_{\aograph}}$ for all $1 \le i \le k$ and $s \in S$.
  We define $\aograph' = \pair{S'}{\pair{o'}{n'}}$ 
  where $S' = S \cup \{ \utail{s} \where s \in S \}$, 
  $o'(s) = o(s)$, 
  $o'(\utail{s}) = o(n_2(s))$
  $n'_i(s) = n_{i+1}(s)$ for $1 \le i \lt k$,
  $n'_k(s) = \utail{n_1(s)}$,
  $n'_i(\utail{s}) = n_{i+2}(s)$ for $1 \le i \le k-2$,
  $n'_{k-1}(\utail{s}) = \utail{n_1(s)}$ and
  $n'_{k}(\utail{s}) = \utail{n_2(s)}$
  with root $r \in S'$.
  Let $\sinterpret_{\aograph'} \funin S' \to \str{\aalph}$ be defined by
  $\interpret{s}_{\aograph'} = \interpret{s}_{\aograph}$ and
  $\interpret{\utail{s}}_{\aograph'} = \tail{\interpret{s}_{\aograph}}$.
  It can be checked that $\sinterpret_{\aograph'}$
  is an $\safunct$-homomorphism into $\pair{\str{\aalph}}{\obase{k}}$
  with $\astr = \interpret{r}_{\aograph'}$.
  Hence $\aograph'$ is an $\obase{k}$\nb-observation graph defining~$\astr$.
%
\end{proof}


\section{A Dynamic Logic Representation \\ of Automatic Sequences}\label{sec:dynamic}
This section connects automatic sequences with expressivity in a \emph{propositional dynamic logic} (PDL)
derived from the cobases $\nbase{k}$ and $\obase{k}$.
For simplicity, we shall restrict attention to the case of
$\Delta = \set{0,1}$ and $\nbase{2} = \tuple{\shead,\seven,\sodd}$. 

The set of \emph{sentences} $\phi$ and \emph{programs} $\pi$ of 
our version of PDL is given by the following BNF grammar:
\begin{align*}
  \phi \BNFis 0 \BNFor 1 \BNFor \nott\phi \BNFor \phi\andd\phi \BNFor [\pi]\phi\\
  \pi \BNFis \seven \BNFor \sodd \BNFor \pi;\pi \BNFor \pi \sqcup \pi \BNFor \pi^{\ast}
\end{align*}
We interpret PDL in an arbitrary $F$-coalgebra $\GG = \tuple{S, \tuple{o,n}}$.
Actually, we can be more liberal and interpret PDL in \emph{models} of the form
\[ 
  \GG = \tuple{S, 0, 1, \seven,\sodd}
\]
where $0 \subseteq S$, $1\subseteq S$, $\seven\subseteq S^2$, and $\sodd\subseteq S^2$.
These are more general than $F$-coalgebras because we do not insist that
$0\cap 1 = \emptyset$, or that $\seven$ and $\sodd$ be interpreted as functions.
Nevertheless, these extra properties do hold in the intended model
\[
  \tuple{\Delta^\omega, 0,1,\seven,\sodd}
\]
where $0$ is the set of streams whose head is the number $0$,
and similarly for $1$; $(\sigma,\tau)\in \seven$ iff $\tau = (\sigma_0,\sigma_2,\sigma_4,\ldots)$,
and similarly for $\sodd$.

The interpretation of each sentence $\phi$ is a subset of $S$;
 the interpretation of each program $\pi$ is a relation on $S$, that is, a subset of $S\times S$.
The definition is as usual for PDL:
\begin{gather*}
\begin{aligned}
  \semantics{0} & = \set{x\in S: x\in 0}
  & \semantics{\seven} & = \seven \\
  \semantics{1} & = \set{x\in S: x\in 1}
  & \semantics{\sodd} & = \sodd \\
  \semantics{\phi\andd\psi} & = \semantics{\phi}\cap\semantics{\psi} 
  & \semantics{\pi_1;\pi_2} & = \semantics{\pi_1};\semantics{\pi_2} \\
  \semantics{\nott\phi} & = S\setminus \semantics{\phi} 
  & \semantics{\pi_1\sqcup \pi_2} & = \semantics{\pi_1}\cup \semantics{\pi_2} \\
  &
  & \semantics{\pi^*} & = \semantics{\pi}^*
\end{aligned} \\
  \semantics{[\pi]\phi} = \set{x : (\forall y)(\pair{x}{y}\in \semantics{\pi} \rightarrow y\in\semantics{\phi})}
\end{gather*}
In words, we interpret $\seven$ and $\sodd$ by themselves
that correspond in the given model.
We interpret $;$ by relational composition,
$\sqcup$ by union of relations, 
${}^*$ by Kleene star (= reflexive-transitive closure) 
of relations, and we use the usual boolean operations and
dynamic modality $[\pi]\phi$.

We use the standard boolean abbreviations for $\phi\iif\psi$ and $\phi\iiff\psi$,
and of course we use the standard semantics.
We also write $\dmd{\pi}\phi$ for $\nott[\pi]\nott\phi$;
again this is standard.

For example, let $\chi$ be the sentence $[(\seven\sqcup\sodd)^*](0\iiff \nott 1)$.
Then in any model $\GG$, a point $x$ has $x\models\chi$ iff
for all points $y$ reachable from $x$ in zero or more steps in the relation
$\seven\cup\sodd$, $y$ satisfies exactly one of $0$ or $1$.

\begin{proposition}\label{prop-preservation}
  If $f: M\to N$ is a morphism of models 
  and $x\models \phi$ in $M$, then $f(x)\models\phi$ in $N$.
\end{proposition}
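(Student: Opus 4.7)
The plan is to prove this by a simultaneous induction on the structures of sentences $\phi$ and programs $\pi$, since the two grammars are mutually recursive through the clause $[\pi]\phi$. Because $[\pi]\phi$ is a \emph{universal} modality, the natural notion of morphism here must be a \emph{bounded morphism} (a.k.a.\ p-morphism): a map $f \funin S_M \to S_N$ that (a) preserves the unary predicates in both directions ($x \in 0_M \iff f(x) \in 0_N$, analogously for $1$), (b) satisfies the forth condition $(x,y)\in R_M \implies (f(x),f(y)) \in R_N$ for $R\in\{\seven,\sodd\}$, and (c) satisfies the back condition: whenever $(f(x),z) \in R_N$ there exists $y \in S_M$ with $(x,y)\in R_M$ and $f(y) = z$. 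I will take this as the meaning of ``morphism of models'' that is implicit in the statement.

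The induction will simultaneously establish two claims:
\begin{enumerate}
\item For every sentence $\phi$ and every $x \in S_M$: $x \models \phi$ in $M$ iff $f(x) \models \phi$ in $N$.
\item For every program $\pi$: the relation $\semantics{\pi}$ satisfies the forth and back conditions with respect to $f$, i.e.\ $(x,y)\in\semantics{\pi}_M \Rightarrow (f(x),f(y))\in\semantics{\pi}_N$, and $(f(x),z)\in\semantics{\pi}_N \Rightarrow \exists y.\,(x,y)\in\semantics{\pi}_M \wedge f(y) = z$.
\end{enumerate}
The base cases $0$, $1$ follow from (a); atomic programs $\seven$ and $\sodd$ follow from (b) and (c). The boolean cases $\neg\phi$, $\phi\andd\psi$ need the iff version of (1), which is why I set up the stronger form. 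The program cases $\pi_1;\pi_2$ and $\pi_1\sqcup\pi_2$ are routine closure properties of forth/back under composition and union. For the key clause $[\pi]\phi$, suppose $x \models [\pi]\phi$ in $M$; given any $z$ with $(f(x),z) \in \semantics{\pi}_N$, the back condition (from the program IH) yields $y$ with $(x,y)\in\semantics{\pi}_M$ and $f(y)=z$; then $y \models \phi$ in $M$, so by the formula IH $z = f(y) \models \phi$ in $N$. The converse direction uses the forth condition symmetrically.

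The one case that is not entirely mechanical is $\pi^*$. Forth for $\pi^*$ is easy by induction on the length of the witnessing $\semantics{\pi}$-path. Back for $\pi^*$ is the main obstacle: given $(f(x),z) \in \semantics{\pi^*}_N$, pick a witnessing chain $f(x) = z_0, z_1, \ldots, z_n = z$ with consecutive pairs in $\semantics{\pi}_N$, and lift it stepwise to a chain $x = y_0, y_1, \ldots, y_n$ in $M$ with $f(y_i) = z_i$, using the back condition for $\pi$ (from the program IH) at each step. The resulting $y_n$ satisfies $f(y_n) = z$ and $(x,y_n) \in \semantics{\pi^*}_M$, completing the induction. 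The weaker one-directional statement of the proposition then follows immediately from clause~(1).
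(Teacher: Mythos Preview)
The paper states this proposition without proof, treating it as a standard result from modal logic. Your argument is correct and is exactly the standard proof that bounded morphisms (p\nbd morphisms) preserve PDL formulas: mutual induction on sentences and programs, strengthening the sentence claim to an iff so that negation goes through, and establishing forth/back for the program constructors so that the box modality is handled via the back condition. The $\pi^*$ case by stepwise lifting of a finite witnessing path is the usual treatment.

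One remark worth making explicit for the reader: your identification of ``morphism of models'' with bounded morphism is not spelled out in the paper, but it is forced by the statement (preservation under $\nott$ and $[\pi]$ fails for mere relation-preserving maps). In the paper's actual applications the source and target are $F$\nbd coalgebras, where $\seven$ and $\sodd$ are total functions and the morphism is a coalgebra homomorphism; in that setting the back condition is automatic, since $z = \seven_N(f(x)) = f(\seven_M(x))$ already names the required preimage. So your more general setup subsumes the use the paper makes of the proposition.
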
 

\begin{proposition}
  For every finite pointed model $\tuple{\GG, x}$ there is a sentence
  $\phi_x$ of PDL so that for all (finite or infinite) $F$-coalgebras  $\tuple{\HH, y}$,
  the following are equivalent:
  \begin{enumerate}
    \item  $y\models \phi_x$ in $\HH$.
    \item There is a bisimulation between $\GG$ and $\HH$ which relates $x$ to $y$.
  \end{enumerate}
  We call $\phi_x$ the \emph{characterizing sentence of $x$}.
  \label{prop-characterization}
\end{proposition}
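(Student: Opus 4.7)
The plan is the standard modal-logic characterization adapted to PDL, exploiting the Kleene star to compress all local bisimulation-game constraints into a single finite sentence. First, I would pass to the bisimulation quotient of $\GG$: the quotient map is itself a bisimulation, so a bisimulation between $\GG$ and $\HH$ relating $x$ to $y$ exists if and only if one exists between the quotient and $\HH$ relating the class of $x$ to $y$. Thus, without loss of generality the states $z_1,\dots,z_n$ of $\GG$ (with $x=z_1$) are pairwise non-bisimilar. For each $i$ I would then construct a \emph{separating formula} $\sigma_i$ of the basic modal fragment of PDL (no ${}^*$) such that, inside $\GG$, $z_j \models \sigma_i$ iff $j=i$. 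This is the usual finite-model Hennessy--Milner construction: the partition of the state set of $\GG$ by modal types of depth $k$ refines to the singleton partition at some $k=N$, and each $\sigma_i$ is then extracted as a Boolean combination of depth-$N$ atomic types over $\{0, 1, \bx{\seven}, \bx{\sodd}\}$.

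Next, let $\ell_i$ be the conjunction recording which of $0, 1$ hold at $z_i$ in $\GG$ (for example $\ell_i = 0 \andd \nott 1$ when $z_i \in 0$ and $z_i \notin 1$), and let $E_i = \{j : (z_i, z_j) \in \seven \text{ in } \GG\}$ with $O_i$ defined analogously for $\sodd$. Define
\[
  \chi_i \defdby \ell_i \andd \bigwedge_{j\in E_i} \dmd{\seven}\sigma_j \andd \bx{\seven}\bigvee_{j\in E_i}\sigma_j \andd \bigwedge_{j\in O_i} \dmd{\sodd}\sigma_j \andd \bx{\sodd}\bigvee_{j\in O_i}\sigma_j ,
\]
and set
\[
  \phi_x \defdby \sigma_x \andd \bx{(\seven \sqcup \sodd)^*}\,\bigwedge_{i=1}^{n}\bigl(\sigma_i \iif \chi_i\bigr) .
\]
The first conjunct pins down the starting state; the iterated-modality conjunct broadcasts the local description of each bisimulation class to every reachable state.

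For the verification: in the $(\Leftarrow)$ direction, if $R$ is a bisimulation with $(x,y)\in R$, then bisimilar states satisfy the same PDL sentences (standard induction on formulas and programs, using Proposition~\ref{prop-preservation}), so $y \models \sigma_x$; and for any $w$ reachable from $y$ in $\HH$ with $w \models \sigma_i$, iterating the back clause of $R$ along a path from $y$ to $w$ produces some $z$ in $\GG$ with $(z,w)\in R$, which the quotient forces to equal $z_i$, so $w \models \chi_i$ transfers from the construction of $\chi_i$ at $z_i$. In the $(\Rightarrow)$ direction, assume $y \models \phi_x$ in $\HH$ and set
\[
  R \defdby \{ (z_i, w) : w \text{ is a state of } \HH \text{ reachable from } y, \ w \models \sigma_i \} .
\]
Then $(x,y)\in R$ by the first conjunct, and the four bisimulation clauses at every $(z_i,w) \in R$ read off directly from the subformulas of $\chi_i$ evaluated at $w$: label agreement from $\ell_i$; the forth clauses from the $\dmd{\seven}\sigma_j$ and $\dmd{\sodd}\sigma_j$ conjuncts; the back clauses from the $\bx{\seven}\bigvee_{j\in E_i}\sigma_j$ and $\bx{\sodd}\bigvee_{j\in O_i}\sigma_j$ conjuncts; and $R$ is closed under successors because reachability is.

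The main obstacle is the construction of the $\sigma_i$. It rests on the finite-model Hennessy--Milner theorem, which needs some care because $\GG$ is a general model and not an $F$-coalgebra: its $\seven, \sodd$ relations may be multi-valued and $0, 1$ may overlap. The partition-refinement argument nevertheless goes through, because it uses only Boolean connectives and the modalities $\bx{\seven}, \bx{\sodd}$, which retain their standard semantics in $\GG$. Once the $\sigma_i$ are in hand, the remaining work is routine bookkeeping with bisimulations and their preservation by PDL.
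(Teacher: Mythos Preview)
Your proposal is correct and follows essentially the same approach as the paper: build Hennessy--Milner--style separating formulas for the (finitely many) bisimulation classes of $\GG$, then wrap the local one-step descriptions under a $[(\seven\sqcup\sodd)^*]$ modality to propagate them to all reachable states. The only cosmetic differences are that you pass to the bisimulation quotient explicitly at the outset (the paper instead works with the canonical height-$h^*$ formulas $\phi^{h^*}_{M,a}$, which already agree on bisimilar states), and you include the atomic label $\ell_i$ in $\chi_i$ whereas the paper leaves it implicit in the antecedent; neither affects the argument.
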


For infinitary modal logic, this result was shown in~\cite{barw:moss:1996}, 
and the result here for PDL is a refinement of it.

For example, we construct a characterizing sentence for the Thue--Morse sequence~$\con{M}$,
see Fig.~\ref{fig:ograph:morse}. 
Let $\phi$ and $\psi$ be given~by 
\begin{align*}
  \phi & = 0 \andd \nott 1 \andd  \dmd{\seven}0\andd\bx{\seven}0 \andd \dmd{\sodd}1 \andd \bx{\sodd}1 \\
  \psi & = \nott 0 \andd  1 \andd \dmd{\seven}1\andd\bx{\seven}1 \andd \dmd{\sodd}0 \andd \bx{\sodd}0
\end{align*}
Then $\phi_{\con{M}} = \phi \andd [(\seven\sqcup\sodd)^*](\phi \vee \psi)$
is a characteristic sentence of the top node in Fig.~\ref{fig:ograph:morse};
$\phi_{\con{M}}$ also characterizes $\con{M}$~in the following sense:
the only stream $\sigma$ such that $\sigma \models \phi_{\con{M}}$ is~$\con{M}$.

\begin{proposition}
  \label{prop-KP-deterministic}
  The following finite model properties hold:
  \begin{enumerate}
    \item 
      If a sentence~$\phi$ 
      has a model, it has a finite model~\cite{koze:pari:1981}.
    \item 
      If $\phi$ has a model in which $\seven$ and $\sodd$
      are total functions, then it has a finite model with these properties~\cite{ben-:halp:pnue:1982}.
  \end{enumerate}
\end{proposition}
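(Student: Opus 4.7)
Both claims are classical; we outline the route for completeness.

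For (i), the plan is to apply the Fischer--Ladner filtration technique.
First I would define the \emph{Fischer--Ladner closure} $FL(\phi)$ of $\phi$: the smallest set of sentences containing $\phi$, closed under subformulas, under negation (up to double-negation identification), and containing $[\pi_1]\chi$ and $[\pi_2;\pi_2^*]\chi$ whenever it contains $[\pi_1;\pi_2]\chi$, $[\pi^*]\chi$, $[\pi_1\sqcup\pi_2]\chi$, etc.
A routine induction on program complexity shows that $FL(\phi)$ is finite, of size polynomial in the length of $\phi$.
Given a model $\GG=\tuple{S,0,1,\seven,\sodd}$ with $x\models\phi$, define $s\sim t$ iff $s$ and $t$ satisfy exactly the same sentences in $FL(\phi)$.
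Let $\bar S = S/{\sim}$; set $\bar 0 = \{[s] : s\in 0\}$, $\bar 1 = \{[s] : s\in 1\}$, and for the atomic programs take $[s]\,\bar\seven\,[t]$ iff there exist $s'\sim s$ and $t'\sim t$ with $s'\,\seven\,t'$, and similarly for $\sodd$.
The main verification is then a \emph{filtration lemma}: by induction on sentences $\psi \in FL(\phi)$, one shows $s\models\psi$ iff $[s]\models\psi$ in the quotient.
The atomic and boolean cases are immediate; the $[\pi_1;\pi_2]\chi$, $[\pi_1\sqcup\pi_2]\chi$ cases go through by the Fischer--Ladner closure conditions; the star case $[\pi^*]\chi$ is the main obstacle and is handled by the standard induction-on-length-of-path argument using the fixpoint unfolding $[\pi^*]\chi \iff \chi \andd [\pi][\pi^*]\chi$ that is forced into $FL(\phi)$.
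Since $|\bar S| \le 2^{|FL(\phi)|}$, the quotient is a finite model of $\phi$.

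For (ii), the issue is that the filtration constructed above need not interpret $\seven$ and $\sodd$ as \emph{total functions}, because different $\sim$-related points $s,s'\in S$ may have $\seven$-successors in distinct classes, and the quotient relation can become non-functional.
The plan is to apply the Ben-Ari--Halpern--Pnueli ``selection'' refinement: first build the filtration $\bar \GG$ as in part (i), then, working inside $\bar\GG$, pick for each class $[s]$ a single $\bar\seven$-successor and a single $\bar\sodd$-successor; a pigeonhole/selection argument on the Fischer--Ladner theory of classes shows that such functional choices can always be made consistently, provided the original model had $\seven,\sodd$ total functional.
The characterization of $[\pi^*]\chi$-satisfaction is again the delicate point, but it transfers because the successor classes selected already realize, within $\bar\GG$, the same witnesses for the diamonds in $FL(\phi)$ as the ambient filtration.
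The resulting model is finite, functional in $\seven$ and $\sodd$, and satisfies $\phi$ at the class of the original point $x$.
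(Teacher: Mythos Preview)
The paper does not actually prove this proposition: both parts are stated with inline citations to \cite{koze:pari:1981} and \cite{ben-:halp:pnue:1982}, and the only accompanying text is a remark that the second item is a slight variation of what appears in the cited source. So there is no ``paper's own proof'' to compare against; the authors simply import these as known results from the literature.

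Your sketch therefore goes beyond what the paper does. As an outline it is essentially the classical story: Fischer--Ladner closure plus filtration for~(i) is exactly the Kozen--Parikh argument, and your description is accurate. For~(ii) your account is a bit loose --- the Ben-Ari--Halpern--Pnueli treatment of deterministic PDL is not quite ``filtrate, then pick one successor per class''; the functionality-preserving construction and the handling of $[\pi^*]\chi$ in the deterministic setting require more care than a simple selection argument, and your phrase ``a pigeonhole/selection argument \ldots\ shows that such functional choices can always be made consistently'' is doing a lot of unexplained work. If you want this to stand as an actual proof rather than a pointer to the literature, that step needs to be made precise. But if your intent, like the paper's, is merely to indicate that these are established results and where they come from, then citing the sources (as the paper does) is sufficient.
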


\begin{remark}
  Our statement of the second result is a slight variation of what appears in~\cite{ben-:halp:pnue:1982}.
\end{remark}

We arrive at our third main result:
\begin{theorem}
  The following are equivalent for $\sigma\in \Delta^\omega$:
  \begin{enumerate}
    \item 
      $\sigma$ is $2$-automatic.
    \item  
      There is a sentence $\phi$ such that for all $\tau\in \Delta^\omega$,
      $\tau\models\phi$ in $\tuple{\Delta^\omega, \tuple{\shead,\seven,\sodd}}$ 
      iff $\tau = \sigma$.
  \end{enumerate}
\end{theorem}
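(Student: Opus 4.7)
The plan is to establish each direction with a different tool: Proposition~\ref{prop-characterization} for (i)$\,\Rightarrow\,$(ii), and the finite model property (Proposition~\ref{prop-KP-deterministic}(ii)) for (ii)$\,\Rightarrow\,$(i); in both cases the bridge to automaticity is Theorem~\ref{thm:zip:auto}.

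For (i)$\,\Rightarrow\,$(ii), I would reason as follows. Since $\sigma$ is $2$\nb-automatic, Theorem~\ref{thm:zip:auto} ensures that the canonical $\nbase{2}$\nb-observation graph $\GG$ of $\sigma$ is finite. Viewing $(\GG,\sigma)$ as a finite pointed model, Proposition~\ref{prop-characterization} produces a characterizing sentence $\phi_\sigma$ such that, for every $F$\nb-coalgebra $(\HH,y)$, $y\models\phi_\sigma$ is equivalent to the existence of a bisimulation between $(\GG,\sigma)$ and $(\HH,y)$. Applied to $\HH=\tuple{\Delta^\omega,\tuple{\shead,\seven,\sodd}}$ and $y=\tau$, and using that the nodes of $\GG$ are streams and that $\GG$ is a subcoalgebra of $\HH$, any such bisimulation yields an $\nbase{2}$\nb-bisimulation on $\Delta^\omega$ relating $\sigma$ and $\tau$; by Lemma~\ref{lem:coinduction} this forces $\sigma=\tau$. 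Conversely, $\sigma=\tau$ gives a bisimulation via the identity on $\GG$. Hence $\phi_\sigma$ is the desired sentence.

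For (ii)$\,\Rightarrow\,$(i), let $\phi$ be a sentence whose unique stream model is $\sigma$. The strategy is to extract, from the finite model property, a finite $\nbase{2}$\nb-observation graph for $\sigma$. Two features of the intended stream coalgebra must be enforced globally in any finite model I wish to use: (a) the predicates $0$ and $1$ partition the states, and (b) $\seven$ preserves the head label (``zero-consistency'', because $\head{\seven(\tau)}=\head{\tau}$ for all streams $\tau$). I would therefore strengthen $\phi$ to
\[
  \phi^{\ast} \wdefdby \phi \andd [(\seven\sqcup\sodd)^*]\bigl((0\iiff\nott 1)\andd(0\iiff[\seven]0)\andd(1\iiff[\seven]1)\bigr).
\]
In the intended model $\sigma\models\phi^{\ast}$, so $\phi^{\ast}$ is satisfiable in a model where $\seven$ and $\sodd$ are total functions; Proposition~\ref{prop-KP-deterministic}(ii) then yields a finite such model $(M,x)$ with $x\models\phi^{\ast}$. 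Restricting to the submodel $M'$ generated by $x$ preserves truth at $x$ (as PDL is bisimulation\nb-invariant) and closes $M'$ under $\seven,\sodd$, so $M'$ is a finite $F$\nb-coalgebra on which the three extra conjuncts hold globally. In particular $M'$ is zero-consistent, so the finality of $\pair{\Delta^\omega}{\nbase{2}}$ on zero-consistent coalgebras~\cite{kupk:rutt:2011} provides a unique $F$-coalgebra homomorphism $\sinterpret\colon M'\to\pair{\Delta^\omega}{\nbase{2}}$, making $M'$ a finite $\nbase{2}$\nb-observation graph. Setting $\sigma^{\ast}=\interpret{x}$, Proposition~\ref{prop-preservation} gives $\sigma^{\ast}\models\phi$, whence $\sigma^{\ast}=\sigma$ by uniqueness. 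Thus $\sigma$ has a finite canonical $\nbase{2}$\nb-observation graph and is $2$\nb-automatic by Theorem~\ref{thm:zip:auto}.

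The main obstacle is this converse direction: a bare finite $F$\nb-coalgebra need not embed homomorphically into $\pair{\Delta^\omega}{\nbase{2}}$ (this fails precisely when zero-consistency is violated), and the finite model property alone does not produce such additional structure. The remedy is to encode the required global properties---the partitioning by $\{0,1\}$ and the head-preservation by $\seven$---as additional conjuncts of $\phi^{\ast}$ before invoking Proposition~\ref{prop-KP-deterministic}(ii); the bisimulation-invariance of PDL, applied in the passage to the generated submodel, then guarantees that these properties hold in the finite coalgebra in which the stream realizing $\sigma$ is recovered.
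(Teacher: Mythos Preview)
Your argument is correct and follows the paper's strategy: Proposition~\ref{prop-characterization} for (i)$\Rightarrow$(ii), and Proposition~\ref{prop-KP-deterministic}(ii) followed by a coalgebra morphism into $(\Delta^\omega,\nbase{2})$ and Proposition~\ref{prop-preservation} for (ii)$\Rightarrow$(i). The one substantive difference lies in (ii)$\Rightarrow$(i): the paper, having produced a finite $F$\nb-coalgebra $(M,x)$ with $x\models\phi$, simply invokes \cite[Theorem~5]{kupk:rutt:2011} to obtain the coalgebra morphism $M\to(\Delta^\omega,\nbase{2})$, without explicitly arguing that $M$ is zero-consistent. You instead conjoin $[(\seven\sqcup\sodd)^*]\bigl((0\leftrightarrow\neg 1)\wedge(0\leftrightarrow[\seven]0)\wedge(1\leftrightarrow[\seven]1)\bigr)$ to $\phi$ \emph{before} applying the finite model property, so that the generated submodel is guaranteed to be a zero-consistent $F$\nb-coalgebra and the morphism to $(\Delta^\omega,\nbase{2})$ exists by the restricted finality stated after Proposition~\ref{prop:ok=ok}. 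This makes the step self-contained and actually closes a point the paper leaves to the external reference.

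One small wording correction: what you construct at the end is \emph{some} finite $\nbase{2}$\nb-observation graph $M'$ defining $\sigma$, not the canonical one. This is harmless, since the canonical $\nbase{2}$\nb-observation graph of $\sigma$ is contained in the image of $\sinterpret\colon M'\to\Delta^\omega$ and is therefore also finite, so Theorem~\ref{thm:zip:auto}(iii) applies as you intend.
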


\begin{proof}
  $(i) \Rightarrow (ii)$:
  Let $\sigma$ be automatic,
  and let $M$ be a finite $F$-coalgebra and $x\in M$ be such that 
  the unique coalgebra morphism $f : M \to \Delta^\omega$
  has $f(x) = \sigma$.
  Let $\phi_x$ be the characterizing sentence of $x$ in $M$,
  using Proposition~\ref{prop-characterization}.  
  By Proposition~\ref{prop-preservation}, 
   $\sigma\models\phi_x$ in $\Delta^\omega$.
  Now suppose that $\tau\models\phi_x$ in $\Delta^\omega$.
  Since $\phi_x$ is a characterizing sentence, there is a bisimulation on $\Delta^\omega$
  relating $\sigma$ to $\tau$.  
  By Lemma~\ref{lem:coinduction},
  $\sigma = \tau$.

  $(ii) \Rightarrow (i)$:
  Let $\phi$ be a sentence with the property that $\sigma$ is the only
  stream which satisfies $\phi$.    Since $\sigma$ has a model,
  it has a finite model, by Proposition~\ref{prop-KP-deterministic}.   
  Moreover, this model $M$ may be taken to be a finite  $F$-coalgebra
  with a distinguished point $x$.
  By \cite[Theorem~5]{kupk:rutt:2011} 
  let $\phi : M \to \Delta^\omega$ be the unique coalgebra morphism.
  Let $\tau = \phi(x)$. 
  Since $M$ is finite, $\tau$ is automatic. 
  By Proposition~\ref{prop-characterization}, $\tau\models\phi$ in $\Delta^\omega$.
  But by the uniqueness assertion in part (2) of our theorem, we must have
  $\tau = \sigma$.
  Therefore $\sigma$ is automatic.
\end{proof}

\section{Mix-Automaticity}\label{sec:mix-auto}
The zip-specifications considered so far were uniform, 
all zip-operations in a zip-$k$ specification have the same arity~$k$.
Now we admit different arities of zip in one zip-specification (Definition~\ref{def:Zterms}).
To emphasize the difference with zip\nb-$k$ specifications 
we will here speak of \emph{zip\nb-mix} specifications.
This extension leads to a proper extension of automatic sequences 
and some delicate decidability problems.

\begin{definition}\label{def:sda-DFAO}
  A \emph{state-dependent-alphabet DFAO} 
  is a tuple $\tuple{Q,\Sigma,\delta,q_0,\aalph,\lambda}$, 
  where 
  \begin{itemize}
    \item 
      $Q$ is a finite set of states,
    \item 
      $\Sigma = \{\Sigma_q\}_{q \in Q}$ a family of input alphabets, 
    \item 
      $\delta = \{ \delta_q : \Sigma_q \to Q \}_{q \in Q}$ a family of transition functions,
    \item 
      $q_0 \in Q$ the initial state,
    \item 
      $\aalph$ an output alphabet, and
    \item 
      $\lambda : Q \to \aalph$ an output function.
  \end{itemize}
  \noindent
  We write $\delta(q,i)$ for $\delta_q(i)$ iff $i\in\Sigma_q$,
  and extend $\delta$ to words as follows:
  Let $q\in Q$ and $w = a_{n-1} \ldots a_0$
  where $a_i \in \Sigma_{r_i}$ ($0 \le i \lt n$)
  with $r_i\in Q$ defined by:
  $r_0 = q$ and $r_{i+1} = \delta(r_i,a_i)$.
  Then we let $\delta(q,w) = r_n$.
\end{definition}

A {state-dependent-alphabet DFAO} can be seen as a DFAO
whose transition function is a \emph{partial} map
$\delta : Q \times \bigcup \Sigma \pto Q$ 
such that $\delta(q,a)$ is defined iff $a\in\Sigma_q$.

We use this concept to generalize $k$-DFAOs
where the input format are numbers in base $k$
by the following two\nb-tiered construction.
We define $P$\nb-DFAOs where $P$ is
a DFAO determining the base of each digit 
depending on the digits read before.
Thus $P$ can be seen as fixing a variadic numeration system.
For example, for ordinary base $k$ numbers,
we define $P$ to consist of a single state $q$ with output $k$
and edges $0,\ldots,k-1$ looping to itself.

\begin{definition}\label{def:sdb-DFAO}
  A \emph{base determiner~$\abs$} is a state-dependent-alphabet DFAO 
  of the form
  \( \abs = \tuple{Q,\{\fin{\beta(q)}\}_q,\delta,q_0,\nat,\beta} \).
  The \emph{base\nb-$\abs$ representation of $n \in \nat$} 
  is defined by 
  \begin{align*}
    (n)_{\abs} = (n)_{\abs,q_0}
    && \text{where}
    && (n)_{\abs,q} = (n')_{\abs,\delta(q,d)} \cdot d
  \end{align*}
  with $n' = \floor{\frac{n}{\beta(q)}}$ and $d = [n]_{\beta(q)}$,
  the quotient and the remainder of division of $n$ by $\beta(q)$, 
  respectively.

  A \emph{$\abs$-DFAO~$\aaut$}
  is a state-dependent-alphabet DFAO
  \[ \aaut = \tuple{Q',\{\fin{\beta'(q')}\}_{q' \in Q'},\delta',q'_0,\aalph,\lambda} \]
  \emph{compatible with} $\abs$, i.e.\
  $\tuple{Q',\{\fin{\beta'(q')}\}_{q' \in Q'},\delta',q'_0,\nat,\beta'}$ and $\abs$ 
  are bisimilar.
  
  A \emph{mix-DFAO} is a $\abs$-DFAO for some base determiner $\abs$.
\end{definition}

Note that the output alphabet of a base determiner can be taken to be
finite as the range of $\beta$.
The compatibility of $A$ with $P$ entails that $A$ reads the number format defined by $P$.
Moreover, every mix-DFAO~%
$\aaut = \tuple{Q,\{\fin{\beta(q)}\}_{q \in Q},\delta,q_0,\aalph,\lambda}$
is a $\abs_{\!\aaut}$\nb-DFAO where 
$\abs_{\!\aaut} = \tuple{Q,\{\fin{\beta(q)}\}_{q \in Q},\delta,q_0,\nat,\beta}$.

These DFAOs introduce a new class of sequences, 
which we call `mix-automatic' in order to emphasize the connection 
with zip-mix specifications. 

\begin{definition}\label{def:mix-automatic}
  Let $\abs$ be a base determiner, and 
  $\aaut = \tuple{Q,\{\fin{\beta(q)}\}_{q\in Q},\delta,q_0,\aalph,\lambda}$ a $\abs$\nb-DFAO.
  For states $q \in Q$, we define $\gen{A}{q} \in \str{\aalph}$ by:
  $\gen{A}{q}(n) = \lambda(\delta(q,(n)_{\abs}))$ for all $n \in \nat$.
  We define $\genz{A} = \gen{A}{q_0}$,
  and say $A$ \emph{generates} the stream $\genz{A}$.

  A sequence $\astr\in\str{\aalph}$ is \emph{$\abs$\nb-automatic} 
  if there is a $\abs$\nb-DFAO~$\aaut$ such that $\astr = \genz{\aaut}$.
  A stream is called \emph{mix\nb-automatic} 
  if it is $\abs$\nb-automatic for some base determiner~$\abs$.
\end{definition}
 

\begin{example}\label{ex:sdb:dfao}
  Consider the following mix-DFAO~$\aaut$:
  \begin{center}
    \vspace{-1ex}
    \begin{small}
    \begin{tikzpicture}[->,>=stealth',shorten >=1pt,auto,node distance=2.8cm,semithick]
      \node (s) {};
      \node[state] (q0) [right of=s,node distance=15mm] {$q_0/a$};
      \node[state] (q1) [right of=q0] {$q_1/b$};
      \node[state] (q2) [right of=q1] {$q_2/b$};
      \path (s)  edge (q0)
            (q0) edge [tloop] node {$0$} (q0)
                 edge [bend left=15] node [above] {$1$} (q1)
            (q1) edge [tloop] node {$2$} (q1)
                 edge [bend left=15] node [above] {$1$} (q0)
                 edge [bend right=15] node [above] {$0$} (q2)
            (q2) edge [bend right=15] node [above] {$0$} (q1)
                 edge [out=-145,in=-35,looseness=.6] node [above] {$1$} (q0);
  \end{tikzpicture}
  \end{small}
    \vspace{-1ex}
  \end{center}
  
  \noindent
  We note that $\aaut$ is a $\abs$-DFAO where $\abs$ is the base determiner
  obtained from $\aaut$ by redefining the output for $q_0$, $q_1$ and $q_2$
  as the number of their outgoing edges $2$, $3$ and $2$, respectively.
  
  As an example, we compute $(5)_{\aaut}$, 
  and $(23)_{\aaut}$ as follows:
  \vspace{-.5ex}
  \begin{align*}
    (5)_{q_0}  & = (2)_{q_1} 1 = (0)_{q_2} 2 1 = 2 1 \\
    (23)_{q_0} & = (11)_{q_1} 1 = (3)_{q_1} 2 1 = (1)_{q_2} 0 2 1 = (0)_{q_0} 1 0 2 1 = 1 0 2 1
  \end{align*}
  \vspace{-.5ex}
  where $(n)_{q}$ denotes $(n)_{\aaut,q}$.
  The sequence~$\genz{\aaut}$ 
  begins with
  \[
    a \xcons b \xcons b \xcons a 
      \xcons b \xcons b \xcons a 
      \xcons a \xcons b \xcons b 
      \xcons b \xcons a \xcons a 
      \xcons a \xcons \underline{a} \xcons b 
      \xcons b \xcons b \xcons b 
      \xcons b \xcons b \xcons a 
      \xcons a \xcons \underline{a} \xcons a 
      \xcons b \xcons a \xcons b 
      \xcons \ldots
  \]
  with entries $5$ and $23$ underlined.
  E.g.\ $\lambda(\delta(q_0,1021)) = a$ since starting from $q_0$ and reading $1021$ 
  from right to left brings you back at state $q_0$ with output~$a$.
\end{example}

We briefly indicate how to see that mix-automaticity properly extends automaticity.
Let $\astr$ and $\bstr$ be $k$- and $\ell$-automatic sequences.
Then the stream $\zip{\astr}{\bstr}$ is mix\nb-automatic, 
but not necessarily automatic, 
by Cobham's Theorem~\cite{cobh:1969}.

\begin{proposition}\label{prop:mix:extends}
  The class of mix-automatic sequences extends that of automatic sequences.
\end{proposition}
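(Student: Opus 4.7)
The plan is to split the proof into two parts: first show every $k$\nb-automatic sequence is mix\nb-automatic, then exhibit a mix\nb-automatic sequence which is not automatic.

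For the inclusion, given a $k$\nb-DFAO $\aaut = \tuple{Q,\fin{k},\delta,q_0,\aalph,\lambda}$, I would reuse $\aaut$ verbatim as a $\abs$\nb-DFAO where $\abs$ is the one-state base determiner with $\beta(q) = k$ and all $k$ transitions looping. Then $(n)_{\abs}$ coincides with the ordinary base-$k$ representation $(n)_k$, so the mix-automatic sequence generated agrees with $\genz{\aaut}$.

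For strictness, fix a 2\nb-automatic, non-eventually-periodic sequence $\astr$ (Thue--Morse suffices) and a 3\nb-automatic, non-eventually-periodic sequence $\bstr$, and let $\gamma = \zip{\astr}{\bstr}$. I first build a base determiner $\abs$ with three states $q_0,q_e,q_o$ of bases $2,2,3$ respectively: from $q_0$ the digits $0,1$ go to $q_e,q_o$, and both $q_e$ and $q_o$ loop on all of their digits. A $\abs$\nb-DFAO $\aaut_\gamma$ is obtained by placing, at $q_e$, a 2\nb-DFAO for $\astr$ (invariant under leading zeros, using Lemma~\ref{lem:dfao:zeros}), and at $q_o$ a 3\nb-DFAO for $\bstr$. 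Because the rightmost digit of $(n)_\abs$ (read first by $\delta$) equals $n \bmod 2$, even $n = 2m$ is routed through $q_e$ and then the remaining digits are the base\nb-2 expansion of $m$, giving $\astr(m)$; odd $n = 2m+1$ is routed through $q_o$ and yields $\bstr(m)$. Hence $\gamma = \genz{\aaut_\gamma}$ is mix\nb-automatic.

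Finally I argue $\gamma$ is not automatic. Suppose, for contradiction, that $\gamma$ is $m$\nb-automatic for some $m \ge 2$. A standard closure property of $m$\nb-automatic sequences under arithmetic subsequences (Allouche--Shallit, Thm.~6.8.1) shows that $\astr = \pjx{0}{2}{\gamma}$ and $\bstr = \pjx{1}{2}{\gamma}$ are both $m$\nb-automatic. Since $\astr$ is additionally 2\nb-automatic and not eventually periodic, Cobham's theorem forces $m$ and $2$ to be multiplicatively dependent, i.e.\ $m$ is a power of $2$; applied to $\bstr$, the same reasoning forces $m$ to be a power of $3$. As $m \ge 2$, this is a contradiction, so $\gamma$ is not automatic. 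The main technical obstacle is getting the mix-DFAO for $\gamma$ right with respect to the right-to-left digit reading convention of Definition~\ref{def:DFAO}; everything else is packaged by already-cited results (closure under arithmetic subsequences and Cobham's theorem).
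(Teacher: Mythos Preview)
Your proof is correct and follows essentially the same route as the paper: build a mix-DFAO for $\zip{\astr}{\bstr}$ by prepending a single base-$2$ state that dispatches to the respective $k$- and $\ell$-DFAOs, then use closure of $m$-automaticity under arithmetic subsequences together with Cobham's theorem to derive a contradiction. The paper works with arbitrary multiplicatively independent $k,\ell$ (and concludes via transitivity of multiplicative dependence), whereas your concrete choice $k=2$, $\ell=3$ makes the final contradiction a touch more direct; you also spell out the easy inclusion direction, which the paper leaves implicit. Two small remarks: the appeal to Lemma~\ref{lem:dfao:zeros} is harmless but unnecessary here, since the base-$\abs$ representation never introduces leading zeros; and you should state explicitly that the output at the fresh root state is $\astr(0)$, so that $\gamma(0)$ is handled (the paper does this via $\lambda(s_0)=\lambda_{\aaut}(q_0)$).
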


\begin{definition}\label{def:mix:ograph}
  Let $\kappa : \str{\aalph} \to \nat_{>1}$,
  and let $\sbfunct$ be the functor $\bfunct{X} = \sum_{k=2}^{\infty} \aalph \times X^k$.
  We define the cobasis 
  \[
    \nbase{\kappa} = \pair{\shead}{\mylam{\astr}{\tuple{\pjx{0}{\kappa(\astr)}{\astr},\ldots,\pjx{\kappa(\astr)-1}{\kappa(\astr)}{\astr}}}}
  \]
  An \emph{$\nbase{\kappa}$-observation graph} is 
  a $\sbfunct$-coalgebra $\aograph = \pair{S}{\pair{o}{n}}$ 
  with a distinguished \emph{root} element $\groot \in S$,
  such that there exists a $\sbfunct$\nb-homomorphism $\sinterpret \funin S \to \str{\aalph}$
  from $\aograph$ to the $\sbfunct$-coalgebra $\pair{\str{\aalph}}{\nbase{\kappa}}$ of all streams
  with respect to $\nbase{\kappa}$:
  
  \begin{center}
    \vspace{-.5ex}
    \begin{tikzpicture}[node distance=35mm,>=stealth,thick]
      \node (S) {$S$};
      \node (TS) [below of=S,node distance=12mm] {$\sum_{k=2}^{\infty} \aalph \times S^k$};
      \node (Aw) [right of=S,node distance=50mm] {$\str{\aalph}$};
      \node (TAw) [right of=TS,node distance=50mm] {$\sum_{k=2}^{\infty} \aalph \times (\str{\aalph})^k$};
      \draw [->] (S) -- (Aw) node [midway,above] {$\sinterpret$}; 
      \draw [->,shorten >= -1mm] (S) -- (TS) node [midway,left] {$\pair{o}{n}$};
      \draw [->,shorten >= -1mm] (Aw) -- (TAw) node [midway,right] {$\nbase{\kappa}$}; 
      \draw [->] (TS) -- (TAw) node [midway,above] {$\sum_{k=2}^{\infty} \text{id}\times\sinterpret^k$}; 
    \end{tikzpicture}
    \vspace{-.5ex}
  \end{center}
  The observation graph $\aograph$ \emph{defines} the stream $\interpret{\groot} \in \str{\aalph}$.
  A \emph{mix\nb-observation graph} is an $\nbase{\kappa}$\nb-observation graph for some~$\kappa$.
\end{definition}

The following result is a generalization of Theorem~\ref{thm:zip:auto}.
The key idea is to adapt Definition~\ref{def:zip:graph} 
by computing the derivatives $\nf{\pjx{0}{k}{t}},\ldots,\nf{\pjx{k-1}{k}{t}}$ 
of a zip\nb-term~$t$ 
where now~$k$~is~the arity of the first zip-symbol in the tree unfolding of~$t$.
Moreover, we note that mix\nb-DFAOs yield mix\nb-observation graphs
by collapsing states that generate the same stream
(for each of the equivalence classes one representative and its outgoing edges is chosen).
This collapse caters for mix\nb-DFAOs~which employ different bases for states that generate the same stream.

\begin{theorem}\label{thm:mix:zip:auto}
  For streams $\astr \in \str{\aalph}$ the following properties are equivalent:
  \begin{enumerate}
    \item The stream $\astr$ is mix-automatic.
    \item The stream $\astr$ can be defined by a zip-mix specification.
    \item There exists a finite mix-observation graph defining $\astr$. 
  \end{enumerate}
\end{theorem}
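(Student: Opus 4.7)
The plan is to prove the three-way equivalence cyclically, (ii)$\Rightarrow$(iii)$\Rightarrow$(i)$\Rightarrow$(ii), in each step lifting the corresponding argument of Theorem~\ref{thm:zip:auto} from uniform arity to the mixed-arity setting of Definitions~\ref{def:sdb-DFAO} and~\ref{def:mix:ograph}. Two common ingredients run through the argument: a flat normal form for zip-mix specifications, and a leading-zero-invariance condition that identifies mix-DFAOs with mix-observation graphs after a change of notation.

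For (ii)$\Rightarrow$(iii), I would first extend Lemmas~\ref{lem:free:root} and~\ref{lem:flat} to zip-mix specifications, producing an equivalent productive specification all of whose equations have the form
\[
  \Zvar{i} \;=\; c_{i,1} : \cdots : c_{i,m_i} : \zipn{k_i}{\Zvar{i,1},\ldots,\Zvar{i,k_i}},
\]
with $k_i$ now depending on $i$. I would then augment Definition~\ref{def:trsobs} to include the projection rules $\spjx{j}{k}$ and the zip-rules for every arity $k$ occurring in $\aspec$, and adapt Definition~\ref{def:zip:graph} so that at each node $t$ one uses the arity $k_t$ of the outermost zip in the flat unfolding of $t$, emitting edges $\nf{\pjx{0}{k_t}{t}},\ldots,\nf{\pjx{k_t-1}{k_t}{t}}$. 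The shape argument of Lemma~\ref{lem:finite:nk} — every derivative of $t$ has the form $d_0 : \cdots : d_{\ell-1} : \Zvar{j}$ with $\ell$ bounded — transfers because it relies only on a balance between consumption and production of prefix elements, not on a fixed arity. Lemma~\ref{lem:flat:zipk:ngraph} then extends verbatim to identify the result as a mix-observation graph for $\interpret{\Zroot}^{\aspec}$.

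For (iii)$\Rightarrow$(i), the identity $(kn+i)_{\abs,q} = (n)_{\abs,\delta(q,i)}\cdot i$ built into Definition~\ref{def:sdb-DFAO} is the mix-analogue of Lemma~\ref{lem:dfao:pjx}; combined with a state-by-state version of the leading-zero normalisation of Lemma~\ref{lem:dfao:zeros}, it lets one read any mix-observation graph directly as a mix-DFAO, setting $\beta(r)$ to be the arity at node $r$. For (i)$\Rightarrow$(ii), I would close the cycle via the mix-analogue of Lemma~\ref{lem:nbase:zip}: to every leading-zero-invariant mix-DFAO with states $r_0,\ldots,r_{n-1}$, arities $k_0,\ldots,k_{n-1}$, observations $a_0,\ldots,a_{n-1}$, and successors $f_i:\fin{k_i}\to\{0,\ldots,n-1\}$ (with $a_{f_i(0)}=a_i$), associate the zip-mix specification given by the paired equations
\[
  \Zvar{i} \;=\; a_i : \rv{X}'_i, \qquad
  \rv{X}'_i \;=\; \zipn{k_i}{\Zvar{f_i(1)},\ldots,\Zvar{f_i(k_i-1)},\rv{X}'_{f_i(0)}},
\]
and verify, by exhibiting an $\nbase{\kappa}$-bisimulation in the sense of Lemma~\ref{lem:coinduction} applied to the mixed cobasis, that $\interpret{\Zroot}$ coincides with $\genz{\aaut}$.

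The main obstacle I expect is the bookkeeping around the state-dependent arity. Two points require particular care. First, a mix-DFAO may assign different bases to states that generate the same stream, and so the collapse from a mix-DFAO to a canonical mix-observation graph (as hinted in the paragraph preceding the theorem) has to choose a single representative base per equivalence class; checking that this choice does not change $\genz{\aaut}$ calls for an additional bisimulation argument between different-arity representations of the same stream. Second, the leading-zero normalisation has to be performed relative to $\beta(q)$ for each state $q$ separately rather than relative to a single $k$, which requires a state-local refinement of the proof of Lemma~\ref{lem:dfao:zeros}. Once these arity issues are threaded consistently through the constructions of Sections~\ref{sec:k-zip:graphs} and~\ref{sec:k-auto}, the remainder of the proof follows the schema of Theorem~\ref{thm:zip:auto} without further surprise.
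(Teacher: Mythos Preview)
Your proposal is correct and aligns with the paper's own treatment, which is only a brief sketch in the paragraph preceding the theorem. The paper singles out exactly the two adaptations you identify: computing derivatives with respect to the arity of the first $\szip$-symbol in the unfolding of a term (your (ii)$\Rightarrow$(iii)), and collapsing states that generate the same stream so that a single $\kappa$ can be chosen (your first obstacle). One small correction: the leading-zero normalisation you mention under (iii)$\Rightarrow$(i) is not needed there---an $\nbase{\kappa}$-observation graph is automatically zero-invariant because $\head{\pjx{0}{k}{\astr}} = \head{\astr}$---but it \emph{is} needed in (i)$\Rightarrow$(ii), where you start from an arbitrary mix-DFAO; you do flag this correctly in your obstacles paragraph. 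Also note that the collapse issue you attribute to (i)$\Rightarrow$(iii) can equally arise in (ii)$\Rightarrow$(iii): a flat zip-mix specification may contain variables $\Zvar{i}$, $\Zvar{j}$ with $\interpret{\Zvar{i}} = \interpret{\Zvar{j}}$ but $k_i \neq k_j$, so the graph you build is not yet an $\nbase{\kappa}$-observation graph for any $\kappa$ until you collapse such nodes. The paper is silent on this point too, so this is not a divergence from its argument, but you should be aware that the collapse step is needed in both directions into~(iii).
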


\begin{example}\label{ex:zip:mix:ex:sdb:dfao}
  The zip-mix specification corresponding
  to the mix-automaton from Example~\ref{ex:sdb:dfao} is:
  \begin{align*}
    \Zvar{0}  & = a \xcons \Zvar{0}' &
    \Zvar{0}' & = \zipn{2}{\Zvar{1},\Zvar{0}'} \\
    \Zvar{1}  & = b \xcons \Zvar{1}' &
    \Zvar{1}' & = \zipn{3}{\Zvar{0},\Zvar{1},\Zvar{2}'} \\
    \Zvar{2}  & = b \xcons \Zvar{2}' &
    \Zvar{2}' & = \zipn{2}{\Zvar{0},\Zvar{1}'}
  \end{align*}
\end{example}

We have seen that equivalence for zip-$k$ specifications is decidable (Theorem~\ref{thm:zipk:decide}),
and it can be shown that comparing zip-$k$ with zip-mix is decidable as well.
In the next section we show that equivalence becomes undecidable 
when zip-mix specifications are extended with projections $\spjx{i}{k}$.
But what about zip-mix specifications?
\begin{question}
  Is equivalence decidable for zip-mix specifications?
\end{question}

\section{Stream Equality is $\cpi{0}{1}$-complete}\label{sec:undecidable}

In this section, we show that the decidability results for the equality of
zip-$k$ specifications are on~the~verge
of undecidability.
To this end we consider an extension of the format of
zip\nb-specifications with the projections~$\spjx{i}{k}$.

\begin{definition}
  The set $\mcl{Z^\pi}(\aalph,\mcl{X})$ of \emph{zip$^\pi$\nb-terms} over $\pair{\aalph}{\mcl{X}}$
  is defined by the grammar:
  \vspace{-2.5ex}
  \begin{align*}
    Z & \BNFis \rv{X} \BNFor a \xcons Z \BNFor \zipn{k}{\overbrace{Z,\ldots,Z}^{\text{$k$ times}}} 
    \BNFor \pjx{i}{k}{Z}
  \end{align*}
  where $\rv{X} \in \mcl{X}$, $a \in \aalph$, $i,k \in \nat$.
  A \emph{zip$^\pi$-specification} 
  consists for every $\rv{X} \in \mcl{X}$ of an equation $\rv{X} = t$
  where $t \in \mcl{Z^\pi}(\aalph,\mcl{X})$.
\end{definition}

The class of zip$^\pi$-specifications forms a subclass 
of pure specifications~\cite{endr:grab:hend:isih:klop:2010},
and hence their productivity is decidable.
In contrast, the equivalence of zip$^\pi$\nb-specifications turns out to be undecidable
(even for productive specifications).

\begin{theorem}\label{thm:pi01}
  The problem of deciding the equality of streams 
  defined by productive zip$^\pi$\nb-specifications
  is $\cpi{0}{1}$-complete.
\end{theorem}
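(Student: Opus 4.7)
The plan is to split into an upper bound (membership in $\cpi{0}{1}$) and a lower bound ($\cpi{0}{1}$\nb-hardness).

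For the upper bound, I would invoke the fact recorded just above the statement that zip$^\pi$\nb-specifications form a subclass of pure stream specifications, so productivity is decidable for this class, and for any productive specification the $n$-th symbol of the defined stream is computable by bounded rewriting. Therefore the complement of stream equality, namely stream inequality, is recursively enumerable: enumerate positions $n \in \nat$, compute the $n$-th symbol of each of the two specified streams, and halt as soon as a disagreement is found. Hence equality of productive zip$^\pi$\nb-specifications lies in $\cpi{0}{1}$.

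For $\cpi{0}{1}$\nb-hardness, I would reduce from the halting problem for Fractran programs of Conway, which is $\csig{0}{1}$\nb-complete, so that its complement is $\cpi{0}{1}$\nb-complete. Given a Fractran program with fractions $\iafrac{1}, \ldots, \iafrac{k}$ and input $n_0$, I would construct two productive zip$^\pi$\nb-specifications whose root streams $\sigma$ and $\tau$ coincide exactly when the program does \emph{not} halt on $n_0$. The key observation is that an equation of the form $\rv{X} = \zipn{p}{\rv{Y}, \ldots, \rv{Y}}$ stretches its right-hand side by a factor $p$ on the positional index, while an equation $\rv{X} = \pjx{0}{q}{\rv{Y}}$ compresses it by a factor $q$; composing such operations simulates the action $n \mapsto n \cdot \anum_i/\aden_i$ of one Fractran step on stream-indices. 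Using this, I would assemble finitely many recursive equations that jointly simulate the Fractran trajectory starting from $n_0$.

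Concretely, I would design $\sigma$ to mark with a distinguished symbol precisely those indices at which the simulation reaches a halted configuration (and be $0$ elsewhere), and let $\tau$ be the constant-$0$ stream; productivity of both sides is arranged by ensuring that every recursive call in every equation sits behind at least one head-prefix. Then $\sigma = \tau$ holds iff the halt marker never appears, i.e.\ iff the Fractran program diverges on $n_0$. The main obstacle will be giving a faithful, productivity-preserving encoding of Fractran's conditional dispatch ``apply the first $\iafrac{i}$ for which $n \cdot \iafrac{i}$ is an integer'' using only $\szipn{k}$ and $\spjx{i}{k}$: divisibility tests of the current state by the various $\iaden{i}$ must be realised as read-offs at fixed arithmetic progressions rather than true arithmetic comparisons. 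Once this encoding is in place, undecidability of Fractran halting transfers directly to inequality of the two constructed productive zip$^\pi$\nb-specifications, yielding $\cpi{0}{1}$\nb-hardness and hence, together with the upper bound, $\cpi{0}{1}$\nb-completeness.
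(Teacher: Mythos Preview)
Your upper bound is fine and matches the paper's one-line argument.

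The lower bound has a real gap: your productivity claim is false. For zip$^\pi$\nb-specifications, placing each recursive call behind a head-prefix does \emph{not} guarantee productivity, because the projections $\spjx{i}{k}$ are consumptive. For instance $X = a \xcons \pjx{0}{2}{X}$ is head-guarded in your sense, yet computing $X(2)$ requires $X(2)$. Any faithful simulation of a Fractran step $n \mapsto n\cdot\frac{p}{q}$ with $p>q$ (and such a step must occur in every nonterminating program) sends the index to a larger one, so a direct recursive encoding as you sketch will \emph{not} be productive.

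The paper resolves this by an extra reduction you do not have: from a general Fractran program $\aprg$ it first builds two \emph{decreasing} Fractran programs with output $\aprg^0,\aprg^1$ (every computation fraction has numerator strictly smaller than denominator, hence both are universally halting), such that $\aprg$ fails to halt on $2$ iff $\aprg^0$ and $\aprg^1$ produce the same output on every input. Only then does it translate each $\aprg^i$ to a zip$^\pi$\nb-specification $\aspec(\aprg^i)$: the root is $\Zroot = \zipn{d}{\Zvar{1},\ldots,\Zvar{d}}$ with $d$ the lcm of the denominators, and each $\Zvar{n}$ is either a constant stream (output case) or $\pjx{\aoff_n-1}{\anum_n'}{\Zroot}$ where the projection lands at the \emph{smaller} index $f_{\aprg^i}(n+1)-1$. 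Decreasingness is exactly what makes this well-founded and hence productive. Your proposal is missing this intermediate ``make it decreasing'' step, without which the construction cannot be made productive.
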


For the proof of the theorem, 
we devise a reduction from~the halting problem of Fractran programs (on the input~$2$)
to an equivalence problem of zip$^\pi$\nb-specifications.
Fractran~\cite{conw:1987} is a Turing-complete programming language.
As intermediate step of the reduction 
we employ an extension of Fractran programs with output (and immediate termination):

\newcommand{\sstepout}{\lambda}
\newcommand{\stepout}{\funap{\sstepout}}
\newcommand{\ssteps}[1]{\delta_#1}
\newcommand{\steps}[1]{\funap{\ssteps{#1}}}
\newcommand{\sout}[1]{\lambda^*_{#1}}
\newcommand{\out}[1]{\funap{\sout{#1}}}

\begin{definition}\label{def:fractran:function}
  An \emph{Fractran program with output} consists of:
  \begin{itemize}
    \item a list of fractions $\iafrac{1},\ldots,\iafrac{k}$ 
          ($k,p_1,q_1,{\ldots},p_k,q_k \,{\in}\, \pnat$),
    \item a partial \emph{step output} function $\sstepout : \{1,\ldots,k\} \pto \Gamma$
  \end{itemize}
  where $\Gamma$ is a finite output alphabet.
  A \emph{Fractran program} is a Fractran program with output
  for which $\undefd{\stepout{1}}, \ldots, \undefd{\stepout{k}}$.

  Let $\aprg$ be a Fractran program with output as above.
  Then~we define the partial function
  $\tuple{\cdot} \funin \nat \pto \{1,\ldots,k\}$ 
  that for every $n\in\nat$ selects the index $\tuple{n}$
  of the first applicable fraction by:
  \begin{align*}
    \tuple{n} = \min \;\{\,i \mid 1 \le i \le k,\; n\cdot \iafrac{i} \in \nat\,\}
  \end{align*}
  where we fix $\undefd{(\min \setemp)}$.
  We define $\fstep{\aprg} \funin \nat \to \nat \cup \Gamma \cup \{\bot\}$ by: 
  \begin{equation*}
    \funap{\fstep{\aprg}}{n} =
    \begin{cases}
      n\cdot\iafrac{\tuple{n}} & \text{if $\defd{\tuple{n}}$ and $\undefd{\stepout{\tuple{n}}}$}\\
      \stepout{\tuple{n}} & \text{if $\defd{\tuple{n}}$ and $\defd{\stepout{\tuple{n}}}$} \\
      \bot        & \text{if $\undefd{\tuple{n}}$}
    \end{cases}
  \end{equation*}
  for all $n \in \nat$.
  The first case is a \emph{computation step},
  the latter two are \emph{termination with} and \emph{without output}, respectively.

  We define the \emph{output function $\sout{\aprg} : \nat \pto \Gamma \cup \{\bot\}$ of $\aprg$} by
  \begin{align*}
    \out{\aprg}{n} =
      \begin{cases}
        \gamma & \text{if $\gamma = \funap{\fstep{\aprg}^{\hspace{.05em}i}}{n} \in \Gamma \cup \{\bot\}$ for some $i \in \nat$} \\
        \undefd{} & \text{if no such $i$ exists}
      \end{cases}
  \end{align*}
  If $\defd{\out{\aprg}{n}}$ then $\aprg$ is said to 
  \emph{halt on $n$ with output $\out{\aprg}{n}$}.
  Then $\aprg$ is called \emph{universally halting}
  if $\aprg$ halts on every $n \in \pnat$,
  and $\aprg$ is \emph{decreasing} if $\ianum{i} < \iaden{i}$ for every $1 \le i \le k$
  with $\undefd{\stepout{i}}$.
\end{definition}

For convenience, we denote Fractran programs with output by
lists of annotated fractions
where $\undefd{\stepout{i}}$ is represented by the empty word (no annotation):
\begin{align*}
  \iafrac{1} \stepout{1},\ldots,\iafrac{k} \stepout{k}
\end{align*} 

\begin{lemma}[\cite{conw:1987}]\label{lem:fractran:pi01}
  The problem of deciding on the input of a Fractran program
  whether it halts on $2$ is $\csig{0}{1}$-complete.
\end{lemma}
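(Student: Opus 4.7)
The plan has two complementary parts, matching the two directions of $\csig{0}{1}$-completeness. The upper bound — that the halting set lies in $\csig{0}{1}$ — is immediate from the definition of $\fstep{\aprg}$: given a Fractran program $\aprg$, iteratively compute $\funap{\fstep{\aprg}^i}{2}$ for $i = 0,1,2,\ldots$ and accept as soon as the current value $n = \funap{\fstep{\aprg}^i}{2}$ satisfies $\undefd{\tuple{n}}$ or $\defd{\stepout{\tuple{n}}}$. This is partial-recursive in $\aprg$, so halting-on-$2$ is semi-decidable.

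For the hardness direction I would reproduce Conway's simulation of Minsky register machines, whose halting problem is well known to be $\csig{0}{1}$-hard. Fix a Minsky machine $M$ with registers $r_1,\ldots,r_k$ and states $q_0,\ldots,q_m$, and pick $k{+}m{+}1$ distinct primes: $p_1,\ldots,p_k$ for the registers and $s_0,\ldots,s_m$ for the states. Encode the configuration ``$M$ is in state $q_i$ with register contents $(r_1,\ldots,r_k)$'' as $s_i \cdot \prod_j p_j^{r_j}$. Each instruction becomes one or two fractions: an increment ``in $q_i$, increment $r_j$, jump to $q_{i'}$'' becomes $p_j s_{i'}/s_i$; a conditional decrement ``in $q_i$, if $r_j>0$ decrement and jump to $q_{i'}$, else jump to $q_{i''}$'' becomes the \emph{ordered} pair $s_{i'}/(p_j s_i),\ s_{i''}/s_i$. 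The crucial point is that Fractran's first-applicable-fraction rule, combined with the criterion ``$p_j s_i \mid n$ iff the state is $q_i$ and $r_j>0$'', faithfully implements the conditional branch. A halt in $M$ is translated by providing no applicable fraction with the terminal state prime in the denominator.

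To anchor the input at the specific value $2$, I would prepend a short block of fractions that deterministically transforms $2$ into $s_0 \cdot \prod_j p_j^{e_j}$, where $(e_1,\ldots,e_k)$ is the hard-coded initial configuration of $M$; arranging the block so that no simulation-fraction is applicable until $s_0$ is installed. The overall program halts on $2$ iff $M$ halts from this initial configuration, yielding a many-one reduction from Minsky halting to Fractran-halts-on-$2$, and hence $\csig{0}{1}$-hardness. The main technical hurdle is the priority ordering of fractions: one must list the ``$r_j>0$'' fraction strictly before its ``$r_j = 0$'' companion, and the disjointness of the state primes from the register primes must be used to ensure that no stray fraction fires at the wrong moment. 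Correctness is then a routine induction on Minsky steps, using unique prime factorisation to match Fractran values with machine configurations; the detailed verification appears in~\cite{conw:1987}.
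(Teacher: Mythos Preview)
Your proposal is correct and follows the standard argument due to Conway. Note, however, that the paper does not supply its own proof of this lemma: it is stated with a citation to~\cite{conw:1987} and used as a black box. So there is no ``paper's proof'' to compare against; you have reconstructed exactly the kind of argument one finds in Conway's original work (semi-decidability by direct simulation, hardness via the prime-exponent encoding of Minsky register machines with the first-applicable-fraction rule implementing conditional decrement).

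One small remark on the anchoring step: since the halting problem for Minsky machines started in a fixed initial configuration (say state $q_0$, all registers zero) is already $\csig{0}{1}$-hard, it suffices to prepend the single fraction $s_0/2$ and to choose all register and state primes odd; then $2$ never divides any later configuration and the initialisation fraction fires exactly once. Your more general ``transform $2$ into $s_0\prod_j p_j^{e_j}$'' also works but is more than is needed.
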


\newcommand{\halt}[1]{\chi_{#1}}
\newcommand{\halta}{\halt{a}}
\newcommand{\haltb}{\halt{b}}
We transform Fractran programs $\aprg$ into two decreasing (and therefore universally halting) 
Fractran programs $\aprg_0$ and $\aprg_1$ with output
such that $\aprg$ halts on input $2$
if and only if there exists $n\in \nat$ such that 
the outputs of $\aprg_0$ and $\aprg_1$ differ on $n$.

\begin{definition}\label{def:f0f1}
  Let $\aprg = \prglist{\iafrac{1},\ldots,\iafrac{k}}$ be a Fractran program.
  Let $a_1 < \ldots <a_m$ be the primes occurring in the factorizations of 
  $\ianum{1},\ldots,\ianum{k},\iaden{1},\ldots,\iaden{k}$.
  Let $z_1, z_2, c$ be primes such that 
  $z_1,z_2,c > \prod_{0\le i\le k} p_i\cdot q_i $, and
  $z_1 > z_2$ and $z_1 > 2\cdot c$.

  We define the Fractran program $\aprg^0$ with output as:
  \begin{align*}
    \overbrace{\frac{\ianum{1}}{\iaden{1}\cdot z_2},\;\ldots,\;\frac{\ianum{k}}{\iaden{k}\cdot z_2}}^{\text{simulate $\aprg$}},\;\;\;
    \overbrace{\frac{1}{a_1},\;\ldots,\;\frac{1}{a_m}}^{\text{cleanup}},\\
    \underbrace{\frac{1}{c\cdot z_2}\halta}_{\text{$\aprg$ halted}},\; \frac{1}{c},\;
    \underbrace{\frac{z_2}{z_1\cdot z_1},\; \frac{2\cdot c}{z_1}}_{\text{initialization}},\;
    \underbrace{\frac{1}{1}\haltb}_{\text{$\aprg$ did not halt}}
  \end{align*}
  Let $\aprg^1$ be obtained from $\aprg^0$ by dropping
  $\frac{z_2}{z_1\cdot z_1}$ and $\frac{2\cdot c}{z_1}$.
\end{definition}

\begin{lemma}\label{lem:total}
  The programs $\aprg^0$, $\aprg^1$ are decreasing and universally halting,
  and $\out{\aprg^i}{n} \in \{\halta,\haltb\}$ for all $n\in\nat$, $i \in \{0,1\}$.
\end{lemma}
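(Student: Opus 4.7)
The plan is to verify the three claims of the lemma---being decreasing, universal halting, and having output always in $\{\halta,\haltb\}$---by routine calculation using the size constraints on $z_1, z_2, c$ baked into Definition~\ref{def:f0f1}.

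First, I would check the decreasing property of $\aprg^0$ by going through its non-output fractions and confirming $\ianum{i} < \iaden{i}$ in each case. The inflating factor $z_2 > \prod_i p_i q_i$ handles the simulation fractions $\frac{p_j}{q_j z_2}$ (since $q_j z_2 \ge z_2 > p_j$); primality of each $a_i$ and of $c$ handles the cleanup fractions $\frac{1}{a_i}$ and the fraction $\frac{1}{c}$; and the chain $z_2 < z_1 < z_1^2$ together with $2c < z_1$ handles the two initialization fractions $\frac{z_2}{z_1 z_1}$ and $\frac{2c}{z_1}$. Since $\aprg^1$ is $\aprg^0$ with the initialization fractions dropped, it inherits the decreasing property automatically.

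Second, universal halting follows from the decreasing property: each computation step sends $n \mapsto n \cdot \iafrac{\tuple n}$ with $\iafrac{\tuple n} < 1$, and the result remains in $\pnat$ because it is by assumption an integer while being a positive rational multiple of $n \ge 1$. Hence each computation step strictly decreases the positive integer state, and by well-foundedness of $\pnat$ no infinite sequence of computation steps is possible. Within finitely many steps, then, the program must either fire an output fraction or reach a state at which no fraction is applicable (the $\bot$ case).

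Third, I would rule out the $\bot$ case for both $\aprg^i$ by noting that the final listed fraction is $\frac{1}{1}\haltb$, which is applicable at every $n$ since $n \cdot 1 \in \nat$. Hence $\tuple n$ is always defined, $\fstep{\aprg^i}(n) \ne \bot$ for all $n$, and halting can only occur through the annotated $\halta$- or $\haltb$-fraction, so the output lies in $\{\halta,\haltb\}$. The whole proof is essentially bookkeeping rather than a genuine obstacle; the only mild subtlety is that the decreasing check for the simulation fractions is driven by the $z_2$ in the denominator rather than by any direct comparison of $p_j$ with $q_j$, which is why Definition~\ref{def:f0f1} inserts this factor in the first place.
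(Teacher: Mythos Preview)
Your proposal is correct and follows essentially the same approach as the paper: verify decreasing by inspecting each non-annotated fraction against the size constraints on $z_1,z_2,c$, deduce universal halting from decreasing (the paper cites Lemma~\ref{lem:frac:decreasing} where you reprove it inline), and use the catch-all fraction $\frac{1}{1}\haltb$ to rule out the $\bot$ outcome. Your write-up simply makes explicit the per-fraction bookkeeping that the paper leaves implicit.
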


\begin{lemma}\label{lem:ff0f1}
  The following statements are equivalent:
  \begin{enumerate}
    \item $\out{\aprg^0}{n} = \out{\aprg^1}{n}$ for all $n \in \pnat$.
    \item $\out{\aprg^0}{z_1^{e_1} \cdot z_2^{e_2}} = \out{\aprg^1}{z_1^{e_1} \cdot z_2^{e_2}}$ for all $e_1,e_2 \in \nat$.
    \item The Fractran program $\aprg$ does not halt on $2$.
  \end{enumerate}
\end{lemma}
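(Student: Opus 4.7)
My plan is to prove the cyclic chain $(i) \Rightarrow (ii) \Rightarrow (iii) \Rightarrow (i)$. The first implication is immediate since (ii) simply restricts (i) to inputs of the form $z_1^{e_1} z_2^{e_2}$. For the remaining two, I would exploit the fact (Lemma~\ref{lem:total}) that both $\aprg^0$ and $\aprg^1$ are decreasing and hence always halting, and carry out a trace-based analysis. The decisive structural remark is that $\aprg^0$ and $\aprg^1$ share the same ordered list of fractions except that $\aprg^1$ omits the two initialization fractions $\frac{z_2}{z_1 \cdot z_1}$ and $\frac{2 \cdot c}{z_1}$. Since these shared fractions come first in both lists, the two executions coincide step-by-step as long as some shared fraction applies; the executions can only diverge once none of simulate, cleanup, $\frac{1}{c \cdot z_2}\halta$, and $\frac{1}{c}$ applies. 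In that moment the current state $n'$ must have the form $z_1^{f_1} z_2^{f_2} r$ with $r$ coprime to all distinguished primes, because cleanup's inapplicability forces the $a_j$-part to be $1$ and $\frac{1}{c}$'s inapplicability forces the $c$-exponent to be $0$ (assuming no $q_i$ equals $1$; the degenerate case $q_i=1$ makes $\aprg$ non-halting on every input and is dispatched separately).

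For $(ii) \Rightarrow (iii)$ I argue by contrapositive. Assume $\aprg$ halts on $2$ after $h$ steps, and take $n = z_1 \cdot z_2^{e_2}$ with $e_2 > h$. On input $n$, the program $\aprg^1$ has only $\frac{1}{1}\haltb$ applicable and outputs $\haltb$. In $\aprg^0$, the initialization fraction $\frac{2 \cdot c}{z_1}$ fires (since $z_1^2 \nmid n$), producing $2c \cdot z_2^{e_2}$; this seeds a simulation of $\aprg$ on $2$ with a budget of $e_2$ tokens of $z_2$. The simulate fractions then faithfully mimic $\aprg$'s $h$ computation steps and leave the state $v_h \cdot c \cdot z_2^{e_2-h}$ where $\aprg$ has halted on $v_h$; cleanup reduces $v_h$'s $a_j$-part to $1$, and $\frac{1}{c \cdot z_2}\halta$ fires because $e_2 - h \geq 1$ and $c$ is still present. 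Hence $\aprg^0$ outputs $\halta \ne \haltb$, contradicting~(ii).

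The delicate direction is $(iii) \Rightarrow (i)$. Assuming $\aprg$ does not halt on $2$, I must verify that $\aprg^0$ and $\aprg^1$ give the same output on every $n \in \pnat$. By the shared-prefix observation it suffices to take a common state $z_1^{f_1} z_2^{f_2} r$ at the first potential divergence and show that both programs output $\haltb$ from there. For $\aprg^1$ this is immediate. For $\aprg^0$, if $f_1 = 0$ no initialization is applicable either and $\frac{1}{1}\haltb$ fires; if $f_1 \geq 2$ is even, $\lfloor f_1/2 \rfloor$ firings of $\frac{z_2}{z_1 \cdot z_1}$ reduce the state to $z_2^{f_2 + f_1/2} r$ and then $\frac{1}{1}\haltb$ fires; and if $f_1$ is odd, after $(f_1-1)/2$ firings of $\frac{z_2}{z_1 \cdot z_1}$ and one firing of $\frac{2 \cdot c}{z_1}$ the state becomes $2c \cdot z_2^{f_2 + (f_1-1)/2} r$, seeding a fresh simulation of $\aprg$ on $2$ which by (iii) never halts and therefore exhausts the entire $z_2$-budget, leaving some $v \cdot c \cdot r$ with $v$ a product of $a_j$'s. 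Cleanup then clears $v$, $\frac{1}{c \cdot z_2}\halta$ cannot fire (no $z_2$ remains), $\frac{1}{c}$ removes the lone $c$, no further initialization is possible (the $z_1$-exponent is monotone non-increasing throughout execution and has already reached $0$), and $\frac{1}{1}\haltb$ fires. The main obstacle I anticipate is the bookkeeping across the re-entry into the shared phase after the second simulation—specifically, verifying the monotone potentials on $z_1$, $z_2$, and $c$ that prevent yet another round of initialization—together with the corner cases where some $q_i$ equals $1$ or where $2$ is not among the primes $a_j$ used by $\aprg$, both of which can be absorbed into the main argument by observing that they make $\aprg$ terminate immediately or never, so that (iii) is trivially true and both programs uniformly output $\haltb$.
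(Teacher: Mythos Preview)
Your proposal is correct and follows essentially the same trace analysis as the paper, but with a different decomposition of the equivalences. The paper proves $(i)\Leftrightarrow(ii)$ first, via Lemma~\ref{lem:f0f1:coincide} (which is precisely your ``shared-prefix observation'' promoted to a lemma), and then establishes $(ii)\Leftrightarrow(iii)$ using only inputs of the pure form $z_1^{e_1} z_2^{e_2}$; the key computation that $\out{\aprg^0}{2c\cdot z_2^n}=\halta$ iff $\aprg$ halts on $2$ within $n$ steps is packaged as a separate Lemma~\ref{lem:2cz2}. You instead close the cycle with $(iii)\Rightarrow(i)$ directly, which forces you to carry the inert ``junk'' factor $r$ through the entire simulation-and-cleanup analysis and to dispatch the corner cases $q_i=1$ and $2\notin\{a_j\}$ by hand. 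Both routes work; the paper's buys a cleaner main argument (no $r$, no corner cases once the helper lemmas are in place) at the price of two auxiliary lemmas, while yours is more self-contained but noticeably heavier on bookkeeping---in particular your odd-$f_1$ subcase of $(iii)\Rightarrow(i)$ is exactly the paper's Lemma~\ref{lem:2cz2} re-derived inline with an extra inert factor threaded through.
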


Next, we translate Fractran programs to zip$^\pi$-specifications.
\begin{definition}\label{def:frac:spec}
  Let $\aprg = \iafrac{1}\stepout{1},\ldots,\iafrac{k}\stepout{k}$ be a decreasing Fractran program with output.
  
  Let $d \defdby  \lcm{\iaden{1},\ldots,\iaden{k}}$,
  and define 
  $\anum_n' = d \cdot \ianum{\tuple{n}}/\iaden{\tuple{n}}$ and 
  $\aoff_n = n\cdot\ianum{\tuple{n}}/\iaden{\tuple{n}}$
  for $1 \le n \le d$;
  if $\undefd{\tuple{n}}$, let $\undefd{\anum_n'}$ and $\undefd{\aoff_n}$.
  We define the \emph{zip$^\pi$-specification $\aspec(\aprg)$}
  for $1 \le n \le d$ by:
  \begin{align*}
    \Zroot &= \zipn{d}{\Zvar{1},\ldots,\Zvar{d}}\\
    \Zvar{n} &= \pjx{\aoff_n-1}{\anum_n'}{\Zvar{0}}
      &&\text{if $\defd{\tuple{n}}$ and $\undefd{\stepout{\tuple{n}}}$} \\
    \Zvar{n} &= \cons{\stepout{\tuple{n}}}{\Zvar{n}}
      &&\text{if $\defd{\tuple{n}}$ and $\defd{\stepout{\tuple{n}}}$} \\
    \Zvar{n} &= \cons{\bot}{\Zvar{n}}
      &&\text{if $\undefd{\tuple{n}}$}
  \end{align*}
\end{definition}

\begin{lemma}\label{lem:frac:spec}
  Let $\aprg$ be a decreasing Fractran program  with output.
  The zip$^\pi$-specification $\aspec(\aprg)$ 
  is productive and it holds that
  $\interpret{\Zroot}^{\aspec(\aprg)}(n) = \out{\aprg}{n+1}$ for every $n \in \nat$.
\end{lemma}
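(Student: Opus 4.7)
The plan is to establish productivity of $\aspec(\aprg)$ and the identity $\interpret{\Zroot}^{\aspec(\aprg)}(n) = \out{\aprg}{n+1}$ simultaneously, by strong induction on $N \in \pnat$, showing that
\begin{align*}
  \Zroot \mred \out{\aprg}{1} \xcons \out{\aprg}{2} \xcons \cdots \xcons \out{\aprg}{N} \xcons u_N
\end{align*}
for some term $u_N$, in the rewrite system of $\aspec(\aprg)$ together with the defining rules of $\zipn{d}$ and the projections $\spjx{i}{k}$. Every $N \in \pnat$ has a unique decomposition $N = dm + r$ with $m \in \nat$ and $1 \le r \le d$; since $\Zroot$ unfolds to $\zipn{d}{\Zvar{1},\ldots,\Zvar{d}}$, the zip rewrite rule forces the $(N{-}1)$th element of any produced prefix of $\Zroot$ to be the $m$th element of $\Zvar{r}$, so that passing from the $(N{-}1)$-case to the $N$-case reduces to producing that one further element.

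The induction step splits on the defining equation for $\Zvar{r}$. In the two immediate-guard cases $\Zvar{r} = \cons{a}{\Zvar{r}}$ with $a \in \{\stepout{\tuple{r}},\bot\}$, unfolding the equation exposes $a$, and $a = \out{\aprg}{N}$ because Fractran applicability to $N$ depends only on $N \bmod d = r$, whence $\tuple{N} = \tuple{r}$ and the computation of $\aprg$ on $N$ terminates at its very first step with output $a$. In the remaining case, $\Zvar{r} = \pjx{\aoff_r - 1}{\anum_r'}{\Zroot}$, the arithmetic identity
\begin{align*}
  \anum_r' m + \aoff_r \;=\; (dm + r)\cdot\iafrac{\tuple{r}} \;=\; N\cdot\iafrac{\tuple{r}}
\end{align*}
combined with the decreasing hypothesis $\ianum{\tuple{r}} < \iaden{\tuple{r}}$ (so $\iafrac{\tuple{r}} < 1$) yields $N' := \anum_r' m + \aoff_r < N$. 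Applying the induction hypothesis for $N'$ to produce a prefix of $\Zroot$ through position $N'{-}1$, the rewrite rules for $\spjx{i}{k}$ then extract $\out{\aprg}{N'}$ as the first element of $\Zvar{r}$. A single non-output Fractran step gives $\out{\aprg}{N} = \out{\aprg}{N\cdot\iafrac{\tuple{r}}} = \out{\aprg}{N'}$, so the produced element matches the claim.

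The reduction invariant for all $N$ directly yields productivity of $\aspec(\aprg)$, and orthogonality of the induced rewrite system gives confluence and hence a unique infinite normal form of $\Zroot$, which by the above must coincide with the stream $n \mapsto \out{\aprg}{n+1}$. The main obstacle is the $\pjx$ case: one must verify the arithmetic turning $\anum_r' m + \aoff_r$ into $N\cdot\iafrac{\tuple{r}}$, exploit the decreasing hypothesis to obtain a strictly smaller instance of the induction, and identify each rewrite-level extraction through $\spjx{i}{k}$ with exactly one non-output step of the Fractran program.
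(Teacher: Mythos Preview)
Your argument is correct and matches the paper's: same decomposition $N = dm + r$, same reduction of $\tuple{N}$ to $\tuple{r}$ via $d = \lcm{\iaden{1},\ldots,\iaden{k}}$, same arithmetic identity $\anum_r' m + \aoff_r = N \cdot \iafrac{\tuple{r}}$, and the same appeal to decreasingness to get a strictly smaller index for the strong-induction step. The paper runs the induction at the semantic level on $\interpret{\Zroot}(n)$ and only afterwards remarks that the semantic derivation is mirrored by a rewrite sequence (hence productivity), whereas you carry out both at once at the rewrite level; this is a matter of presentation, not substance.

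One slip to fix: in the projection case you write that the $\spjx{i}{k}$ rules extract $\out{\aprg}{N'}$ as the \emph{first} element of $\Zvar{r}$. It is the $m$th element: producing a prefix of the inner $\Zroot$ of length $N' = \anum_r' m + \aoff_r$ and pushing $\spjx{\aoff_r - 1}{\anum_r'}$ through it yields the first $m+1$ elements of $\Zvar{r}$, the last of which sits at index $N'-1$ of $\Zroot$ and equals $\out{\aprg}{N'}$. Since your own arithmetic already carries the factor $m$, this is clearly a writing slip rather than a conceptual error.
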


\begin{proof}[Proof of Theorem~\ref{thm:pi01}]
  We reduce the \emph{complement} of the halting problem of Fractran programs on input 2 
  (which is $\cpi{0}{1}$\nb-complete by Lemma~\ref{lem:fractran:pi01})
  to equivalence of zip$^\pi$-specifications.

  Let $\aprg$ be a Fractran program.
  Define $\aprg^0$, $\aprg^1$ as in Definition~\ref{def:f0f1}.
  By Lemma~\ref{lem:total} both are decreasing.
  By Lemma~\ref{lem:frac:spec} $\aspec(\aprg^i)$ is productive, and
  we have $\interpret{\Zroot}^{\aspec(\aprg^i)}(n) = \out{\aprg^i}{n+1}$ for every $n\in\nat$ and $i \in \{0,1\}$.
  Finally, by Lemma~\ref{lem:ff0f1} it follows that $\aspec(\aprg^0)$ and $\aspec(\aprg^1)$
  are equivalent if and only if $\aprg$ halts on $2$.
  
  The equivalence problem of productive specifications is obviously
  in $\cpi{0}{1}$ since every element can be evaluated.
\end{proof}


\paragraph*{Related work}

The complexity of deciding the equality of streams defined
by systems of equations 
has been considered in~\cite{rosu:2006} and~\cite{bale:2010}.
In~\cite{rosu:2006}, Ro\c{s}u shows $\cpi{0}{2}$\nb-completeness
of the problem for (unrestricted) stream equations.
In~\cite{bale:2010}, Balestrieri strengthens the result to
polymorphic stream equations.
However, both results depend on the use of ill-defined (non-productive)
specifications that do not uniquely define a stream.
The $\cpi{0}{2}$\nb-hardness proofs employ stream specifications 
for which productivity coincides with unique solvability.
%
As a consequence, both results depend crucially on the notion of
equivalence for specifications without unique solutions.

In contrast to \cite{rosu:2006} and~\cite{bale:2010}, 
we are concerned with productive specifications, that is, 
every element of which can be evaluated constructively.
Then equality is obviously in $\cpi{0}{1}$.
We show that equality is $\cpi{0}{1}$-hard
even for a restricted class of polymorphic, productive stream specifications.
%


\bibliography{main}

\newpage

\appendix
\section{Appendix}

\begin{proof}[Proof of Lemma~\ref{lem:free:root}]
  Introduce a fresh root $\rootsc'$ and add the equation $\rootsc' = \rootsc$.
\end{proof}

\begin{definition}
  A zip-specification $\mcl{S}$ is called \emph{$\szip$\nb-guarded} 
  if every cycle in $\mcl{S}$ 
  contains an occurrence of $\szip$.
\end{definition}

\begin{example}
  Specification~\eqref{eq:spec:morse} of the Thue--Morse stream 
  is $\szip$\nb-guarded.
  In contrast, the zip-specification:
  \begin{align*}
    \alts  & = \zip{\zeros}{\ones} &
    \zeros & = 0 \xcons \zeros &
    \ones  & = 1 \xcons \ones\,,
  \end{align*}
  specifying the stream
  $0 \xcons 1 \xcons 0 \xcons 1 \xcons 0 \ldots$ 
  of alternating zeros and ones,  
  is not $\szip$\nb-guarded.
\end{example}

It is an easy exercise to show that
$\szip$-free cycles correspond to periodic sequences,
and periodic sequences can be specified by zip-guarded zip-$k$ specification (for arbitrary $k$).
Hence:
\begin{lemma}\label{lem:zipguarded}
  Every zip-$k$ specification can be transformed
  into an equivalent, zip-guarded zip-$k$ specification.
\end{lemma}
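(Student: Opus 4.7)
My plan is to eliminate every zip\nb-free cycle in $\aspec$ by replacing the affected sub\nb-specifications with semantically equivalent zip\nb-guarded zip\nb-$k$ ones. This rests on the two observations flagged in the paragraph preceding the lemma: (a) any zip\nb-free cycle forces the streams at its variables to be \emph{periodic} (or, in the unguarded subcase, merely to coincide with each other), and (b) every periodic stream is definable by a zip\nb-guarded zip\nb-$k$ specification for every $k \ge 2$.

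Observation (a) is immediate from the definition of step: along a zip\nb-free cycle $\rv{X} \leadsto \ldots \leadsto \rv{X}$ every step is either a rule application or a cons descent, so unfolding yields an equation $\interpret{\rv{X}} = a_1 \xcons \ldots \xcons a_m \xcons \interpret{\rv{X}}$ for some $m \ge 0$ and $a_1,\ldots,a_m \in \aalph$. When $m \ge 1$ this forces $\interpret{\rv{X}}$ to be the periodic stream $\overline{a_1 \ldots a_m}$; when $m = 0$ the cycle is unguarded and only imposes equality constraints among its variables. For observation (b), every periodic stream is $k$\nb-automatic for every $k \ge 2$, hence by Lemma~\ref{lem:nbase:zip} is definable by a block of equation pairs
\begin{align*}
  \Zvar{i} &= a_i \xcons \rv{X}'_i
  &
  \rv{X}'_i &= \zipn{k}{\Zvar{f(i,1)}, \ldots, \Zvar{f(i,k-1)}, \rv{X}'_{f(i,0)}}
\end{align*}
A direct inspection shows that such a block is zip\nb-guarded: every cycle in it must pass through some $\rv{X}'_j$ whose right\nb-hand side has $\szip$ at the root.

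Given these two ingredients, I would iterate over the strongly connected components of the zip\nb-free dependency graph of $\aspec$. For a component whose cycles carry a non\nb-empty cons\nb-word, replace its equations by the block above, keeping the original variable names on the left so that occurrences in the ambient equations remain valid. For an unguarded variable\nb-only component, replace each equation $\rv{X} = \rv{Y}$ by $\rv{X} = \zipn{k}{\rv{Y}, \ldots, \rv{Y}}$; the two have equal solution sets because $\fzipn{k}{\sigma, \ldots, \sigma} = \sigma$. Each newly introduced right\nb-hand side has either a cons or a $\szip$ at the root, so the procedure strictly decreases the number of zip\nb-free cycles without introducing new ones, and terminates at an equivalent zip\nb-guarded zip\nb-$k$ specification.

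The step I expect to be the main obstacle is verifying the correctness of the splicing globally in the non\nb-productive case, where the set of solutions is pinned down jointly by several unguarded components and one must ensure that the componentwise replacements preserve the entire set of solutions simultaneously. In the productive case a much cleaner route is available via Lemma~\ref{lem:flat}, since a flat specification is automatically zip\nb-guarded: the only way to loop back to a variable is through the $\szip$ at the root of every right\nb-hand side.
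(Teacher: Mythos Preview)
Your main argument is correct and essentially follows the paper's route: zip\nb-free cycles yield periodic streams, periodic streams have finite $\nbase{k}$\nb-observation graphs, and Lemma~\ref{lem:nbase:zip} then provides zip\nb-guarded zip\nb-$k$ defining equations. You are in fact more careful than the paper's sketch in separating out the unguarded sub\nb-case and handling it via the identity $\fzipn{k}{\sigma,\ldots,\sigma} = \sigma$, which is a clean way to dispose of variable\nb-only cycles while preserving the full solution set.

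One remark to drop, however: your closing suggestion that in the productive case Lemma~\ref{lem:flat} gives a ``much cleaner route'' is circular. The paper's proof of Lemma~\ref{lem:flat} begins by invoking Lemma~\ref{lem:zipguarded} to assume zip\nb-guardedness, so you cannot appeal to flatness here. The self\nb-contained route via Lemma~\ref{lem:nbase:zip} that you already gave is the right one.
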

\begin{proof}
  Every zip-free cycle $M = \ldots = c_1 : \ldots : c_n : M$
  characterizes a periodic sequence $\sigma = uuu\ldots$ with $u \in \aalph^*$.
  Note that for every $i,k \in \nat$, $\pjx{i}{k}{\sigma}$ is again periodic
  with a period length $\le \lstlength{u}$. 
  Thus the $\nbase{k}$-observation graph is finite (for every $k$)
  and hence by Lemma~\ref{lem:nbase:zip}
  we have a zip-guarded specification for~$\sigma$.
\end{proof}

\bigskip
\begin{proof}[Proof of Lemma~\ref{lem:flat}]
  Let $\aspec$ be a zip-specification.
  Using Lemma~\ref{lem:zipguarded} let $\aspec$ be zip-guarded.
  If $\aspec$ contains an equation of the form:
  \begin{align}
    \rv{X} = c_1 : \ldots : c_m : \zipn{k}{s_1,\ldots,s_i,\ldots,s_k}
    \tag{$*$}
  \end{align}
  such that $s_i \not\in \vars$ for some $1 \le i \le k$,
  then we pick a fresh~$\rv{X}'$ and
  replace the equation by:
  \begin{align*}
    \rv{X}  & = c_1 : \ldots : c_m : \zipn{k}{s_1,\ldots,\rv{X}',\ldots,s_k}\\
    \rv{X}' & = s_i
  \end{align*}
  Clearly, the resulting specification is equivalent to the original,
  and still zip-guarded.
  We repeat this transformation step until there are no equations of
  form ($*$) left, that is, the arguments of every occurrence of $\szipn{k}$-symbols
  are only recursion variables.
  
  Next, we replace equations of the form:
  \begin{align}
    \rv{X} &= c_1 : \ldots : c_m : \rv{Y}
    \tag{$\dagger$}
  \end{align}
  with $\rv{Y}$ a recursion variable, by (unfolding $\rv{Y}$):
  \begin{align*}
    \rv{X} &= c_1 : \ldots : c_m : r
  \end{align*}
  where the defining equation for $\rv{Y}$ is $\rv{Y} = r$.
  The obtained specification is equivalent and remains zip-guarded.
  Again, we repeat this step until there no longer are equations of form~($\dagger$).
  This process is guaranteed to terminate since the specification is zip-guarded.
  
  In the final specification, every right-hand side of an equation
  contains a $\szipn{k}$ (for some $k \ge 2$), and 
  this $\szipn{k}$ is applied to recursion variables only.
  Hence the final specification is flat.
  
  Furthermore, note that the resulting specification contains only $\szipn{k}$-symbols
  for $k \ge 2$ for which a $\szipn{k}$ also occurs in the original specification.
  As a consequence, the transformation preserves zip-$k$ specifications.
\end{proof}

\bigskip
\begin{proof}[Proof of Lemma~\ref{lem:interpret:unique}]
  Let $\aograph = \pair{S}{\pair{o}{n}}$ be a $\base$-observation graph,~and 
  let $\sinterpret_1, \sinterpret_2 \funin S \to \str{\aalph}$ be two $\safunct$-homomorphisms
  from $\aograph$ to $\mcl{S}_{\base} = \pair{\str{\aalph}}{\base}$.
  Define the relation $R \subseteq \str{\aalph} \times \str{\aalph}$ by
  $\interpret{t}_1 \mathrel{R} \interpret{t}_2$ for all $t \in S$.
  It is easy to check that $R$ is a $\base$\nb-bisimulation,
  and hence $\interpret{s}_1 = \interpret{s}_2$ for all $s \in S$ by Lemma~\ref{lem:coinduction}.
\end{proof}

\bigskip
\begin{proof}[Proof of Proposition~\ref{prop:ok=ok}]
  The final coalgebra for $\safunct$ is the $k$\nb-automaton
  of the $\aalph$\nb-weighted languages (or of the $k$\nb-ary trees with labels in $\aalph$)
  $\tuple{\aalph^{\kset^*},\pair{o}{n}}$
  where $\kset \defdby \setexp{0,1,\ldots,k}$,
  $ o \funin \aalph^{\kset^*} \to \aalph$, $L \mapsto \funap{L}{\emptyword}$,
  and 
  $ n \funin \aalph^{\kset^*} \to (\aalph^{\kset^*})^k$,
  $ \funap{n}{L} \mapsto \tuple{L_1,\ldots,L_k}$
  with $L_i$ defined by $\funap{L_i}{w} = \funap{L}{i\cdot w}$ for all $w\in\kset^*$
  (for the case $k=2$ see \cite[Thm.~3]{kupk:rutt:2011}).
  %
  The function
  $ f \funin \kset^* \to \omega $
  defined by
  $ \emptyword \mapsto 0 $, and
  $ a_0 a_1 \ldots a_n  \mapsto \sum_{j=0}^n (1 + a_j) k^j $
  is bijective and 
  induces an isomorphism from $\pair{\str{\aalph}}{\obase{k}}$ to $\tuple{\aalph^{\kset^*},\pair{o}{n}}$,
  because it has the property that $ \funap{f}{i\cdot w} = k \funap{f}{w} + i$ holds for all $i\in\kset$ and $w\in\kset^*$.
\end{proof}

\bigskip
\begin{proof}[Proof of Lemma~\ref{lem:finite:nk}]
  The equations of $\aspec$ are of the form:
  \begin{align*}
    \Zvar{j} = \cons{c_{j,0}}{\cons{\ldots}{\cons{c_{j,m_j-1}}{\zipn{k}{\Zvar{j,0},\ldots,\Zvar{j,k-1}}}}}
    && (0 \le j < n)
  \end{align*}
  We define $m \defdby \max\, \{\,m_i \mid 0 \le i < n \,\}$.
  
  %
  Note that the root $X_0$ is of the claimed form.
  Thus it suffices that 
  $\nf{\pjx{i}{k}{s}}$ is the shape
  whenever $s \in S$ is and $0 \le i < k$.
  Let $s = \cons{d_0}{\cons{\ldots}{\cons{d_{\ell-1}}{\Zvar{j}}}} \in S$ 
  with $\ell \le m$, and let $0 \le i < k$.
  Then it holds that:
  \begin{align*}
    \pjx{i}{k}{s}
    &\mred \underbrace{d_{i} : d_{i+k} : \ldots : d_{i+ak} :}_{\text{abbreviate as $D$}}\; \pjx{i'}{k}{X_j}\\
    &\mred D[\underbrace{c_{j,i'} : c_{j,i'+k} : \ldots : c_{j,i'+bk} :}_{\text{abbreviate as $C$}}\; X_{j,i''}]
  \end{align*}
  where $a$, $b$, $i'$ and $i''$ are defined by:
  \begin{align*}
    a &= \floor{(\ell-1-i)/k} & i' &= \modulo{k}{\ell-1-i} \\
    b &= \floor{(m_j-1-i')/k} & i'' &= \modulo{k}{m_j-1-i'} 
  \end{align*}
  The number of elements in $D[C[\Box]]$
  is at most $\floor{(\ell + m_j + 1)/k}$
  (since $\spjx{i}{k}$ `walks' over $\ell + m_j$ elements).
  Hence $D[C[X_{j,i''}]]$ is a normal form of the claimed form
  with a stream prefix of length $\le \floor{(\ell + m_j + 1)/k} \le \floor{(2m + 1)/k} \le m$.
\end{proof}

\begin{lemma}\label{lem:psize}
  For every productive, flat \zspec{}~$\aspec$ with root~$\rootsc$ it holds that:
  %
  \begin{equation*}
    \setsize{\synder{\aspec}{\Zvar{0}}} \le 2\cdot (\setsize{\Sigma} + 1)\cdot m \cdot n + 4\cdot m \punc{,}
  \end{equation*}
  where $n$ is the number of recursion variables and $m$ the longest prefix
  (as defined in the proof of Lemma~\ref{lem:finite:nk}).
\end{lemma}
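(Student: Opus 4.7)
Plan. Lemma~\ref{lem:finite:nk} fixes the shape of every element of $\synder{\aspec}{\Zvar{0}}$ as $d_0 : \cdots : d_{\ell-1} : \Zvar{j}$ with $\ell \le m$, so the remaining work is purely to count the reachable terms of this form. A syntactic enumeration over all such terms yields only the exponential bound $n \cdot \setsize{\Sigma}^{m+1}$, so the linear bound in the statement must come from exploiting the arithmetic structure of the derivative operation, not just its target shape.

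The concrete lever is the computation already hidden inside the proof of Lemma~\ref{lem:finite:nk}, which I would make fully explicit: for $s = d_0 : \cdots : d_{\ell-1} : \Zvar{j}$ and $0 \le i < k$,
\[
  \nf{\pjx{i}{k}{s}} \;=\; D : C : \Zvar{j,i''},
\]
where $D$ is the arithmetic subword of $d_0 \ldots d_{\ell-1}$ of step $k$ starting at position $i$, $C$ is the analogous subword of $c_{j,0} \ldots c_{j,m_j-1}$ starting at $i' = (i - \ell) \bmod k$, and $i'' = (i' - m_j) \bmod k$. Two consequences drive the count. First, the trailing variable depends only on the pair $(j, i'')$, so across the whole $\synder{\aspec}{\Zvar{0}}$ it ranges over at most $mn$ values. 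Second, every letter of a reachable prefix is either inherited from a previous prefix via the arithmetic sub-sampling $D$, or else is one of the $\le mn$ constants $c_{j,s}$ appearing in the right-hand sides, contributed via $C$. I would then organise the count by classifying reachable terms by the pair (trailing variable, prefix length), giving at most $n(m{+}1)$ classes, and bound the number of distinct prefixes per class. The shape of the target bound then suggests a per-class count of order $2(\setsize{\Sigma}+1)$ at the generic classes --- the factor $2$ absorbing the two regimes in which $D$ is empty versus non-empty --- together with an additive $4m$ slack for the handful of classes attached directly to the root, where the $(D,C)$-parametrisation is degenerate.

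The hard part is this per-class multiplicity bound. A priori, the $k$-way branching of the derivative could produce exponentially many distinct prefixes within a single class through iterated piecing together of $D$- and $C$-fragments. The argument has to exploit both that the $C$-part is rigid once the $(j, i'')$-label of the preceding derivative step is fixed, and that the length of the $D$-part shrinks by a factor of $k$ with every derivative step, so that variation inherited from distant ancestors is quickly washed out. I would attempt this by induction on the length of the derivation from the root, maintaining as invariant that inside each class every reachable prefix is determined, up to an $O(\setsize{\Sigma})$-fold ambiguity, by its trailing variable, its length, and its head letter; making this amortisation clean, and in particular treating the boundary classes near the root without double-counting, is where I expect the bulk of the work to lie.
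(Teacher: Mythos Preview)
Your diagnosis of the ingredients is right --- the $D:C:\Zvar{j,i''}$ decomposition and the observation that the $D$-part shrinks by a factor $k$ per step are exactly what is used --- but your proposed organisation of the count is not the one that makes the argument close, and you have not actually established the per-class bound you conjecture. Classifying reachable terms by (trailing variable, prefix length) and then asserting that each class has at most $2(\setsize{\Sigma}+1)$ members is reverse-engineering the shape of the target; nothing in your sketch forces that multiplicity, and the inductive invariant you propose (``determined up to $O(\setsize{\Sigma})$-fold ambiguity by trailing variable, length, and head letter'') is neither stated precisely nor argued.

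The paper's argument is organised by \emph{depth from the root}, not by class. The key step you are missing is the following clean consequence of the shrinking observation: after exactly $\lceil\log_k m\rceil$ derivative steps applied to \emph{any} term $t = d_0:\cdots:d_{\ell-1}:\Zvar{j}$ with $\ell\le m$, at most one of the letters $d_0,\ldots,d_{\ell-1}$ survives, so the result has the form $t'$ or $c:t'$ with $c\in\Sigma$ and $t'\in\delta^{\lceil\log_k m\rceil}(\Zvar{j})$. Since $|\delta^{\lceil\log_k m\rceil}(\Zvar{j})| \le k^{\lceil\log_k m\rceil} \le km$, the set of all terms obtainable from any element of $T$ after $\lceil\log_k m\rceil$ steps has size at most $(\setsize{\Sigma}+1)\cdot km\cdot n$. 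Every element of $\synder{\aspec}{\Zvar{0}}$ reachable in more than $\lceil\log_k m\rceil$ steps lies in this set (because its ancestor $\lceil\log_k m\rceil$ steps back is already in $T$); the remaining elements, reachable in at most $\lceil\log_k m\rceil$ steps from $\Zvar{0}$, number at most $1+k+\cdots+k^{\lceil\log_k m\rceil}$, which gives the additive term. This avoids any per-class analysis and makes the ``washing out'' of inherited prefix letters a single, sharp cut rather than an amortised invariant.
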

\newcommand{\osder}{\funap{\delta}}
\newcommand{\osdern}[1]{\funap{\delta^{#1}}}
\begin{proof}
  We use the notation from the proof of Lemma~\ref{lem:finite:nk}.
  Here we consider only case $\nbase{2} = \pair{\seven}{\sodd}$.
  The proof of the general case works analogous.
  We define
  \begin{align*}
    T = \{\cons{c_1}{\cons{\ldots}{\cons{c_k}{\Zvar{i}}}} \where k \le m,\; c_i \text{ data elements},\; i \le n\}
  \end{align*}
  We have $\synder{\aspec}{\Zvar{0}} \subseteq T$ by the proof of Lemma~\ref{lem:finite:nk}.

  For a set $S$ of terms we define the one step derivatives $\osder{S}$ as follows:
  \begin{align*}
    \osder{S} = \{\nf{\even{s}},\,\nf{\odd{s}} \mid s \in S\}
  \end{align*}
  We consider a term $t \in T$:
  \begin{align*}
    t = \cons{c_1}{\cons{\ldots}{\cons{c_k}{\Zvar{i}}}}
  \end{align*}
  with $k \le m$. Then $\nf{\even{t}}$ and $\nf{\odd{t}}$ are of the form:
  \begin{align}
    \cons{c_{i_1}}{\cons{\ldots}{\cons{c_{i_{k'}}}{t'}}} \quad \text{with} \quad t' \in \osder{\Zvar{i}}\label{eq:osder}
  \end{align}
  where $k' \le \ceil{k/2}$ and $c_{i_1},\ldots,c_{i_{k'}}$ are residuals of $c_1,\ldots,c_k$.

  Then by induction it follows that every term $s \in \osdern{\ceil{\log_2 k}}{t}$ has the form:
  \begin{align}
    s = t' \quad\text{or}\quad s = \cons{c_{j}}{t'} \quad \text{with} \quad t' \in \osdern{\ceil{\log_2 k}}{\Zvar{i}}\label{eq:osderlog}
  \end{align}
  with $1 \le j \le k$.
  Furthermore, for $l\ge k$,
  every $s \in \osdern{\ceil{\log_2 l}}{t}$ has the form \eqref{eq:osderlog} 
  with $j\in\{1,\ldots, k\}$.
  Hence it follows:
  \begin{equation*}
    \osdern{\ceil{\log_2 m}}{T} 
      \subseteq
    \bigcup_{i=0}^{n-1} 
      \biggl(\,
        \osdern{\ceil{\log_2 m}}{\Zvar{i}}
          \cup
        \bigcup_{c\in\Sigma} \cons{c}{\osdern{\ceil{\log_2 m}}{\Zvar{i}}}
      \,\biggr)  
  \end{equation*}
  and therefore:
  \begin{equation*}
    \setsize{\osdern{\ceil{\log_2 m}}{T}} 
      \,\le\,
    (\setsize{\Sigma} + 1) \cdot
    \setsize{   
      \bigcup_{i=0}^{n-1} 
        \osdern{\ceil{\log_2 m}}{\Zvar{i}}
             }
  \end{equation*}
  Since $\setsize{\osdern{\ceil{\log_2 m}}{\Zvar{i}}} = 2^{\ceil{\log_2 m}}$
  we get:
  \begin{align*}
    \setsize{\osdern{\ceil{\log_2 {m}}}{T}} 
    & \;\le\; (\setsize{\Sigma} + 1) \cdot 2^{\ceil{\log_2 m}} \cdot n \\
    & \;\le\; (\setsize{\Sigma} + 1) \cdot 2\cdot m \cdot n
  \end{align*}
  Let $\osdern{\le i}{t} = \bigcup_{j \le i} \osdern{j}{t}$.
  Then $\synder{\aspec}{\Zvar{0}} \subseteq \osdern{\le \ceil{\log_2 m}}{\Zvar{0}} \cup \osdern{\ceil{\log_2 k}}{T}$
  and hence:
  \begin{align*}
    \setsize{\derivatives{\aspec}{\Zvar{0}}} 
    &\;\le\;
    \setsize{\osdern{\le \ceil{\log_2 m}}{\Zvar{0}}} + \setsize{\osdern{\ceil{\log_2 k}}{T}}\\
    &\;\le\;
    (1 + 2 + 4 + \ldots + 2^{\ceil{\log_2 m}}) + (\setsize{\Sigma} + 1) \cdot 2 m n\\
    &\;\le\;
    4\cdot m + (\setsize{\Sigma} + 1) \cdot 2\cdot m \cdot n
  \end{align*}

  \vspace{-4ex}
\end{proof}

\bigskip

\begin{proof}[Proof of Proposition~\ref{prop:flat:ptime}]
  By Lemmas~\ref{lem:finite:nk} and~\ref{lem:psize} the observation graphs of productive
  and flat \zspec{s} are finite and have polynomial size.
  By Proposition~\ref{prop:bisim} we can decide bisimilarity
  of observation graphs in linear time.
  Consequently equality of \zspec{s} is decidable in polynomial time.
\end{proof}

\bigskip

\begin{proof}[Proof of Proposition~\ref{prop-characterization}]
  First, we have a construction that works for any model $M$.
  For every point $a\in M$ and every $h$,
  we define the formula $\phi^h_{M,a}$. 
  The definition is by recursion on $h$
  (simultaneously for all $x\in M$) as follows: 
  $\phi^0_{M,a}$ is the conjunction of all atomic propositions
  ($0$ or $1$) satisfied by $a$ and all negations
  of atomic propositions not satisfied by $a$. 
  Given $\phi^h_{M,b}$ for all $b\in M$, we define 
  \begin{align*}
    \phi^{h+1}_{M,a} =
    & \bigwedge_{(a,b) \in \seven} \!\! \dmd{\seven} \phi^h_{M,b}
      \ \andd \ [\seven] \bigvee_{(a,b)\in \seven} \!\! \phi^h_{M,b} \\
    & \andd \ \bigwedge_{(a,b)\in \sodd} \!\! \dmd{\sodd} \phi^h_{M,b}
      \ \andd \ [\sodd] \bigvee_{(a,b)\in \sodd} \!\! \phi^h_{M,b} \\
    & \andd \ \phi^0_{M,a}
   \end{align*}
  We always identify sentences up to logical equivalence.
  
  As $M$ ranges over all models and $a$ over the points of $M$,
  we call the sentences $\phi^h_{M,a}$ the \emph{canonical sentences of height~$h$}.
  
  \begin{lemma}\label{lemma-canpremain}
    The following hold:
    \begin{enumerate}
      
      \item\label{partone}
        For all $h$, there are only finitely many sentences $\phi^h_{M,a}$.
      
      \item\label{unique} 
        For every $h$, every world of every model satisfies
        a unique canonical sentence of height $h$.
      
      \item\label{part-bisim}
        If $R$ is a bisimulation relation between models $M$ and $N$,
        and if $a \mathrel{R} b$, then $\phi^h_{M,a} =  \phi^h_{N,b}$.
      
      \item\label{converse}
        If $M$ and $N$ are finitely branching models, 
        then the converse of part~\ref{part-bisim} holds: 
        the largest bisimulation between $M$ and $N$ is the relation $R$ defined by
        \begin{equation}
          a \mathrel{R} b \quad \mbox{iff} \quad \text{for all $h$, $\phi^h_{M,a} =  \phi^h_{N_b}$}
          \label{eq-1}
        \end{equation}

    \end{enumerate}
  \end{lemma}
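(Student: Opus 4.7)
The plan is to handle the four items in sequence, proving (i)--(iii) by parallel inductions on $h$ and then using a monotonicity-plus-pigeonhole argument for (iv), which is the real content of the lemma. For (i), I induct on $h$. At $h=0$ the canonical sentences are Boolean combinations of the two literals $0$ and $1$, giving at most four sentences up to logical equivalence. If there are at most $N_h$ canonical sentences at height $h$, then each $\phi^{h+1}_{M,a}$ is determined up to equivalence by the atomic information about $a$ together with the two sets $\{\phi^h_{M,a'} : (a,a') \in \seven\}$ and $\{\phi^h_{M,a'} : (a,a') \in \sodd\}$, so there are at most $4 \cdot 2^{N_h} \cdot 2^{N_h}$ canonical sentences at height $h+1$.

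For (ii), existence (that $a$ in $M$ satisfies $\phi^h_{M,a}$) is a routine induction on $h$. Uniqueness reduces to the claim that any two canonical sentences of the same height are either logically equivalent or mutually inconsistent, which is again a straightforward induction using the explicit recursive shape of $\phi^{h+1}$. For (iii), I induct on $h$, showing simultaneously for every bisimulation $R$ and every pair $a \mathrel{R} b$ that $\phi^h_{M,a} = \phi^h_{N,b}$. The base case uses atomic preservation by $R$. In the inductive step, the forth condition matches every $\seven$-successor $a'$ of $a$ with some $\seven$-successor $b'$ of $b$ satisfying $a' \mathrel{R} b'$, so the IH yields $\phi^h_{M,a'} = \phi^h_{N,b'}$; the back condition gives the converse, and the $\sodd$-case is analogous. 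Combined with atomic agreement this forces $\phi^{h+1}_{M,a} = \phi^{h+1}_{N,b}$.

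For (iv), the main obstacle, let $R$ be the relation defined by~\eqref{eq-1}. I first record a \emph{monotonicity} property: $\phi^{h+1}_{M,a}$ logically entails $\phi^h_{M,a}$, proved by a further induction on $h$ using the recursive shape of the canonical sentences and the monotonicity of $\dmd{\pi}$ and $\bx{\pi}$ in their propositional argument. Hence $\phi^{h+1}_{M,a} = \phi^{h+1}_{N,b}$ forces $\phi^h_{M,a} = \phi^h_{N,b}$. Atomic preservation for $R$ is read off at height $0$, so it remains to establish the forth and back conditions.

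For the forth condition, suppose $a \mathrel{R} b$ and $(a,a') \in \seven$. For each $h$, the conjunct $\dmd{\seven}\phi^h_{M,a'}$ of $\phi^{h+1}_{M,a} = \phi^{h+1}_{N,b}$ forces some $\seven$-successor $c_h$ of $b$ with $\phi^h_{N,c_h} = \phi^h_{M,a'}$. Finite branching of $N$ provides only finitely many candidates for $c_h$, so by pigeonhole some fixed $b'$ plays the role of $c_h$ for infinitely many $h$, and monotonicity then upgrades this to $\phi^h_{N,b'} = \phi^h_{M,a'}$ for every $h$, i.e.\ $a' \mathrel{R} b'$. The back condition is symmetric, using finite branching of $M$; maximality of $R$ among bisimulations between $M$ and $N$ follows directly from (iii). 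The delicate point is this monotonicity-plus-pigeonhole combination: without finite branching the argument fails, since one may then have infinitely many pairwise distinct $c_h$, and without monotonicity the resulting $b'$ would only agree with $a'$ at a cofinal set of heights rather than at every height.
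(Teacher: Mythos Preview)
Your proof is correct; it is precisely the standard Hennessy--Milner argument. The paper's own proof merely cites \cite{moss:2007} for parts~(i)--(ii) and labels (iii)--(iv) as ``standard facts,'' so your write-up supplies the details the paper defers to references while following the same underlying approach.
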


  \begin{proof}
    For parts~\ref{partone} and~\ref{unique}, 
    see Proposition 2.5 and Lemma 2.6 of~\cite{moss:2007}.
    (The arguments there are for one modality, 
    but the results extend in an obvious way to our setting.)   
    Part~\ref{part-bisim} is a standard fact, 
    as is the ``Hennessy--Milner'' result in part~\ref{converse}.
  \end{proof}
  
  Now we return to Proposition~\ref{prop-characterization}.
  We fix a finite model $M$ and some point $x$ in it.  
  Let $R$ be the largest bisimulation on $M$.
  It is a general fact that $R$ is an equivalence relation, 
  and indeed this also follows from the characterization in~\eqref{eq-1}.
  For all $a$, $b$ in $M$, if $\nott (a\ R\ b)$, 
  then there is some natural number $h$ so that 
  $\phi^h_{M,a} \neq \phi^h_{M,b}$.  
  Since $M \times M$ is a finite set, 
  there is some fixed $h^*$ so that for all $a, b\in M$, 
  if $\nott (a\ R\ b)$, then $\phi^{h^*}_{M,a} \neq \phi^{h^*}_{M,b}$.
  The key consequence of this is that for all $a,b\in M$,
  \begin{equation}
    \text{if $\phi^{h^*}_{M, a} = \phi^{h^*}_{M,b}$, then also $\phi^{h^*+1}_{M,a} = \phi^{h^*+1}_{M,b}$}
    \label{eq-2}
  \end{equation}
  From this and Lemma~\ref{lemma-canpremain}, part~\ref{unique},
  it follows that for all $a, b\in M$,
  \begin{equation}
    \text{if $a\models\phi^{h^*}_{M, b} $,  then also $a \models \phi^{h^*+1}_{M,b}$}
    \label{eq-3}
  \end{equation}
  For $a\in M$, let $\psi_a$ denote the formula
  \begin{align*}
    \phi^{h^*}_{M,a} \iif 
    \biggl(\; 
      & \bigwedge_{(a,b)\in \seven} \!\! \dmd{\seven} \phi^{h^*}_{M,b} \ 
        \andd \ [\seven] \bigvee_{(a,b)\in \seven} \!\! \phi^{h^*}_{M,b} \\
      & \andd \bigwedge_{(a,b)\in \sodd} \!\! \dmd{\sodd} \phi^{h^*}_{M,b} \ 
        \andd \ [\sodd] \bigvee_{(a,b)\in \sodd} \!\! \phi^{h^*}_{M,b} 
    \;\biggr)
   \end{align*}
  Using \eqref{eq-3}, we see that for all $a, b\in M$, $a\models\psi_b$.
  
  We now finish the proof of Proposition~\ref{prop-characterization}.
  We have our model $M$ and a point $x \in $M.
  We take the characterizing sentence of $x$ in $M$ to be 
  \begin{equation*}
    \phi_x \quadeq \phi^{h^*}_{M,x} \andd [(\seven \sqcup \sodd)^*] \bigwedge_{a\in M} \! \psi_a
  \end{equation*}
  It is easy to see that $x\models \phi_x$.
  To end our proof, suppose that $N$ is any model, and $y\in N$ satisfies $\phi_x$.
  We define a bisimulation $R$ between $M$ and $N$ which relates $x$ to $y$:
  \begin{align*} 
    a \mathrel{R} b 
    \quad\text{iff}\quad
    & \text{$a$ is reachable in $M$ from $x$ using  $(\seven\sqcup\sodd)^*$,} \\
    & \text{$b$ is reachable in $N$ from $y$ using  $(\seven\sqcup\sodd)^*$,} \\
    & \text{and $\phi^{h^*}_{M,a} = \phi^{h^*}_{N,b}$}
  \end{align*}
  The definition of $\psi_a$ ensures that $R$ is a bisimulation.
  
  This completes the proof.
\end{proof}
As an example of the construction in the above proof 
we obtain the following sentence~$\phi_{\con{M}}$ 
characterizing the Thue\nb--Morse sequence~$\con{M}$: 
Let $\boxdmd{\pi}\phi$ abbreviate $\dmd{\pi}\phi \andd \bx{\pi}\phi$,
and let
\begin{align*}
  \phi & = 0 \andd \nott 1 \andd  \boxdmd{\seven}0\andd \boxdmd{\sodd}1 \\
  \psi & = \nott 0 \andd  1 \andd \boxdmd{\seven}1 \andd \boxdmd{\sodd}0
\end{align*}
Then 
\begin{align*}
  \phi_{\con{M}}
  =
  \phi \andd [(\seven\sqcup\sodd)^*]
  \biggl( 
    & (\phi \iif \boxdmd{\seven}\phi \andd \boxdmd{\sodd}\psi) \\
    & \andd (\psi \iif \boxdmd{\seven}\psi \andd \boxdmd{\sodd}\phi) \\[1.5ex]
    & \andd (0 \iif \boxdmd{\seven}0) \\
    & \andd (1 \iif \boxdmd{\seven}1)
  \biggr)
\end{align*}

\begin{theorem}[Cobham~\cite{cobh:1969}]\label{thm:cobham}
  Let $k,\ell \ge 2$ be multiplicatively independent
  (i.e., $k^a \ne \ell^b$, for all $a,b \gt 0$),
  and let $\astr \in \str{\aalph}$ be 
  both $k$- and $\ell$\nb-automatic. 
  Then $\astr$ is eventually periodic.
\end{theorem}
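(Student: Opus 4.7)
Cobham's theorem is a deep classical result; my plan follows the standard strategy, recast in the vocabulary of this paper.

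First I restate the hypotheses via Proposition~\ref{thm:auto:nk}: $\astr$ being both $k$- and $\ell$-automatic is equivalent to both sets of derivatives $\der{\nbase{k}}{\astr}$ and $\der{\nbase{\ell}}{\astr}$ (the $k$- and $\ell$-kernels of $\astr$) being finite. Let $N=|\der{\nbase{k}}{\astr}|\cdot|\der{\nbase{\ell}}{\astr}|$; this uniform bound will drive the pigeonhole argument.

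\emph{Step 1 (combined kernel).} Any composition of operators drawn from $\{\spjx{i}{k}\}_{0\le i<k}\cup\{\spjx{j}{\ell}\}_{0\le j<\ell}$ applied to $\astr$ is an arithmetic subsequence of the form $(\astr(k^a\ell^b n+c))_n$. Since the composites $\spjx{i}{k}\circ\spjx{j}{\ell}$ and $\spjx{j'}{\ell}\circ\spjx{i'}{k}$ both yield arithmetic subsequences with common step $k\ell$, any mixed composition can be reordered into ``all $k$-projections first, all $\ell$-projections second.'' Hence every mixed derivative of $\astr$ is an $\ell$-derivative of a $k$-derivative of $\astr$, and in total at most $N$ distinct subsequences $(\astr(mn+c))_n$ with $m=k^a\ell^b$ occur.

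\emph{Step 2 (near-collision via multiplicative independence).} Since $\log k/\log\ell\notin\mathbb{Q}$, the set $\{a\log k-b\log\ell:a,b\in\nat\}$ is dense in $\mathbb{R}$. Hence for every $\epsilon>0$ and every threshold $T$ there exist $a,b\in\nat$ with $\min(a,b)>T$ and $0<|a\log k-b\log\ell|<\epsilon$. Setting $p:=k^a$ and $q:=\ell^b$ yields $p\ne q$ with $|p/q-1|$ arbitrarily small.

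\emph{Step 3 (collision of subsequences).} The subsequences $(\astr(pn))_n$ and $(\astr(qn))_n$ both lie in the $\le N$-element family of Step~1. Letting $(a,b)$ range over the infinite supply from Step~2 and pigeonholing against this bound forces an honest coincidence: distinct positive integers $p,q$ with $p/q$ arbitrarily close to $1$, and offsets $c,c'$, such that
\begin{equation*}
  \astr(pn+c)=\astr(qn+c')\qquad\text{for all sufficiently large } n\,.
\end{equation*}

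\emph{Step 4 (self-similarity to eventual periodicity).} This is the main obstacle. One must convert the affine coincidence of Step~3, combined with finiteness of both kernels, into an honest eventual period for $\astr$. The classical route (Cobham~\cite{cobh:1969}; see also Allouche--Shallit~\cite[Ch.~11]{allo:shal:2003}) uses the syndeticity of letter occurrences in $k$-automatic sequences (bounded gaps between successive occurrences of any fixed symbol) together with an iterated application of the near-coincidence, to ``tile'' the tail of $\astr$ with a period. I would not attempt to reproduce the intricate combinatorial estimates of this step in a brief sketch; they form the technical core of Cobham's theorem and require a genuine argument beyond the coalgebraic machinery developed in the present paper.
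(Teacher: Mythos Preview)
The paper does not prove Cobham's theorem. It is stated with attribution to~\cite{cobh:1969} and immediately used as a black box in the proof of Proposition~\ref{prop:mix:extends} (that mix-automatic sequences properly extend automatic sequences). So there is no ``paper's own proof'' of this statement to compare your attempt against; the authors treat it as an imported classical result.

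Since you asked for feedback on the sketch itself: Step~1 contains a real error. The claim that a mixed composition of $\spjx{i}{k}$'s and $\spjx{j}{\ell}$'s can always be reordered into ``all $k$-projections first, then all $\ell$-projections'' fails unless $\gcd(k,\ell)=1$. Concretely, $\spjx{i}{k}\circ\spjx{j}{\ell}$ extracts the subsequence at offset $\ell i+j$ modulo $k\ell$, whereas $\spjx{j'}{\ell}\circ\spjx{i'}{k}$ gives offset $kj'+i'$; when $k$ and $\ell$ share a common factor these offset sets do not coincide, so the inclusion of mixed derivatives into $\{\ell\text{-derivatives of }k\text{-derivatives}\}$ and the bound $N$ do not follow. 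Multiplicative independence (e.g.\ $k=4$, $\ell=6$) does not imply coprimality. The standard route to the finiteness one actually needs is different: one works with the finitely many $k$-kernel elements and the finitely many $\ell$-kernel elements separately and exploits pigeonhole on each side, not on a combined object. And as you yourself acknowledge, Step~4 is the substantive core of Cobham's argument and is not addressed; Steps~1--3, even if repaired, do not get you close to eventual periodicity.
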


\begin{proof}[Proof of Proposition~\ref{prop:mix:extends}]
  Let $k,\ell$ be multiplicatively independent integers,
  and let $\astr \in \str{\aalph_{\aaut}}$ and $\bstr \in \str{\aalph_{\baut}}$ be 
  $k$- and $\ell$\nb-automatic sequences, generated by DFAOs 
  $\aaut = \tuple{Q_{\aaut},\fin{k},\delta_{\aaut},q_{\aaut,0},\aalph_{\aaut},\lambda_{\aaut}}$
  and $\baut = \tuple{Q_{\baut},\fin{\ell},\delta_{\baut},q_{\baut,0},\aalph_{\baut},\lambda_{\baut}}$,
  respectively.
  Assume that both $\astr$ and $\bstr$ are not eventually periodic.
  
  We show that the sequence $\cstr = \zip{\astr}{\bstr} \in \str{(\aalph_{\aaut} \cup \aalph_{\baut})}$ 
  is mix-automatic, but that there is no integer $m \ge 2$ such that $\cstr$ is $m$\nb-automatic.
  For the first, note that $\cstr$ is generated by the mix\nb-DFAO
  $\caut = \tuple{S,\{\fin{\beta(s)}\}_{s \in S},\delta,s_0,\aalph_{\aaut}\cup\aalph_{\baut},\lambda}$,
  where $S = \{s_0\} \cup Q_{\aaut} \cup Q_{\baut}$,
  and with $\beta$, $\delta$ and $\lambda$ defined by
  \begin{align*}
    \beta(s_0) & = 2    & \delta(s_0,0) & = q_0 & \lambda(s_0) &= \lambda_{\aaut}(q_0) \\
               &        & \delta(s_0,1) & = r_0 \\
    \beta(q)   & = k    & \delta(q,a)   & = \delta_{\aaut}(q,a) & \lambda(q) & = \lambda_{\aaut}(q) \\
    \beta(r)   & = \ell & \delta(r,b)   & = \delta_{\baut}(r,b) & \lambda(r) & = \lambda_{\baut}(r)
  \end{align*}
  for all $q \in Q_{\aaut}$, $r \in Q_{\baut}$, $a \in \fin{k}$ and $b \in \fin{\ell}$.
  Now suppose $\cstr$ is $m$\nb-automatic, for some $m \ge 2$.
  Then, as $m$\nb-automaticity is closed under arithmetic subsequences,
  also $\astr$ and $\bstr$ are $m$\nb-automatic.
  By Theorem~\ref{thm:cobham} and the assumption that $\astr,\bstr$ are not eventually periodic,
  it follows that $k$ and $m$ are not multiplicatively independent, and likewise for $\ell$ and $m$.
  But then $k$ and $\ell$ are not multiplicatively independent, contradicting our assumption.
\end{proof}

\begin{lemma}\label{lem:frac:decreasing}
  Every decreasing Fractran program with output is universally halting.
\end{lemma}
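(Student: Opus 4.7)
The plan is to argue that any infinite sequence of $\fstep{\aprg}$-iterations starting from $n \in \pnat$ would be forced to be strictly decreasing in $\pnat$, contradicting well-foundedness. Fix a decreasing Fractran program with output $\aprg = \iafrac{1}\stepout{1},\ldots,\iafrac{k}\stepout{k}$ and an input $n_0 \in \pnat$, and consider the iteration sequence $n_0, n_1, n_2, \ldots$ defined by $n_{i+1} = \funap{\fstep{\aprg}}{n_i}$ as long as the previous value lies in $\pnat$. What needs to be shown is that for some $i$ we have $n_i \in \Gamma \cup \{\bot\}$, which is exactly halting.

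The key observation: at each stage $i$ with $n_i \in \pnat$, by the case distinction in Definition~\ref{def:fractran:function}, either (a) $\undefd{\tuple{n_i}}$, in which case $n_{i+1} = \bot$ and we are done; or (b) $\defd{\stepout{\tuple{n_i}}}$, in which case $n_{i+1} = \stepout{\tuple{n_i}} \in \Gamma$ and we are again done; or (c) $\defd{\tuple{n_i}}$ and $\undefd{\stepout{\tuple{n_i}}}$, a genuine computation step. In case (c) the decreasing hypothesis applies to the index $\tuple{n_i}$, giving $\ianum{\tuple{n_i}} < \iaden{\tuple{n_i}}$, hence
\begin{equation*}
  n_{i+1} \;=\; n_i \cdot \frac{\ianum{\tuple{n_i}}}{\iaden{\tuple{n_i}}} \;<\; n_i \punc{.}
\end{equation*}
Moreover $n_{i+1}$ is a positive integer: it is a natural number by definition of $\tuple{n_i}$, and it is nonzero since $n_i, \ianum{\tuple{n_i}} \ge 1$.

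Consequently, as long as none of cases (a) or (b) occurs, we produce a strictly decreasing sequence in $\pnat$. By well-foundedness of $\pnat$, this cannot continue indefinitely, so some smallest $i$ must witness case (a) or (b); at that $i$ we have $n_{i+1} \in \Gamma \cup \{\bot\}$, which is exactly the halting condition $\defd{\out{\aprg}{n_0}}$. Since $n_0 \in \pnat$ was arbitrary, $\aprg$ halts on every positive input, i.e., is universally halting.

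I do not foresee any real obstacle here: the proof is essentially a one-line well-foundedness argument once the case distinction from Definition~\ref{def:fractran:function} is unpacked. The only minor point to double-check is that the computation-step value $n_{i+1}$ indeed stays in $\pnat$ (rather than becoming $0$), which follows because all $\ianum{i}, \iaden{i}$ are in $\pnat$ and $\tuple{n_i}$ being defined forces $n_i \cdot \iafrac{\tuple{n_i}}$ to be a (then necessarily positive) integer.
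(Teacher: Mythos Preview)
Your proposal is correct and follows essentially the same approach as the paper: the paper's proof is the one-line observation that $\funap{\fstep{\aprg}}{n} < n$ or $\funap{\fstep{\aprg}}{n} \in \Gamma \cup \{\bot\}$ for every $n$, followed by an implicit appeal to well-foundedness. Your version simply unpacks this in more detail, including the minor check that computation steps remain in $\pnat$.
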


\begin{proof}[Proof of Lemma~\ref{lem:frac:decreasing}]
  Then
  $\funap{\fstep{\aprg}}{n} < n$ or $\funap{\fstep{\aprg}}{n} \in \Gamma \cup \{\bot\}$
  for every $n \in \nat$.
  Hence $\aprg$ is universally halting.
\end{proof}

\bigskip
\begin{proof}[Proof of Lemma~\ref{lem:total}]
  By the choice of $z_1$, $z_2$ and $c$ we have $q > p$~for~all 
  annotated-free fractions $\frac{p}{q}$ from $\aprg^0$ and $\aprg^0$.
  Hence $\aprg^0$ and $\aprg^0$ are decreasing and by Lemma~\ref{lem:frac:decreasing} universally halting.
  
  Since $\aprg^0$ and $\aprg^1$ contain $\frac{1}{1}\haltb$,
  it follows that the programs always terminate with output,
  that is, either $\halta$ or $\haltb$.
\end{proof}

\bigskip
The following lemmas use the notation from Definition~\ref{def:f0f1}:

\begin{lemma}\label{lem:2cz2}
  For every $n\in \nat$ 
  we have $\out{\aprg^0}{2 \cdot c \cdot z_2^{n}} = \halta$ 
  if and only if the Fractran program $\aprg$ halts on input $2$ within $n$ steps
  (that is, $\myex{n' \le n}{\funap{\fstep{\aprg}^{n'}}{2}} = \bot$).
\end{lemma}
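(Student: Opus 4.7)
The plan is to trace the execution of $\aprg^0$ on the input $2 c z_2^n$ and show that it faithfully mirrors the execution of $\aprg$ on $2$, with the exponent of $z_2$ acting as a step counter. The key arithmetic observation is that the primes $c, z_1, z_2$ were chosen strictly greater than $\prod_{i} p_i q_i$ and therefore do not divide any of $p_1, q_1, \ldots, p_k, q_k$; in particular, for any $N \in \nat$ and any $e \geq 1$ we have $q_i z_2 \mid N c z_2^e$ iff $q_i \mid N$.

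First I would establish by induction on $j \in \{0, \ldots, n\}$ the following \emph{simulation invariant}: writing $N_j := \fstep{\aprg}^{j}(2)$, if $N_0, \ldots, N_j \in \nat$, then $\aprg^0$ starting from $2 c z_2^n$ reaches $N_j \cdot c \cdot z_2^{n-j}$ after exactly $j$ steps, each step being a simulation fraction. By the key observation, the first simulation fraction $\frac{p_i}{q_i z_2}$ applicable to $N_j c z_2^{n-j}$ corresponds exactly to the first fraction of $\aprg$ applicable to $N_j$, and applying it yields $N_{j+1} c z_2^{n-j-1}$. From this invariant the two directions of the biconditional follow by case analysis. If $\aprg$ halts on $2$ within $n$ steps, pick the least $j^* \leq n$ with $N_{j^*} = \bot$; the invariant places us at $N_{j^*-1} c z_2^{n-j^*+1}$ where no simulation fraction applies. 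The cleanup fractions $\frac{1}{a_\ell}$ then strip away the $\{a_1, \ldots, a_m\}$-primes of $N_{j^*-1}$, reducing the value to $c z_2^{n-j^*+1}$ (possibly multiplied by a residual $2^e$ in the corner case $2 \notin \{a_\ell\}$, which is inconsequential); since $c z_2$ divides this value, $\frac{1}{c z_2}\halta$ fires and produces output $\halta$. Conversely, if $\aprg$ does not halt within $n$ steps, the invariant delivers $N_n \cdot c$ after all $n$ simulation steps, with no $z_2$ remaining; thus $\frac{1}{c z_2}\halta$ is not applicable, the cleanup together with $\frac{1}{c}$ reduces the value to $2^e$ for some $e \geq 0$, the initialization fractions $\frac{z_2}{z_1 z_1}$ and $\frac{2c}{z_1}$ cannot fire because no $z_1$ is present, and $\frac{1}{1}\haltb$ finally yields output $\haltb \neq \halta$.

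The main obstacle is the bookkeeping that verifies, at every step of $\aprg^0$, that the intended fraction is indeed the first applicable one in the ordered list. During simulation this requires that neither a cleanup fraction, nor $\frac{1}{c z_2}\halta$, nor $\frac{1}{c}$, nor either initialization fraction, pre-empts the matching simulation fraction; this is guaranteed by placing the simulation fractions at the head of the list together with the primality separation of $c, z_1, z_2$ from the primes occurring in the $q_i$. After simulation, the crucial point is that $z_1$ is never introduced by simulation or cleanup (only the initialization fractions can produce it, and they require $z_1$ to fire), so the only way to reach $\halta$ is through $\frac{1}{c z_2}\halta$, which in turn requires a leftover $z_2$, which in turn means $\aprg$ halted within $n$ steps.
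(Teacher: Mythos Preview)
Your proof is correct and follows essentially the same approach as the paper: both establish the simulation invariant $\fstep{\aprg^0}^{i}(2 c z_2^{n}) = \fstep{\aprg}^{i}(2)\cdot c \cdot z_2^{\,n-i}$ by induction and then perform the same case split on whether $\aprg$ halts within $n$ steps. Your treatment is in fact slightly more careful than the paper's in tracking the residual power of $2$ after cleanup and in verifying that the intended fraction is always the first applicable one.
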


\begin{proof}[Proof of Lemma~\ref{lem:2cz2}]
  On inputs of the form $2 \cdot c \cdot z_2^{n}$ the program $\aprg^0$
  behaves as $\aprg$ except for removing a prime $z_2$ in each step.
  By definition of $\aprg^0$ and induction we obtain
  \begin{align*}
    \fstep{\aprg^0}^i(2 \cdot c \cdot z_2^{n})
    = \fstep{\aprg}^i(2) \cdot c \cdot z_2^{n-i}
  \end{align*} 
  for every $i \le n$ such that $\myall{j \le i}{\funap{\fstep{\aprg}^j}{2}} \ne \bot$.

  Assume that there is $0 <n' \le n$ such that
  $\funap{\fstep{\aprg}^{n'-1}}{2} \in \nat$ and $\funap{\fstep{\aprg}^{n'}}{2} = \bot$, that is,
  $\aprg$ halts in precisely~$n'$ steps.
  Then in $\aprg^0$ to the value 
  $\fstep{\aprg^0}^{n'-1}(2 \cdot c \cdot z_2^{n}) = \fstep{\aprg}^{n'-1}(2) \cdot c \cdot z_2^{n-(n'-1)}$
  none of the `simulate $\aprg$'-fractions is applicable,
  and after the `cleanup'\nb-fractions have removed all primes occurring in $\aprg$,
  the fraction $\frac{1}{c\cdot z_2}\halta$ will result in termination with output $\halta$
  (the fraction is applicable since $n-(n'-1) > 0$).
  
  On the other hand assume $\funap{\fstep{\aprg}^{n'}}{2} \in \nat$ for all $n' \le n$,
  that is, $\aprg$ does not terminate within $n$ steps.
  Then $\fstep{\aprg^0}^n(2 \cdot c \cdot z_2^{n}) = \fstep{\aprg}^n(2) \cdot c$.
  As a consequence, in $\aprg^0$ the `simulate $\aprg$'-fractions and
  likewise $\frac{1}{c\cdot z_2}\halta$ are not applicable (lacking $z_2$),
  and thus the `cleanup'-fractions will remove all primes occurring in $\aprg$
  and then $\frac{1}{c}$ removes the $c$.
  Finally, only $\frac{1}{1}\haltb$ is applicable, resulting in termination with output $\haltb$.
\end{proof}

\begin{lemma}\label{lem:f0f1:coincide}
  It holds that
  $\fstep{\aprg^0}(n) = \fstep{\aprg^1}(n)$
  for all $n \in \nat$ such that
  any of $a_1\ldots,a_m$ or $c$ divides $n$.
\end{lemma}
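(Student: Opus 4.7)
The plan is a direct structural comparison of the two lists of fractions that constitute $\aprg^0$ and $\aprg^1$. Reading Definition~\ref{def:f0f1} from left to right, both programs share the same initial segment of $k+m+2$ fractions, in the same order: the $k$ simulate fractions $\frac{p_i}{q_i\cdot z_2}$, the $m$ cleanup fractions $\frac{1}{a_i}$, the halting fraction $\frac{1}{c\cdot z_2}\halta$, and $\frac{1}{c}$. The two programs differ only past position $k+m+2$: $\aprg^0$ continues with the two initialization fractions $\frac{z_2}{z_1\cdot z_1}$ and $\frac{2\cdot c}{z_1}$ before the trailing $\frac{1}{1}\haltb$, while $\aprg^1$ omits these and goes directly to $\frac{1}{1}\haltb$.

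The key step is to observe that under the divisibility hypothesis the first applicable fraction is forced to lie within this common initial segment. If $a_i \mid n$ for some $1 \le i \le m$, then the cleanup fraction $\frac{1}{a_i}$ at position $k+i$ is applicable to $n$; if $c \mid n$, then $\frac{1}{c}$ at position $k+m+2$ is applicable. Let $j^{*} \in \{1,\ldots,k+m+2\}$ denote the least index of an applicable fraction inside the common prefix; by the previous observation $j^{*}$ exists, and it depends only on $n$ and on the shared prefix of the two programs.

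Since no fraction at an index strictly below $j^{*}$ is applicable, and since the two programs agree on positions $1,\ldots,k+m+2$, the minimum in the definition of $\tuple{\cdot}$ yields $\tuple{n} = j^{*}$ whether computed in $\aprg^0$ or in $\aprg^1$. Because the fraction at position $j^{*}$ together with its step output $\stepout{j^{*}}$ is the same in both programs, the three-way case split of Definition~\ref{def:fractran:function} produces the same value, giving $\fstep{\aprg^0}(n) = \fstep{\aprg^1}(n)$.

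There is no real obstacle to this argument; it is essentially a routine structural check. The only subtle point is to verify that neither of the two extra initialization fractions of $\aprg^0$ could have been selected under our hypothesis, but this is immediate since $j^{*} \le k+m+2$ strictly precedes the positions $k+m+3$ and $k+m+4$ where those initialization fractions sit in $\aprg^0$.
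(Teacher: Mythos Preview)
Your proof is correct and follows the same approach as the paper's own proof, which is a one-line observation that for every $p\in\{a_1,\ldots,a_m,c\}$ the fraction $\tfrac{1}{p}$ lies in the common prefix of $\aprg^0$ and $\aprg^1$. You have simply spelled out in full the routine structural check that the paper leaves implicit.
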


\begin{proof}[Proof of Lemma~\ref{lem:f0f1:coincide}]
  For every $p \in \{a_1\ldots,a_m,c\}$
  there is a fraction $\frac{1}{p}$ in the common prefix of $\aprg^0$ and $\aprg^1$.
\end{proof}

\bigskip
\begin{proof}[Proof of Lemma~\ref{lem:ff0f1}]
  The equivalence of (i) and (ii) is 
  a consequence~of Lemma~\ref{lem:f0f1:coincide}
  since $z_1$ and $z_2$ are the only remaining primes from $\aprg^0$ and $\aprg^1$
  not covered by this lemma.
  
  Consider (ii) $\Rightarrow$ (iii).
  Let $n \in \nat$, we show that $\aprg$ does not halt on $2$ within $n$ steps.
  We have $\funap{\fstep{\aprg^0}}{z_1 \cdot z_2^n} = 2\cdot c \cdot z_2^n$
  and $\funap{\fstep{\aprg^1}}{z_1 \cdot z_2^n} = \haltb$,
  and thus $\out{\aprg^0}{2\cdot c \cdot z_2^n} = \haltb$ by (ii).
  We conclude this case with an appeal to Lemma~\ref{lem:2cz2}.
  
  For (iii) $\Rightarrow$ (ii), let $e_1,e_2 \in \nat$.
  We have $\out{\aprg^1}{z_1^{e_1} \cdot z_2^{e_2}} = \haltb$.
  Assume $e_1 = 2 \cdot n$, $n\in\nat$ then
  $\funap{\fstep{\aprg^0}^{n}}{z_1^{e_1} \cdot z_2^{e_2}} = z_2^{n+e_2}$
  and hence $\out{\aprg^0}{z_1^{e_1} \cdot z_2^{e_2}} = \haltb$ (by definition of $\aprg^0$).
  Otherwise $e_1 = 2 \cdot n + 1$ and we obtain
  $\funap{\fstep{\aprg^0}^{n}}{z_1^{e_1} \cdot z_2^{e_2}} = 2\cdot c \cdot z_2^{n+e_2}$.
  Then it follows $\out{\aprg^0}{z_1^{e_1} \cdot z_2^{e_2}} = \haltb$ by Lemma~\ref{lem:2cz2}.
\end{proof}

\bigskip
\begin{proof}[Proof of Lemma~\ref{lem:frac:spec}]
  Let $\sinterpret{}$ be a solution for $\aspec(\aprg)$.
  We use the notation from Definition~\ref{def:frac:spec}.
  We prove $$\interpret{\Zroot}^{\aspec(\aprg)}(n) = \out{\aprg}{n+1}$$
  by induction on $n\in\nat$.

  Let $n\in\nat$ and $m \in N_{<d}$, $o \in \nat$ such that $n= m + od$.
  For every $1 \le i \le k$: $(n+1) \cdot \iafrac{i} = (m+1)\cdot \iafrac{i} + od \cdot \iafrac{i}$.
  By choice of $d$: $d\cdot \iafrac{i} \in \nat$,
  hence $(n+1) \cdot \iafrac{i} \in \nat \Leftrightarrow (m+1) \cdot \iafrac{i} \in \nat$
  and thus  $\tuple{n+1} = \tuple{m+1}$ and $\stepout{\tuple{n+1}} = \stepout{\tuple{m+1}}$.

  We have $\interpret{X_0}(n) = \interpret{X_{m+1}}(o)$.
  If $\undefd{\tuple{n+1}}$ then $\undefd{\tuple{m+1}}$ and hence $\interpret{X_0}(n) = \interpret{X_{m+1}}(o) = \bot$.
  Likewise for $\defd{\tuple{n+1}}$ and $\defd{\stepout{\tuple{n+1}}}$ we obtain
  $\interpret{X_0}(n) = \stepout{\tuple{n+1}}$.
  
  The remaining case is $\defd{\tuple{n+1}}$ and $\undefd{\stepout{\tuple{n+1}}}$.
  Then 
  \begin{align*}
  \interpret{X_0}(n) 
  &= (\pjx{\aoff_{m+1}-1}{\anum_{m+1}'}{\interpret{X_0}})(o) \\
  &= \interpret{X_0}(\aoff_{m+1}-1 + \anum_{m+1}'\cdot o) \\
  &= \interpret{X_0}((m+1)\cdot \iafrac{\tuple{n+1}}-1 + d\cdot \iafrac{\tuple{n+1}}\cdot o) \\
  &= \interpret{X_0}((m+1+ do) \cdot \iafrac{\tuple{n+1}}-1) \\
  &= \interpret{X_0}(\fstep{\aprg}(n+1)-1) \\
  &= \out{\aprg}{\fstep{\aprg}(n+1)}  \quad \text{by induction hypothesis} \\
  &= \out{\aprg}{n+1}
  \end{align*}
  Finally, we note that since the derivations
  from $\interpret{X_0}(n)$ to $\interpret{X_0}(\fstep{\aprg}(n+1)-1)$
  also exist on the level of rewrite sequences in the zip-specification,
  and by decreasingness it holds $\fstep{\aprg}(n+1)-1 < n$,
  it follows that the specification is productive.
\end{proof}

\end{document}